\documentclass[fleqn]{elsarticle}
\usepackage{amsmath,amsfonts,amssymb,amsthm}
\usepackage{graphicx}
\usepackage{mathtools}
\usepackage{url}
\usepackage[dvipsnames]{xcolor}
\usepackage{booktabs}
\usepackage{tabularx}
\usepackage{multirow}
\usepackage{color}
\usepackage{cases}
\usepackage{natbib}
\usepackage{tikz}
\usetikzlibrary{calc}
\usepackage{newpxtext}
\let\mathscr\relax
\usepackage{newpxmath}
\usepackage{subfigure}
\usepackage{caption}
\usepackage[english]{babel}
\usepackage{soul}
\theoremstyle{plain}
\newtheorem{theorem}{Theorem}

\newtheorem{lemma}{Lemma}
\newtheorem{remark}{Remark}

\newtheoremstyle{note}{\topsep}{\topsep}{\slshape}{}{\scshape}{}{ }{}
\theoremstyle{note}

\biboptions{sort&compress}
\numberwithin{equation}{section}
\numberwithin{definition}{section}

\mathtoolsset{mathic,centercolon}
\newcommand\mvector{\boldsymbol}

\newcommand\vc{\mvector{c}}

\newcommand\vGamma{\mvector{\Gamma}}

\newcommand\ord{\operatorname{ord}}

\newcommand\rmd{\mathrm{d}}

\newcommand\rmi{\mathrm{i}\mspace{1mu}}

\begin{document}

\begin{frontmatter}
\title{Dynamics and non-integrability of the Swinging Atwood Machine with a massive string: chaos, periodic orbits and resonance structures}

\author[inst1]{Wojciech Szumi\'nski\corref{cor1}}
\ead{w.szuminski@if.uz.zgora.pl}

\author[inst1,inst2]{Jakub Bembenek}
\ead{j.bembenek@stud.uz.zgora.pl}

\cortext[cor1]{Corresponding author}

\address[inst1]{Institute of Physics, University of Zielona G\'ora,
Licealna 9, PL-65-407, Zielona G\'ora, Poland}

\address[inst2]{The Doctoral School of Exact and Technical Sciences,
University of Zielona Góra, al. Wojska Polskiego 69,
65-762, Zielona Góra, Poland}
\begin{abstract}
Building upon our previous studies on nonlinear variable-length pendulum systems
(Szumi\'nski and Maciejewski, 2024; Szumi\'nski and Kapitaniak, 2025), we complete this line of research by investigating the Swinging Atwood Machine with
		a massive string. In contrast to the classical model, the inclusion of string
		inertia introduces an additional configuration-dependent moment of inertia,
		leading to a fundamentally modified Hamiltonian structure and substantially
		richer dynamics.

		To uncover the global organization of the phase space, we combine
		Poincar\'e sections, bifurcation diagrams, and Lyapunov exponent maps with our
		recently developed numerical framework, referred to as \emph{Lyapunov Refined
			Maps}. This approach provides a unified visualization of periodic,
		quasi-periodic, chaotic, and terminating motions, while revealing intricate
		resonance networks and high-order periodic structures hidden inside regular
		regions of the phase space. We further investigate the influence of the string
		mass, the system parameters, and the energy level by constructing Lyapunov maps in the
		parameter space, the initial-condition space, and on fixed-energy surfaces.
		These computations reveal the formation of complex chaotic layers, resonance
		webs, and energy-dependent transitions associated with the geometry of the Hill
		regions.

		The problem of Liouville integrability is addressed within the framework of the
		Morales--Ramis theory based on differential Galois theory. By analysing the
		normal variational equations along explicit non-stationary radial solutions and
		applying the Kovacic algorithm, we prove that the differential Galois group is
		generically $\operatorname{SL}(2,\mathbb{C})$, providing a rigorous obstruction
		to meromorphic Liouville integrability for every nonzero string mass.
		Consequently, the exceptional integrable case of the classical Swinging Atwood
		Machine is shown to be structurally unstable and destroyed by any physically
		realistic inclusion of string inertia.

		Besides establishing these analytical results, the paper completes a systematic
		study of nonlinear variable-length pendulum systems and demonstrates the
		effectiveness of the proposed numerical framework for investigating
		high-dimensional Hamiltonian systems. The methodology is applicable to a broad
		class of nonlinear mechanical models, including variable-length pendula,
		cable-driven mechanisms, adaptive robotic systems, and other systems with
		distributed mass.
	\end{abstract}

	\begin{keyword}
		Hamiltonian systems,	Swinging Atwood Machine,  Chaos, Lyapunov exponents, Lyapunov Refined Maps,  Liouville integrability, Differential Galois theory
	\end{keyword}

\end{frontmatter}

\section{Introduction  and motivation}

Pendulum systems belong to the most classical models of mechanics and dynamical systems, yet they continue to play an important role in contemporary research on nonlinear dynamics and chaos. Despite their simple mechanical construction, they exhibit a broad spectrum of behaviors ranging from regular oscillations to strongly chaotic motion. Over the years, numerous variants have been investigated, including the double pendulum~\cite{Shinbrot:92::,Stachowiak:06::,Stachowiak:15::,Szuminski:25::JSV_VLDP}, the spring pendulum~\cite{Broucke:73::,Lee:97::,Maciejewski:04::c,Szuminski:24::}, systems of coupled pendulums~\cite{Huynh2010,Huynh2013,Elmandouh:16::,Szuminski:20::}, and the swinging Atwood machine~\cite{Tufillaro:84::,Tufillaro:90::,Szuminski:22::,Szuminski:23::}. Owing to their rich dynamics, these models have become standard benchmarks for studying chaos, resonances, and stability transitions~\cite{Levien:93::,Pujol:10::}.

Among them, the swinging Atwood machine occupies a special position. It combines the classical Atwood setup with pendular motion, giving rise to a Hamiltonian system that is both physically transparent and dynamically rich. The classical model assumes a massless string and has been extensively investigated~\cite{Tufillaro:84::,Tufillaro:85::,Tufillaro:90::,Szuminski:22::,Szuminski:23::}. One of its most remarkable properties is the existence of an exceptional Liouville integrable case for the mass ratio $\mu=3$, although generic parameter values exhibit chaotic dynamics. This illustrates the exceptional and fragile nature of integrability in Hamiltonian systems.

From a physical perspective, however, the assumption of a massless string is a strong idealization. Real ropes and cables possess non-negligible mass, and their inertia modifies the dynamics in a natural way. The heavy swinging Atwood machine has therefore attracted increasing attention~\cite{Lemos:17::,Pujol:10::}, since the inclusion of string inertia leads to a configuration-dependent kinetic energy and a substantially more intricate Hamiltonian structure.

Such models belong to the broader class of variable-length and distributed-mass mechanical systems. They arise in numerous scientific and engineering applications~\cite{Cveticanin:12::,IrschikBelyaev:14}. Variable geometry introduces additional degrees of freedom, richer resonance mechanisms, and stronger nonlinear coupling between different modes of motion~\cite{Pesce:03,Olejnik:23b::,Lee:11::}. Beyond their theoretical interest, these systems serve as simplified models of cranes, lifting devices, robotic mechanisms with adaptive geometry, and vibration energy harvesters. Understanding the transition from regular to chaotic motion is therefore essential for ensuring stability, control, and efficient operation in such applications~\cite{JU2006376,MR4459645,Freundlich:20::,PLAUT20133768,YANG2022116727,SHARGHI2022117036,MARSZAL2017251,He:22::,ABOHAMER2023377}.
In this context, the swinging Atwood machine with a massive string can be seen
as a minimal but physically realistic model. It keeps the main ingredients of
more complex variable-length systems, while still being simple enough to allow
for a detailed analytical study. For this reason, it provides a convenient
framework for investigating how physically natural modifications affect the
dynamics and integrability of Hamiltonian systems.

The inclusion of string inertia naturally raises the question of how this
physically motivated modification affects the integrability of the system.
In particular, one may ask whether the exceptional integrable case of the
classical swinging Atwood machine persists once a nonzero linear mass density
of the string is introduced. Our previous studies, together with the present numerical investigations, suggest  that even very small deviation from the massless-string limit, corresponding to $\alpha\ll 1$, leads to a rapid destruction of the regular structures associated with the
integrable case $\mu=3$ and give rise to chaotic dynamics. Similar behaviour has
been observed in other variable-length and coupled pendulum systems%
~\cite{Yakubu:21::,Yakubu:22::,Olejnik:23b::,Szuminski:23::}.

To investigate these phenomena, we first employ global numerical tools based on
Poincar\'e sections, Lyapunov exponent maps, and the newly introduced
Lyapunov Refined Maps (LRM), which combine Lyapunov exponents with periodic-orbit
detection to reveal the resonance organization of the phase space.
These methods provide a detailed global picture of the dynamics and allow one to
identify parameter regions associated with regular, chaotic, and periodic
motions. Nevertheless, numerical techniques alone cannot establish or exclude
Liouville integrability. In parameter-dependent Hamiltonian systems,
integrable cases may occur only on thin subsets of the parameter space, making
them difficult to identify solely by numerical exploration.

For this reason, a rigorous analytical approach is required. In the present
work, we investigate the swinging Atwood machine with a massive string from the viewpoint of integrability theory. Our analysis is based on the Morales--Ramis theory~\cite{Morales:99::,Morales:00::}, which provides necessary
conditions for Liouville integrability in terms of the differential Galois
group of the variational equations along a particular solution. This framework
has been successfully applied to a wide class of Hamiltonian and
non-Hamiltonian systems~\cite{Maciejewski:18::,Yagasaki:18::,Acosta:18::,
	Huang:18::,Combot:18::}, leading to the discovery of numerous integrable and
superintegrable models~\cite{Elmandouh:18::,Szuminski:18a::,Szuminski:18b::}.

The principal objective of this paper is to determine whether the
exceptional Liouville integrability of the classical swinging Atwood
machine with mass ratio $\mu=3$ persists after the inclusion of string
inertia. To this end, we combine a comprehensive numerical analysis
based on Poincar\'e sections, Lyapunov exponent maps, and the proposed
Lyapunov Refined Maps  with a rigorous analytical investigation
using the Morales--Ramis theory and differential Galois methods.
Our main result proves that any nonzero linear mass density of the
string destroys Liouville integrability. Consequently, the classical
integrable case is shown to be structurally unstable with respect to
physically realistic perturbations arising from the inertia of the
string.

The rest of the paper is organized as follows. In Section~2 we introduce the Hamiltonian formulation of the swinging Atwood machine with a massive string and derive the corresponding equations of motion. We also discuss the influence of the string inertia on the structure of the model.
Section~3 is devoted to the construction of an invariant manifold and the derivation of explicit non-stationary particular solutions, which constitute the basis for the subsequent analytical non-integrability analysis.
In Section~4 we investigate the phase-space structure by means of Poincar\'e sections, illustrating the coexistence of regular and chaotic dynamics and the gradual destruction of invariant tori.
Section~5 presents a detailed numerical study based on Lyapunov exponent maps, providing a global characterization of the transition from regular to chaotic motion for different values of the system parameters.
In Section~6, we introduce the Lyapunov Refined Maps methodology, describing both the underlying algorithm and its application to the heavy swinging Atwood machine. This new numerical framework combines Lyapunov exponents with phase-space topology, enabling the distinction between periodic and quasi-periodic dynamics while revealing the organization of periodic orbit families.
Section~7 contains the analytical proof of non-integrability. By deriving the normal variational equations along the explicit particular solutions and applying the Morales--Ramis theory together with the Kovacic algorithm, we prove that the system is not Liouville integrable for any nonzero string mass.
Finally, Section~8 summarizes the main results of the paper and discusses their implications for the dynamics of Hamiltonian systems as well as possible directions for future research.

\section{The heavy Swinging Atwood Machine}



\begin{figure}[t]
	\centering
	\begin{tikzpicture}[scale=1.1, line cap=round, line join=round]

		\def\r{3.0}
		\def\X{2.2}
		\def\lp{1.0}
		\def\LPO{4*\lp}   
		\def\th{35}
		\def\Rpul{0.14}
		\def\e{1.0}       
		\def\elen{0.9}    

		\coordinate (O) at (0,0);

		\draw[->,thin] (O) -- (-0,-3.5) node[below] {$x$};
		\draw[->,thin] (O) -- (5.5,0)  node[right] {$y$};

		\draw[->,very thick] (O) -- (0,-\e) node[midway,left] {$\mathbf e_x$};
		\draw[->,very thick] (O) -- (\e,0)  node[midway,above] {$\mathbf e_y$};

		\node[above left,xshift=-2] at (O) {$O$};

		\draw[thick] (O) circle (\Rpul);
		\coordinate (Q) at (O);

		\coordinate (P) at ({\LPO},0);
		\draw[thick ] (P) circle (\Rpul);

		\coordinate (m) at ({\r*sin(\th)}, {-\r*cos(\th)});
		\shade[ball color=black] (m) circle (0.15);
		\node[right,xshift=-18] at (m) {$m$};

		\coordinate (M) at ({\LPO},{-\X});
		\shade[ball color=black] (M) circle (0.2);
		\node[right,xshift=4] at (M) {$M$};

		\draw[very thick] (Q) -- (m);   
		\draw[very thick] (Q) -- (P);   
		\draw[very thick] (P) -- (M);   

		\draw[very thick] ({\Rpul},0) arc[start angle=0,end angle=-70,radius=\Rpul];
		\draw[very thick] ($(P)+(\Rpul,0)$) arc[start angle=0,end angle=-70,radius=\Rpul];

		\draw[<->] ($(Q)!0.15!(m)$) -- ($(Q)!0.85!(m)$)
		node[midway,fill=white,inner sep=1.2pt] {$r$};

		\draw[<->] ($(P)!0.15!(M)$) -- ($(P)!0.85!(M)$)
		node[midway,fill=white,inner sep=1.2pt] {$X$};

		\draw[<->] ($(Q)!0.12!(P)$) -- ($(Q)!0.88!(P)$)
		node[midway,fill=white,inner sep=1.2pt] {$l_p$};

		\draw[dashed] (Q) -- (0,-1.4);
		\draw[thick] (0,-0.9) arc[start angle=-90,end angle=-90+\th,radius=0.9];
		\node at (0.25,-0.65) {$\theta$};

		\coordinate (er_end) at ($(m)+({\elen*sin(\th)},{-\elen*cos(\th)})$);
		\draw[->,very thick] (m) -- (er_end)
		node[midway,fill=white,inner sep=1pt] {$\mathbf e_r$};

		\coordinate (eth_end) at ($(m)+({\elen*cos(\th)},{\elen*sin(\th)})$);
		\draw[->,very thick] (m) -- (eth_end)
		node[midway,fill=white,inner sep=1pt] {$\mathbf e_\theta$};

		\draw[->,thick] (5.3,-0.6) -- ++(0,-1.0) node[below] {$ g$};

	\end{tikzpicture}
	\caption{Planar geometry of the heavy Swinging Atwood Machine.
		The fixed Cartesian unit vectors $\mathbf e_x,\mathbf e_y$ are attached at the origin,
		while $\mathbf e_r,\mathbf e_\theta$ form the moving polar basis attached to the
		swinging mass $m$. The second pulley is shifted to the right by $l_p$.}
\end{figure}

\subsection{Geometry, kinematics and constraints}

We consider the motion of the Swinging Atwood Machine in a fixed vertical plane.
Let $\{\mathbf{e}_x,\mathbf{e}_y\}$ be a Cartesian basis, where $\mathbf{e}_x$ is
directed vertically downward and $\mathbf{e}_y$ horizontally to the right.
The reference pulley is fixed at the origin $O=(0,0)$.
The system consists of two point masses $m$ and $M$ connected by a uniform,
inextensible string of total length $l$ and constant linear mass density $\lambda$.
The string passes over two ideal pulleys, which are assumed to be massless,
frictionless, and of negligible radius.
The portion of the string  between  the pulleys has a constant length $l_p$,
while the remaining straight segments have total length $l - l_p$.

The string is assumed to slide freely through the pulleys, so that material points
can move from one straight branch to another.
Only the straight portions of the string contribute to the kinetic energy of the
system; the contribution of the string segments wrapped around the pulleys is
neglected.
Under these assumptions, the string undergoes no bending or elastic deformation,
and its motion is fully determined by the motion of the point masses and the imposed
geometric constraints.

The position vector of the swinging mass $m$ is
\begin{align*}
	\mathbf{r}(t)=x(t)\,\mathbf{e}_x+y(t)\,\mathbf{e}_y
	= r(t)\,\mathbf{e}_r(\theta),\qquad \text{where}\qquad
	\mathbf{e}_r(\theta)
	=
	\cos\theta\,\mathbf{e}_x+\sin\theta\,\mathbf{e}_y,
	\qquad
	r(t)=\|\mathbf{r}(t)\|.
\end{align*}
The counterweight $M$ moves along the vertical direction, and its position is
\begin{align*}
	\mathbf{r}_M(t)=X(t)\,\mathbf{e}_x.
\end{align*}
The total length of the string decomposes as
\begin{align*}
	l = r + X + l_p,\qquad \text{where}\qquad
	\qquad l,l_p=\mathrm{const}.
\end{align*}
Introducing
$
	h := l - l_p,
$
the holonomic constraint reads
$
	r + X = h.
$
Hence,
\begin{align*}
	X = h - r,
	\qquad
	\dot X = -\dot r.
\end{align*}
After imposing the constraint, the configuration space is two-dimensional and
parametrized by $(r,\theta)$.

\subsection{Energy decomposition}
The Hamiltonian formulation of the heavy swinging Atwood machine requires an explicit computation of both the kinetic and potential energies of all moving components. In contrast to the classical model with a massless string, the finite linear mass density introduces additional contributions associated with the translational and rotational motion of the string itself. In particular, the swinging branch possesses a nontrivial moment of inertia that depends on its instantaneous length, leading to a configuration-dependent kinetic energy. In this subsection, we derive the complete expressions for the kinetic and potential energies by summing the contributions of the swinging mass, the counterweight, and the three string segments.
\subsubsection{The kinetic energy}
Differentiating $\mathbf{r}(t)=r\,\mathbf{e}_r$ gives
\begin{align}
	\label{eq:dot_r}
	\dot{\mathbf{r}}
	=
	\dot r\,\mathbf{e}_r
	+
	r\dot\theta\,\mathbf{e}_\theta,\qquad \text{where}\qquad 	\mathbf{e}_\theta(\theta)
	=
	-\sin\theta\,\mathbf{e}_x+\cos\theta\,\mathbf{e}_y,
	\qquad
	\mathbf{e}_r\cdot\mathbf{e}_\theta=0.
\end{align}
Therefore,
$
	\|\dot{\mathbf{r}}\|^2
	=
	\dot r^{\,2}+r^2\dot\theta^{\,2}.
$
The kinetic energy of the swinging mass is
\begin{align*}
	T_m
	=
	\frac12 m\left(\dot r^{\,2}+r^2\dot\theta^{\,2}\right).
\end{align*}
The counterweight moves with velocity $\dot{\mathbf{r}}_M=\dot X\,\mathbf{e}_x$,
hence
\begin{align*}
	T_M
	=
	\frac12 M\dot X^{\,2}
	=
	\frac12 M\dot r^{\,2}.
\end{align*}

The vertical branch attached to the counterweight has length $X=h-r$.
Since the subsequent points on the string are connected to each other and the string itself undergoes no bending or elastic deformation, all these material points move with the same absolute speed, hence $|\dot X|=|\dot r|$).
Therefore, its kinetic energy is
\begin{align*}
	T_{\mathrm{str,vert}}
	 & =
	\frac12\int_0^{h-r} \lambda\,\dot r^{\,2}\,ds
	=
	\frac12\lambda(h-r)\dot r^{\,2}.
\end{align*}

The straight horizontal segment between the pulleys has constant length $l_p$ and
undergoes pure translation with the same material speed $|\dot r|$ (the string slides
through both pulleys).
Hence, its kinetic energy equals
\begin{align*}
	T_{\mathrm{str,hor}}
	 & =
	\frac12\int_0^{l_p} \lambda\,\dot r^{\,2}\,ds
	=
	\frac12\lambda l_p\,\dot r^{\,2}.
\end{align*}

The swinging branch is a straight segment of instantaneous length $r$,
which at each time undergoes the same radial motion and rigid rotation
as the swinging mass $m$.
Consequently, the velocity field along this branch follows directly
from the point-mass velocity~\eqref{eq:dot_r}
by noting that: (i) the radial component $\dot r$ is common to all points of the branch,
and
(ii) the transverse component scales linearly with the distance from the pulley.

Thus, for a string element located at arc-length $s\in[0,r]$ from the pulley,
the velocity is
\begin{align*}
	\dot{\mathbf r}_{\mathrm{swing}}(s,t)
	=
	\dot r\,\mathbf e_r
	+
	s\dot\theta\,\mathbf e_\theta,\qquad \text{and}\qquad 	\|\dot{\mathbf r}_{\mathrm{swing}}(s,t)\|^2
	=
	\dot r^{\,2}
	+
	s^2\dot\theta^{\,2}.
\end{align*}
The kinetic energy of the swinging branch is therefore
\begin{align*}
	\begin{aligned}
		T_{\mathrm{str,swing}}
		=
		\frac12\int_0^r
		\lambda\,
		\|\dot{\mathbf r}_{\mathrm{swing}}(s,t)\|^2\,ds  =
		\frac12\int_0^r
		\lambda\left(
		\dot r^{\,2}
		+
		s^2\dot\theta^{\,2}
		\right)ds.
	\end{aligned}
\end{align*}
The direct integration gives
\begin{align*}
	T_{\mathrm{str,swing}}
	=
	\frac{1}{2}\lambda\, r\,\dot r^{\,2}
	+
	\frac{1}{2}I_{\mathrm{str}}(r)\dot\theta^{\,2},\qquad \text{where}\qquad 	I_{\mathrm{str}}(r)
	:=
	\int_0^r s^2\,\lambda\,ds
	=
	\frac{\lambda r^3}{3},
\end{align*}
is the moment of inertia of the swinging string segment
about the pulley.

Collecting all contributions, the total kinetic energy of the system reads
\begin{align}
	T
	=
	\frac12\left(m+M+\lambda l\right)\dot r^{\,2}
	+
	\frac12\,r^2\left(m+\frac{\lambda r}{3}\right)\dot\theta^{\,2}.
\end{align}
\subsubsection*{Potential energy}

The vertical height is measured opposite to the $x$--axis.
Therefore, the
gravitational potential is
\begin{align*}
	V_m = -mg\,r\cos\theta.
\end{align*}

The counterweight $M$ has vertical coordinate $X=h-r$, hence we have\begin{align*}
	V_M = -Mg(h-r).
\end{align*}

The potential energy of the massive string is obtained by integrating the height
of its material points.
For the vertical branch,
\begin{align*}
	V_{\mathrm{str,vert}}
	=
	\int_0^{h-r}\lambda g(-s)\,ds
	=
	-\frac12\lambda g(h-r)^2.
\end{align*}
For the swinging branch,
\begin{align*}
	V_{\mathrm{str,swing}}
	=
	\int_0^r \lambda g\,(-s\cos\theta)\,ds
	=
	-\frac12\lambda g r^2\cos\theta.
\end{align*}
The pulley segment contributes only a constant and is omitted.
Up to an additive constant, the total potential energy is therefore
\begin{align}
	V(r,\theta)
	=
	g\,r\Bigl[M+\lambda h - m\cos\theta\Bigr]
	-\frac12\lambda g r^2\bigl(1+\cos\theta\bigr).
\end{align}

\subsection{Lagrangian and rescaling}

Having specified the geometry of the system together with
energy contributions, we are now in a position to formulate the Lagrangian.
The total Lagrangian $L=T-V$ of the planar Swinging Atwood Machine with a massive
string and a constant pulley length, which completely defines the dynamical model,
is given by
\begin{align}
	\begin{aligned}
		\label{eq:lag}
		L
		=
		\frac12\left(m+M+\lambda l\right)\dot r^{\,2}
		+
		\frac12\,r^2\left(m+\frac{\lambda r}{3}\right)\dot\theta^{\,2}
		- g\,r\Bigl[M+\lambda h - m\cos\theta\Bigr]
		+\frac12\lambda g r^2\bigl(1+\cos\theta\bigr).
	\end{aligned}\end{align}

To simplify the formulation and avoid a singular nondimensionalisation in the
degenerate limit $h=l-l_p\to 0$, we scale the length by the total string
length $l$ and use the corresponding gravitational time scale. We introduce
dimensionless variables
\begin{align*}
	r & = l\,R,
	\qquad
	\theta = \Theta,
	\qquad
	t = \sqrt{\frac{l}{g}}\,\tau,
	\qquad
	\mathcal{L}:=\frac{L}{mgl},\qquad \text{with}\qquad mgl\neq 0.
\end{align*}
We also define the dimensionless, positive  parameters
\begin{align*}
	\mu := \frac{M}{m},
	\qquad
	\alpha := \frac{\lambda l}{m},
	\qquad
	\eta := \frac{h}{l}=1-\frac{l_p}{l}.
\end{align*}
Note that the limit $h\to 0$ corresponds to $\eta\to 0$ and is now admissible at the
level of dimensionless variables.

For convenience, we introduce
\begin{equation}
	\begin{aligned}
		\label{dupa}
		\mathcal{M}  := 1+\mu+\alpha,                          \qquad
		D(R)         := R^{2}\Bigl(1+\frac{\alpha R}{3}\Bigr), \qquad
		Q(R)         := R+\frac{\alpha}{2}R^{2}.
	\end{aligned}
\end{equation}
Note that, by construction $
	D'(R)=2\,Q(R)$,
which will be   used to simplify intermediate expressions.

Thus,  starting from the Lagrangian~\eqref{eq:lag}, we obtain the dimensionless Lagrangian written in a compact form
\begin{align}
	\begin{aligned}
		\label{eq:nlag}
		\mathcal{L}
		 & =
		\frac{1}{2}\,\mathcal{M}\,\dot R^{\,2}
		+
		\frac{1}{2}\,D(R)\,\dot \Theta^{\,2}
		-
		\mathcal{V}(R,\Theta),
	\end{aligned}
\end{align}
Here $\mathcal{V}(R,\Theta)$ denotes the dimensionless potential, which can be  written as a sum of two separated parts as follows
\begin{align}
	\label{eq:V_Q}
	\mathcal{V}(R,\Theta)
	=
	V_0(R)-Q(R)\cos\Theta,\qquad\text{with}\qquad V_0:=R(\mu+\alpha\eta)-\frac{\alpha}{2}R^{2},\quad Q(R)         := R+\frac{\alpha}{2}R^{2}
\end{align}

\subsection{Hamiltonian formulation}

Starting from the dimensionless Lagrangian~\eqref{eq:nlag}, we introduce the conjugate momenta
\begin{align*}
	\begin{aligned}
		P_R
		:=
		\frac{\partial \mathcal{L}}{\partial \dot R} & =
		\mathcal{M}\,\dot R,\qquad
		P_\Theta
		:=
		\frac{\partial \mathcal{L}}{\partial \dot \Theta}=
		D(R)\,\dot \Theta.
	\end{aligned}
\end{align*}
Hence,
\begin{align*}
	\dot R
	=
	\frac{1}{\mathcal{M}}\,P_R,
	\qquad
	\dot \Theta
	=
	\frac{1}{D(R)}\,P_\Theta.
\end{align*}
The (dimensionless) Hamiltonian is obtained by the Legendre transform
\begin{align*}
	\mathcal{H}(R,\Theta,P_R,P_\Theta)
	:=
	P_R\dot R+P_\Theta\dot\Theta-\mathcal{L},
\end{align*}
which yields
\begin{align}
	\label{eq:HH}
	\mathcal{H}
	=
	\frac{P_R^{2}}{2\mathcal{M}}
	+
	\frac{P_\Theta^{2}}{2D(R)}
	+
	\mathcal{V}(R,\Theta),\qquad 	\mathcal{V}(R,\Theta)
	=
	V_0(R)-Q(R)\cos\Theta,
\end{align}
where the components $V_0(R)$ and $Q(R)$ are previously defined in~\eqref{eq:V_Q} and in~\eqref{dupa}.

The Hamiltonian equations of motion define a four-dimensional autonomous system
of first-order ordinary differential equations of the form
\begin{align}
	\label{eq:Ham_sys}
	\dot R
	=
	\dfrac{1}{\mathcal{M}}\,P_R,                         \quad
	\dot \Theta
	=
	\dfrac{1}{D(R)}\,P_\Theta,                             \quad		\dot P_R
	=
	\dfrac{1}{2}\,\dfrac{D'(R)}{D(R)^{2}}\,P_\Theta^{2}
	-\dfrac{\partial \mathcal{V}}{\partial R}(R,\Theta),    \quad
	\dot P_\Theta
	=
	-\,Q(R)\sin\Theta.
\end{align}
In the subsequent sections, we analyze the Hamiltonian system
\eqref{eq:HH}--\eqref{eq:Ham_sys} with respect to its qualitative dynamics and
Liouville integrability. Special attention is devoted to the weak string--mass
regime $\alpha<1$, in which the system may be viewed as a perturbation of the
classical swinging Atwood machine. This regime is particularly suitable for the
application of analytical methods and differential--Galois integrability tests.

\section{Invariant manifold and particular solutions}
We consider the Hamiltonian function~\eqref{eq:HH} and its corresponding equations of motion~\eqref{eq:Ham_sys}.
In what follows we linearise the system~\eqref{eq:Ham_sys} about the invariant
manifold
\begin{align}
	\label{eq:M_invariant_linstab}
	\mathcal{N}
	:=
	\{(R,\Theta,P_R,P_\Theta):\ \Theta=0,\ P_\Theta=0\},
\end{align}
which corresponds to purely radial motion.
Along $\mathcal{N}$ the reduced dynamics is governed by the variables $(R,P_R)$,
while the transverse variables $(\Theta,P_\Theta)$ decouple at the linear level.

Restricting the dynamics to $\mathcal{N}$ yields the reduced one-degree-of-freedom
Hamiltonian subsystem
\begin{align}
	\label{eq:v_M}
	\dot R
	=
	\frac{1}{\mathcal{M}}\,P_R,
	\qquad
	\dot P_R
	=
	-\,\frac{\partial \mathcal{V}}{\partial R}(R,0).
\end{align}
Using the decomposition~\eqref{eq:V_Q}, we obtain
\begin{equation}
	\begin{split}
		\label{eq:rel}
		\mathcal V(R,0)
		=
		R(\mu+\alpha\eta-1)-\alpha R^{2}, \qquad
		\mathcal V_{R}(R,0)
		=
		\mu+\alpha\eta-1-2\alpha R.
	\end{split}
\end{equation}

\subsection{Stationary solutions and their stability}
\label{subsec:stationary_particular}

A stationary solution on the invariant manifold $\mathcal{N}$ corresponds to an
equilibrium of the reduced subsystem~\eqref{eq:v_M}. It is characterized by
\begin{align*}
	\dot R=0,\quad \dot P_R=0
	\quad\Longleftrightarrow\quad
	P_R=0,\quad \mathcal V_{R}(R,0)=0.
\end{align*}
For $\alpha\neq0$, this condition yields a unique stationary radial position
\begin{align}
	\label{eq:delta}
	R_*=\delta
	:=
	\frac{\mu+\alpha\eta-1}{2\alpha},
\end{align}
with $\Theta=0$ and $P_\Theta=0$. Consequently, the full Hamiltonian
system~\eqref{eq:Ham_sys} admits the equilibrium point
\begin{align}
	\label{eq:equilibrium_full}
	\vGamma_*
	=
	(R_*,\Theta,P_R,P_\Theta)
	=
	(\delta,0,0,0),
	\qquad (\alpha\neq0).
\end{align}

To classify $\vGamma_*$, we compute the Jacobian matrix of the Hamiltonian vector
field associated with~\eqref{eq:Ham_sys} at $\vGamma_*$. Introducing
\begin{align*}
	\mathcal{M}_*:=\mathcal{M},
	\qquad
	D_*:=D(\delta)
	=\delta^{2}\Bigl(1+\frac{\alpha\delta}{3}\Bigr),
\end{align*}
and noting that $P_\Theta=0$ on $\mathcal{N}$, the linearisation decouples into
radial and angular blocks. The relevant second derivatives of the potential at
$\vc:=(\delta,0)$ read
\begin{align*}
	\mathcal V_{RR}(\vc)=-2\alpha,
	\quad
	\mathcal V_{\Theta\Theta}(\vc)
	=\delta+\frac{\alpha}{2}\delta^{2},
	\quad
	\mathcal V_{R\Theta}(\vc)=0.
\end{align*}
Hence, the Jacobian matrix takes the block--diagonal form
\begin{align*}
	J(\vGamma_*)
	=
	\begin{pmatrix}
		0                     & 0                               & \dfrac{1}{\mathcal{M}_*} & 0              \\[0.15cm]
		0                     & 0                               & 0                        & \dfrac{1}{D_*} \\[0.15cm]
		-\mathcal V_{RR}(\vc) & 0                               & 0                        & 0              \\[0.15cm]
		0                     & -\mathcal V_{\Theta\Theta}(\vc) & 0                        & 0
	\end{pmatrix}.
\end{align*}

The spectrum therefore splits into a radial pair and an angular pair. The radial
block yields
\begin{align}
	\lambda^{2}-\frac{2\alpha}{\mathcal{M}_*}=0
	\quad\Longrightarrow\quad
	\lambda_{1,2}=\pm\omega_0,\qquad \text{with}\qquad 	\label{eq:omega}
	\omega_0
	=
	\sqrt{\frac{2\alpha}{\mathcal{M}}}
	=
	\sqrt{\frac{2\alpha}{1+\mu+\alpha}}.
\end{align}
Thus, for physical parameters $\alpha>0$, the equilibrium is hyperbolic in the
radial directions. The angular block gives
\begin{align}
	\lambda^{2}+\Omega_0^{2}=0,	\quad\Longrightarrow\quad 	\lambda_{3,4}=\pm\rmi\Omega_0, \qquad \text{with}
	\qquad
	\Omega_0^{2}
	=
	\frac{Q(\delta)}{D_*}
	=
	\frac{1+\frac{\alpha}{2}\delta}
	{\delta\left(1+\frac{\alpha\delta}{3}\right)},
\end{align}
Consequently, the equilibrium $\vGamma_*$ is of saddle--centre type
(hyperbolic$\times$elliptic) and therefore linearly unstable for $\alpha>0$.

In the massless--string limit $\alpha=0$, the reduced potential simplifies to
$\mathcal V(R,0)=R(\mu-1)$. In this case the condition
$\partial_R\mathcal V(R,0)=0$ is satisfied only for $\mu=1$. Hence, for
$\alpha=0$ and $\mu\neq1$ the reduced subsystem exhibits constant acceleration
and admits no equilibrium, whereas for $\alpha=0$ and $\mu=1$ it possesses a
neutrally stable continuum of stationary solutions $R=\mathrm{const}$ on~$\mathcal{N}$.

\subsection{Non-stationary particular solutions}

The restriction of the Hamiltonian system~\eqref{eq:HH} to the invariant
manifold $\mathcal{N}$ yields a reduced Hamiltonian subsystem with
Hamiltonian
\begin{align}
	\label{eq:H_M}
	\widetilde{	\mathcal{H}}=		\mathcal{H}_{|\mathcal{N}}
	=
	\frac{P_R^{2}}{2\mathcal{M}}+\mathcal{V}(R,0).
\end{align}
Fixing the energy level $\widetilde{	\mathcal{H}}=E$, the radial motion is governed by
the first integral
\begin{align}
	\label{eq:ee}
	\dot R^{2}
	=
	\frac{2\bigl(E-\mathcal{V}(R,0)\bigr)}{\mathcal{M}},
\end{align}
which defines a one-parameter family of non-stationary solutions of the reduced
system~\eqref{eq:v_M}.

Let $R(\tau)$ be any non-constant solution of~\eqref{eq:v_M}. Then
\begin{align}
	\label{eq:non_stat_part_sol}
	\vGamma(\tau)
	:=
	\bigl(R(\tau),\,0,\,P_R(\tau),\,0\bigr),
\end{align}
defines a non-stationary particular solution of the full Hamiltonian system,
along which the Morales--Ramis theory will be applied. The explicit form of
$R(\tau)$ depends on whether the parameter $\alpha$ vanishes or not.

For $\alpha\neq 0$, i.e.\ for a string with non-vanishing linear mass density
$\lambda\neq 0$, the amount of mass involved in the motion depends explicitly on
the configuration. As a result, the effective inertia of the system increases
with the instantaneous length $R(\tau)$ of the swinging branch, which
qualitatively modifies the radial dynamics. In this case, solutions of the
reduced equations~\eqref{eq:v_M} take the hyperbolic form
\begin{align}
	\label{eq:rr_1}
	R(\tau)
	=
	A\cosh\!\bigl[\omega_0 (\tau-\tau_0)\bigr]+\delta,
	\quad
	A=\sqrt{\delta^2-\frac{E}{\alpha}},
\end{align}
where $\omega_0>0$ is the characteristic growth rate defined
in~\eqref{eq:omega}, and $\delta$, given by~\eqref{eq:delta}, denotes the
location of the stationary (but unstable) radial equilibrium. The constant $A$
is determined by the initial conditions and measures the initial deviation from
the equilibrium~$R=\delta$.

In contrast, for $\alpha=0$ (the massless-string limit), the reduced
equation~\eqref{eq:v_M} coincides with the radial equation of the classical
Atwood machine with constant acceleration. Integrating twice, one obtains
\begin{align}
	\label{eq:alpha0}
	R(\tau)
	=
	-\frac{1}{2}a\,\tau^2+V_0\,\tau+R_0,
	\qquad
	a=\frac{\mu-1}{\mu+1},
\end{align}
where $V_0$ and $R_0$ are integration constants corresponding to the initial
velocity and position, respectively.

It is worth emphasizing that the limit $\alpha\to 0$ is singular at the level of
the reduced dynamics. For $\alpha\neq 0$ the radial motion is governed by an
effective quadratic potential with an unstable stationary point $R=\delta$,
whereas for $\alpha=0$ the potential becomes purely linear and the stationary
point disappears. In particular, the characteristic hyperbolic time scale
$\omega_0^{-1}\sim\sqrt{\frac{\mathcal{M}}{2\alpha}}$ diverges as $\alpha\to 0$,
and the hyperbolic solution~\eqref{eq:rr_1} does not converge uniformly on long
time intervals to the parabolic law~\eqref{eq:alpha0}.

\section{Poincar\'e sections}
\begin{figure}[t]
	\centering
	\subfigure[$\alpha=0,\, E=1$]{
		\includegraphics[width=0.32\linewidth]{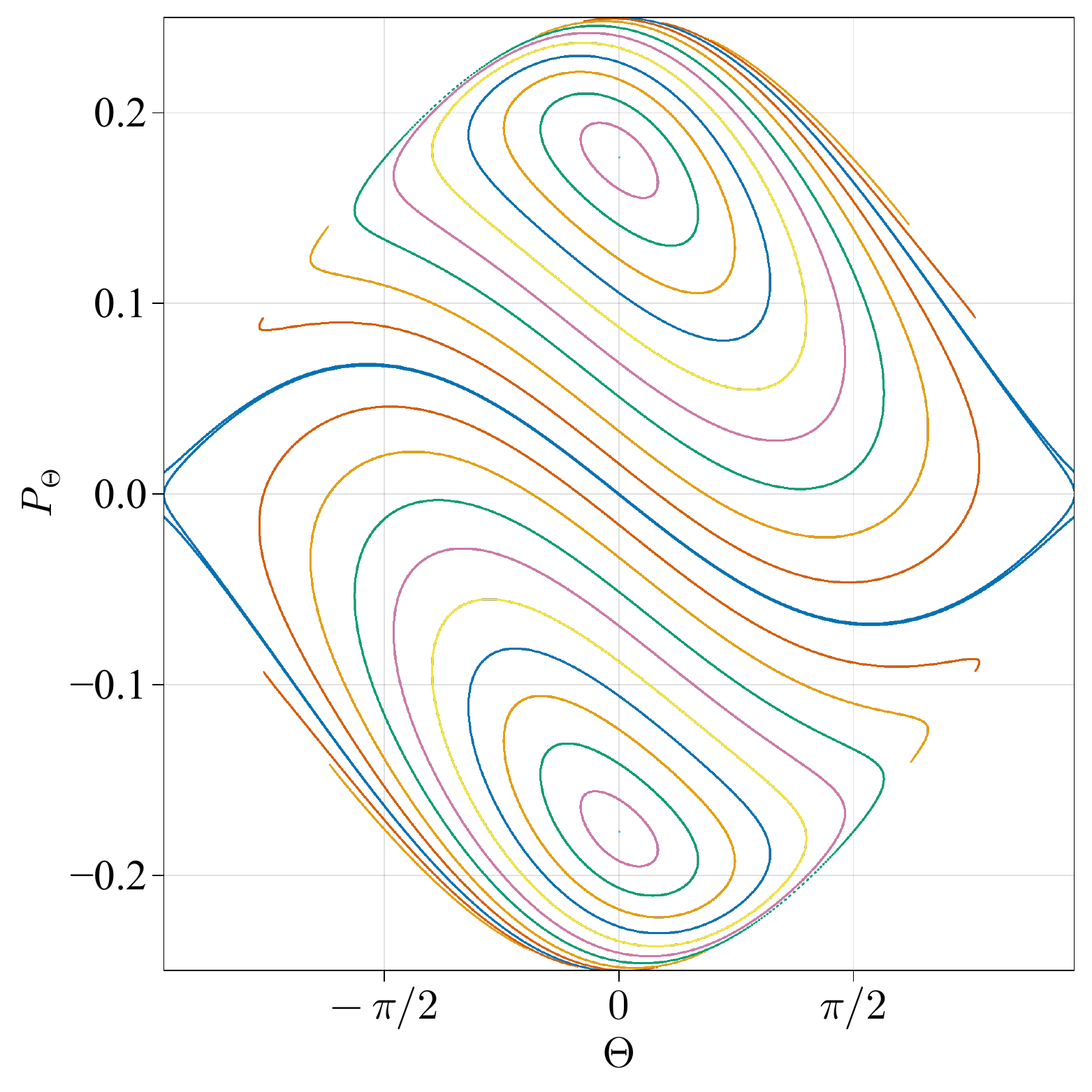}  }
	\subfigure[$\alpha=0.01\, E=1$]{
		\includegraphics[width=0.32\linewidth]{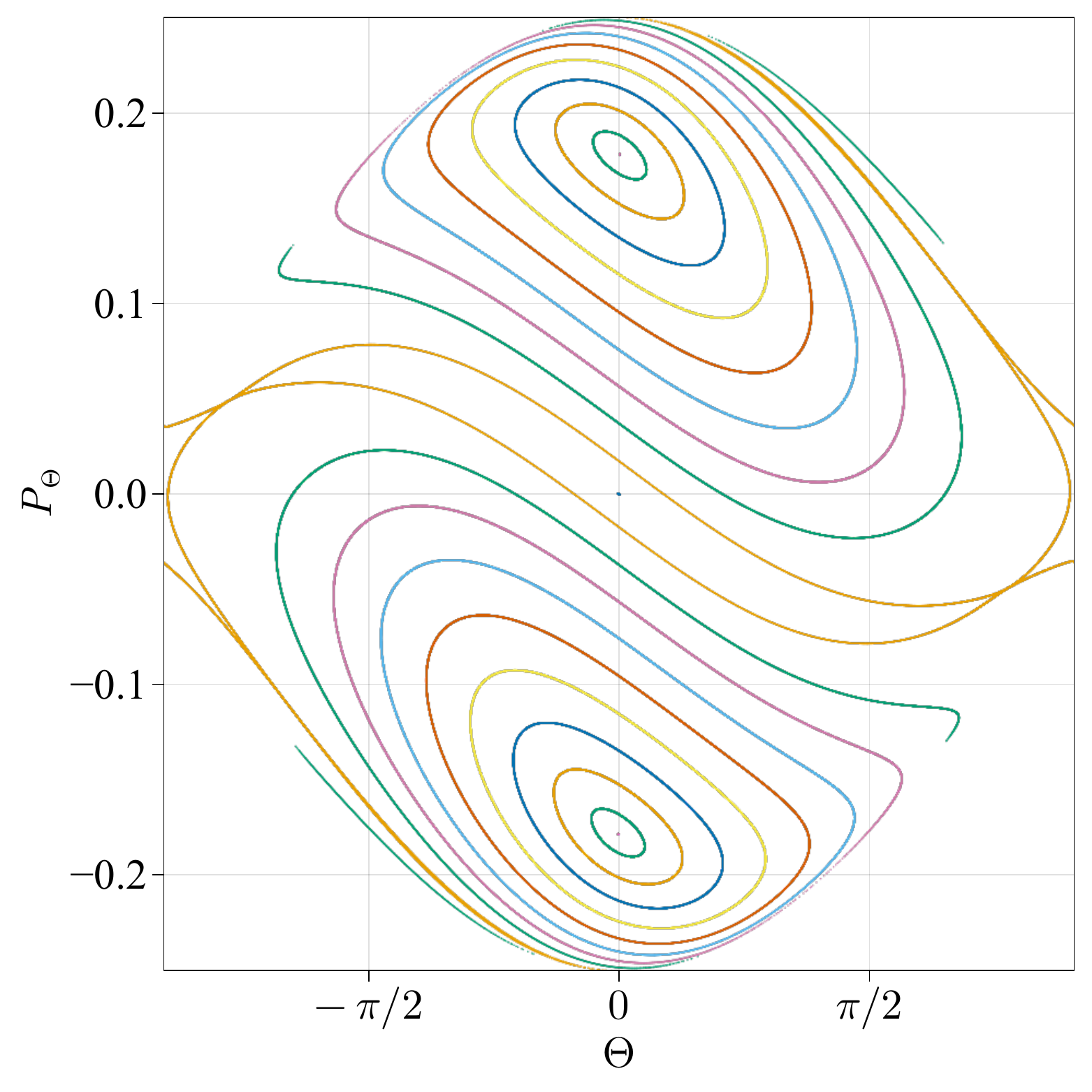}  }
	\subfigure[$\alpha=0.015,\, E=1.002$]{
		\includegraphics[width=0.32\linewidth]{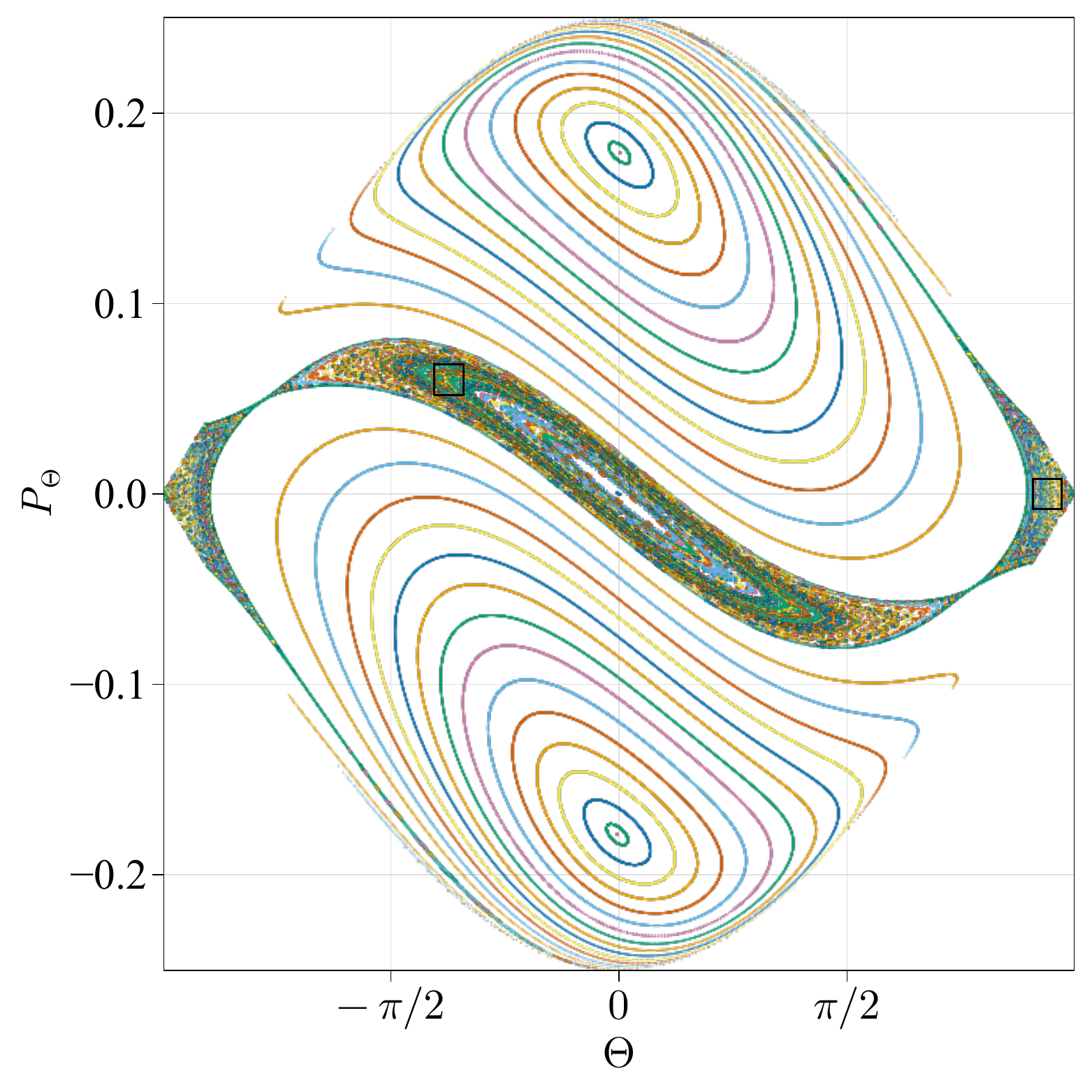}}
	\\
	\subfigure[$\alpha=0.5\, E=1.006$]{\includegraphics[width=0.31\linewidth]{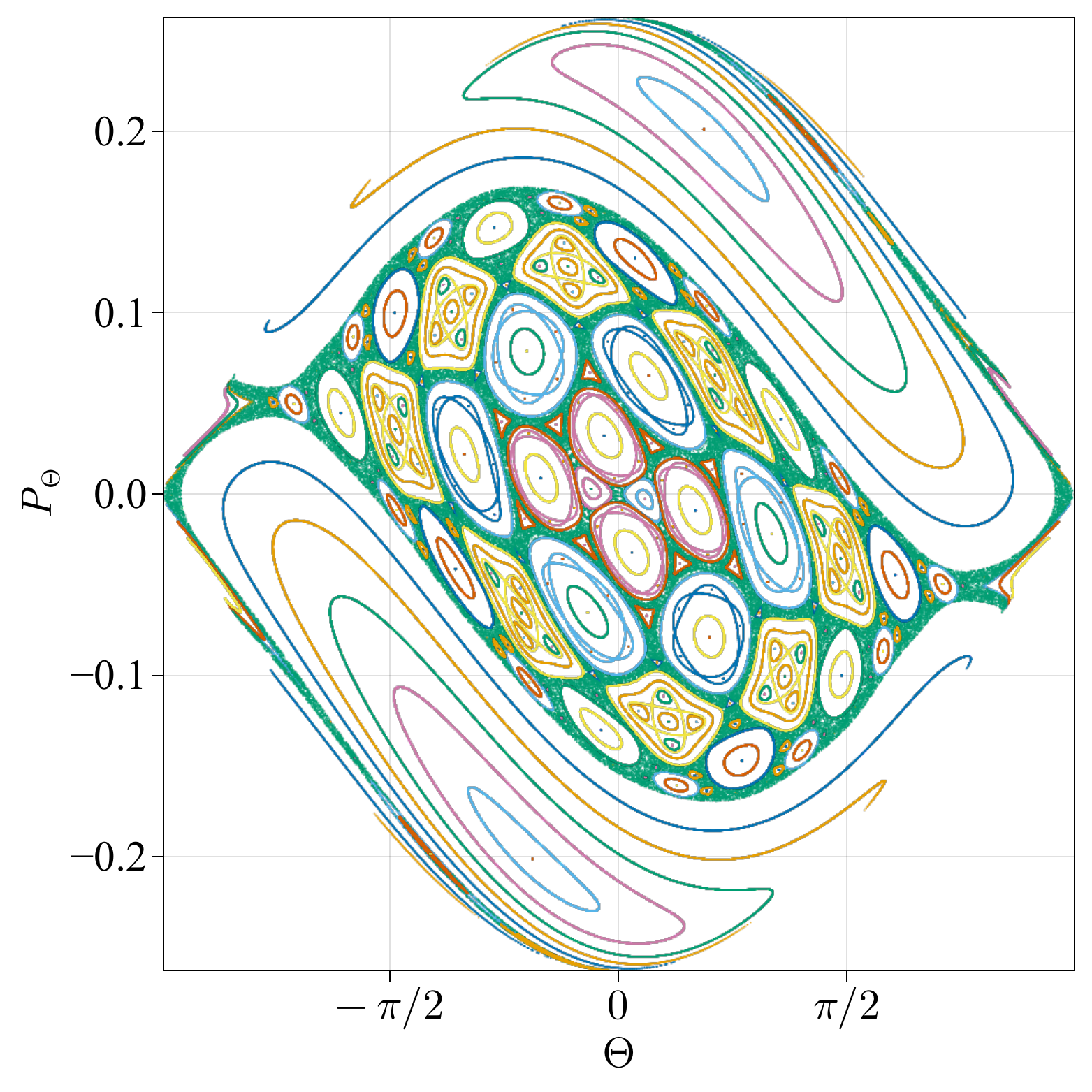}  }\hspace{5pt}
	\subfigure[$\alpha=1.5,\, E=1.19$]{
		\includegraphics[width=0.31\linewidth]{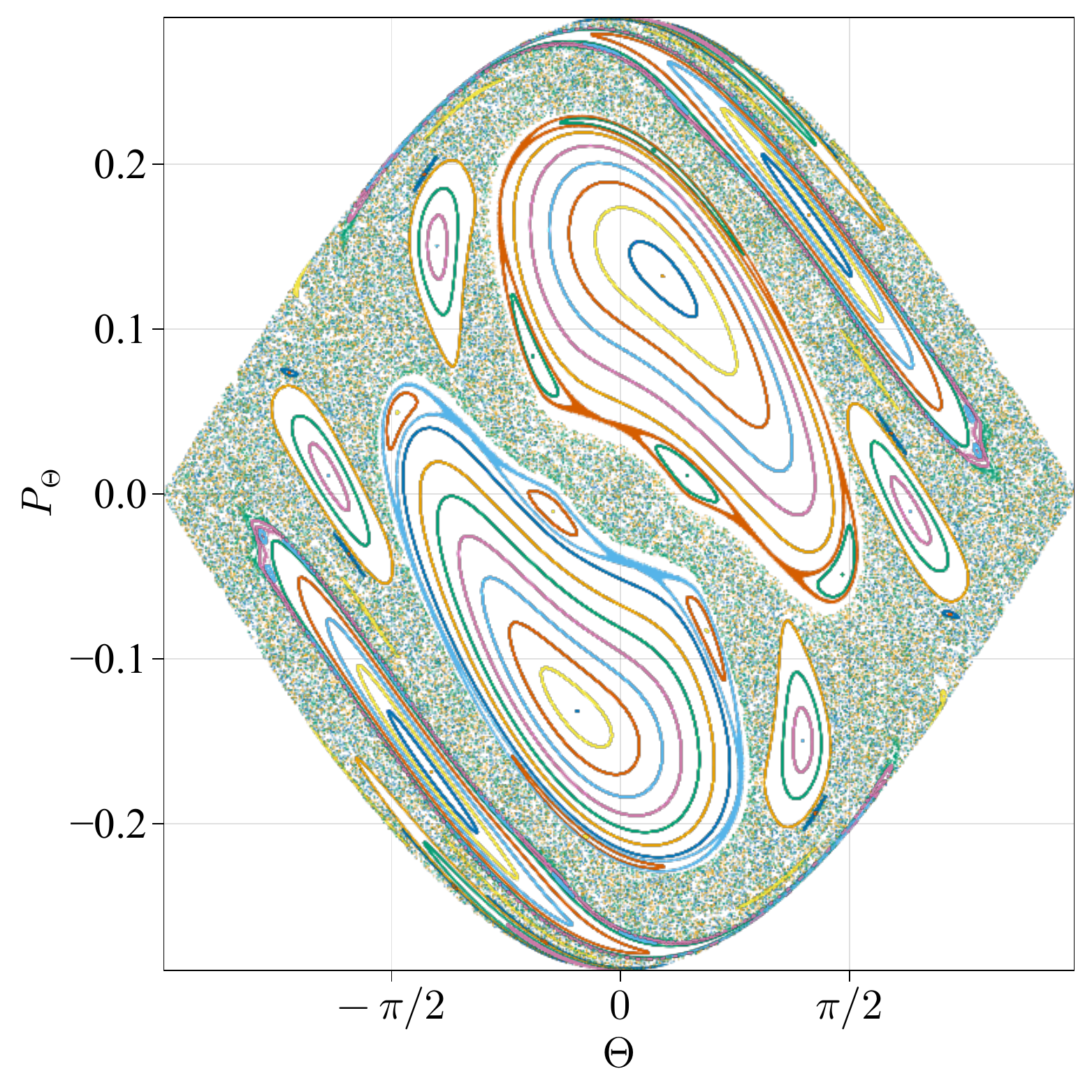}
	}\subfigure[$\alpha=1.656,\, E=1.207$]{
		\includegraphics[width=0.31\linewidth]{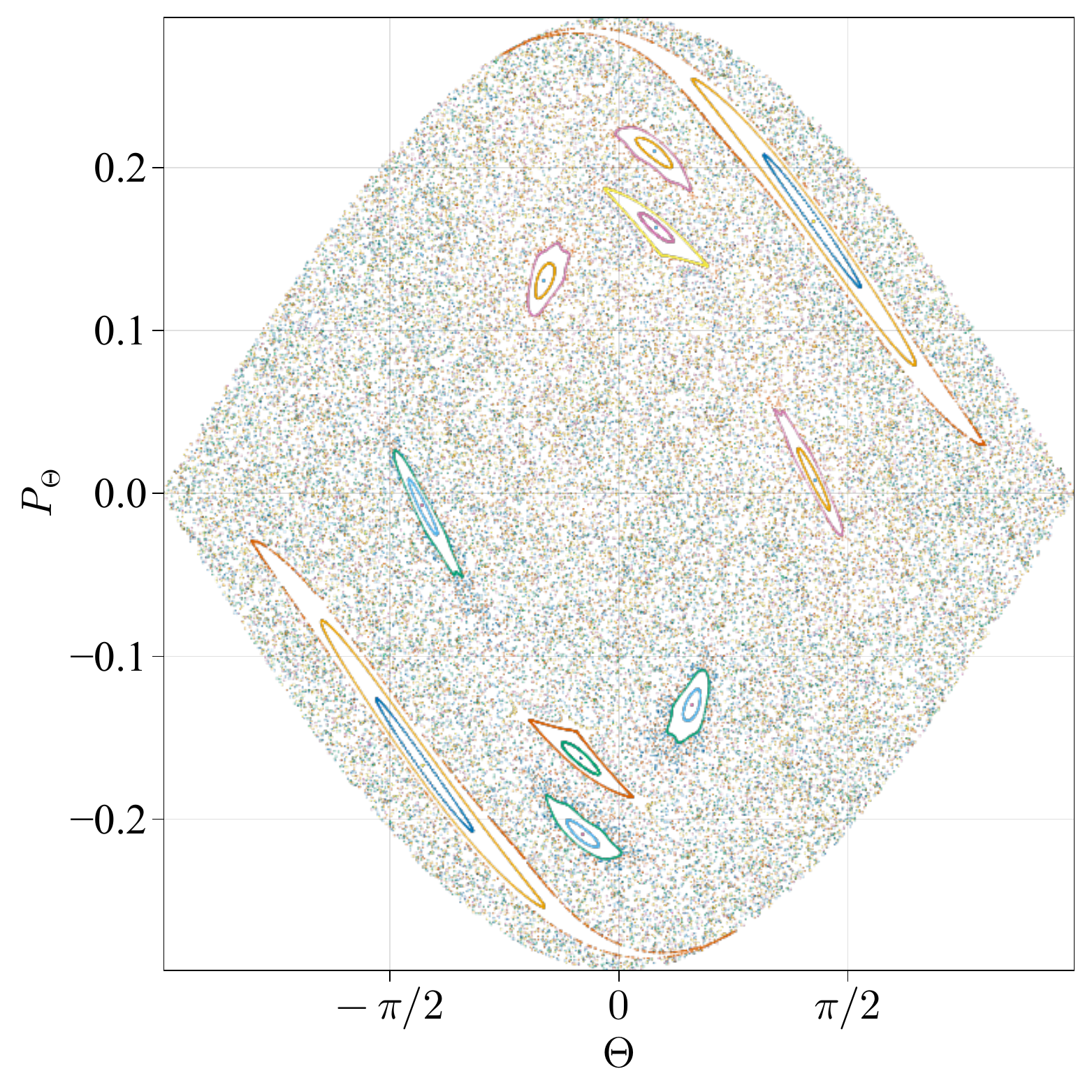}
	}

	\caption{The Poincare sections for the system~\eqref{eq:Ham_sys}, prepared for the surface defined by $ R=\eta/2$,
		with parameters set to $\eta=0.5$, $\mu=3$ and energy level given by $E=H( \eta/2,\pi,0,0)$, for varying values of the dimensionless mass ratio $\alpha $.
		Areas highlighted by the bounding boxes presented on Fig.~\ref{fig:psections}(c) can be examined in more detail on Fig.~\ref{fig:pszooms}.
		\label{fig:psections}}
\end{figure}

\begin{figure}[t]
	\centering
	\includegraphics[width=0.4\linewidth]{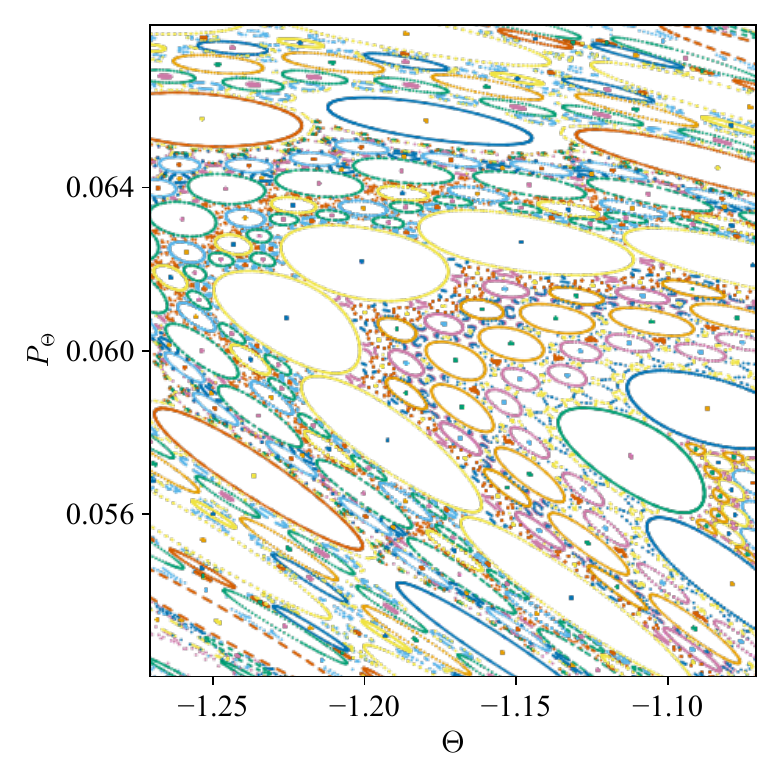}\hspace{5pt}	\includegraphics[width=0.4\linewidth]{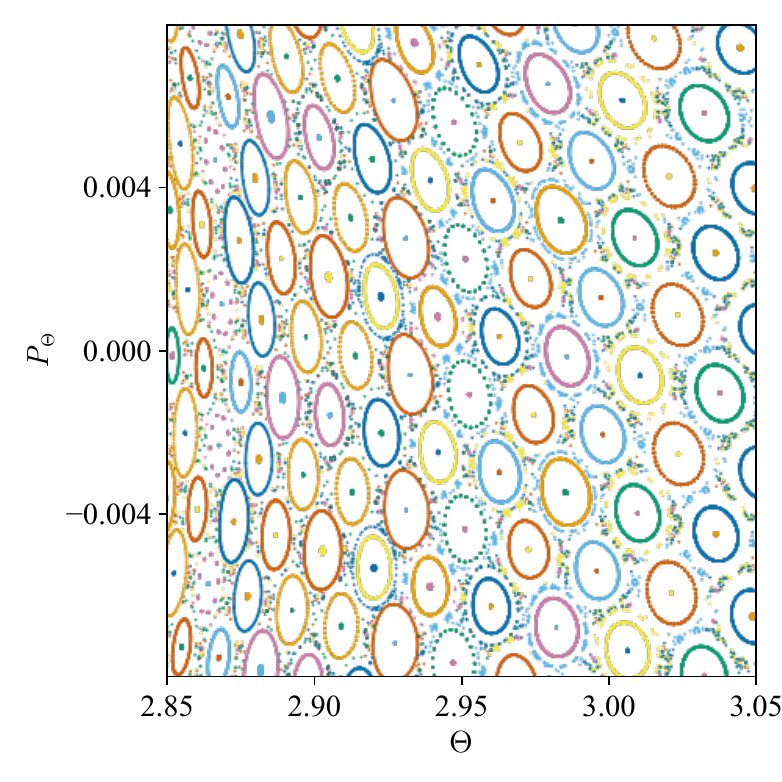}
	\caption{Magnification of chosen areas from the central part of the Poincare section presented on Fig.~\ref{fig:psections}(c).}
	\label{fig:pszooms}
\end{figure}

To visualize the structure of the phase space and to detect the presence of
regular islands, resonances, and chaotic layers, we construct Poincar\'e
sections on suitably chosen transversal hypersurfaces.

We start by employing Poincar\'e sections for a fixed value of the parameters
and the energy levels given by
\[\eta=\frac{1}{2},\quad \mu=3,\qquad E=H(\eta/2,\pi,0,0),\]
with varying values of the dimensionless mass ratio $\alpha $.
These values of the parameter $\alpha$ will provide a comprehensive view of the system dynamics when the mass of the string matters.

We proceed as follows. For fixed values of the dimensionless parameters
$\mu$, $\eta$ and $\alpha$, we consider the Poincar\'e section defined by
$R= \eta/2$, with the direction $P_R>0$.
Here, $ R=\eta/2$ is a fixed radial coordinate corresponding to
the midpoint of the swinging branch. We then numerically integrate a large
ensemble of trajectories with initial conditions sampled from
the energy surface $E=H( \eta/2,\pi,0,0)$ and collect the intersection points of
these trajectories. The resulting Poincar\'e sections are
plotted in the $(\Theta,P_\Theta)$ plane, revealing the impact of
the perturbation parameter $\alpha$ on the phase space structure.

For $\alpha=0$, the system reduces to the classical Swinging Atwood Machine with
a massless string, which for $\mu=3$ is known to be integrable.
On Fig.~\ref{fig:psections}(a) we can see the Poincar\'e section for this case.
The phase space is filled with invariant curves corresponding to quasi-periodic
motion on two-dimensional tori. The absence of chaotic regions and the presence
of smooth, closed curves are indicative of integrable dynamics. The particular solution of the classical Swinging Atwood Machine corresponds to a periodic orbit, which appears as a pair of fixed points in the Poincar\'e section.

When a small perturbation is introduced by setting $\alpha=0.01$, as shown in Fig.~\ref{fig:psections}(b), a new hyperbolic orbit, corresponding to the radial solution~\eqref{eq:rr_1}, emerges at the centre of the Poincar\'e section. The orbit is enclosed by closed invariant curves forming a separatrix, which divides distinct regions of the phase space. A thin chaotic layer develops in its neighbourhood, providing the first numerical evidence that the separatrix is associated with the onset of chaotic dynamics.

In Fig.~\ref{fig:psections}(c), corresponding to $\alpha=0.015$, the phase space undergoes a substantial qualitative change. A proliferation of periodic orbits is accompanied by the emergence of chaotic regions. Nevertheless, the radial solution~\eqref{eq:rr_1} remains visible in the central part of the section, where it is surrounded by numerous newly formed periodic orbits. A similar structure is observed in the neighbourhood of $(\pi,0)$, where the periodic islands are associated with the rotational motion of the pendulum. For clarity, two enlarged views of these regions are shown in Fig.~\ref{fig:pszooms}.

For $\alpha=0.5$, as shown in Fig.~\ref{fig:psections}(d), the chaotic layer associated with the separatrix expands considerably, occupying a large portion of the Poincar\'e section. This reflects the increasing influence of the string inertia on the global dynamics of the system. Nevertheless, numerous stable periodic islands persist, including those corresponding to high-order resonances.

For larger values of the mass ratio $\alpha\geq 3/2$,  the system reveals highly chaotic dynamics, with chaotic orbits
filling most of the section plane, except for some regular islands. However, the
previously mentioned central periodic orbit persists. Nevertheless, we conclude
that for these parameter values with $\alpha\neq 0$, the system does not appear
to be integrable. A rigorous proof is presented in Section~7.

\section{Lyapunov exponents dynamical maps}

The qualitative analysis based on Poincar\'e sections presented in the previous section strongly indicates that the system~\eqref{eq:Ham_sys} is non-integrable. In particular, the emergence of stochastic layers and scattered intersection points on the section provides clear evidence of chaotic trajectories in the phase space. While Poincar\'e sections constitute a powerful geometric tool for detecting the breakdown of invariant tori and the onset of chaotic behavior, they do not provide quantitative information about the strength of chaos, nor do they allow for a systematic comparison across different regions of the space of initial conditions.

To obtain a more refined and quantitative insight into the system's dynamics, we employ the theory of Lyapunov exponents. These quantities measure the average exponential rate of divergence of nearby trajectories in phase space. More precisely, for two initially close trajectories, their separation typically evolves as
\begin{align*}
	\|\delta \mathbf{x}(t)\| \sim \|\delta \mathbf{x}(0)\| e^{\lambda t},
\end{align*}
where $\lambda$ denotes the largest Lyapunov exponent. The sign and magnitude of $\lambda$ carry essential dynamical information: negative values correspond to contraction, zero values to neutral (typically quasi-periodic) motion, whereas positive values indicate sensitive dependence on initial conditions and hence chaotic dynamics.

In Hamiltonian systems, the spectrum of Lyapunov exponents exhibits a characteristic structure imposed by the underlying symplectic geometry. In particular, the exponents occur in symmetric pairs $(\lambda_i,-\lambda_i)$, for $i=1,\ldots, n$, and their sum vanishes as a consequence of phase-space volume preservation (Liouville's theorem). Moreover, the existence of first integrals further constrains the spectrum by introducing additional zero exponents~\cite{Pikovsky:16::}. From the perspective of integrability, this observation is crucial: integrable systems are typically characterized by vanishing Lyapunov exponents almost everywhere, reflecting the regular (quasi-periodic) motion on invariant tori. In contrast, the presence of positive Lyapunov exponents signals the destruction of such structures and provides strong numerical evidence for non-integrability. Recent applications of Lyapunov exponents to pendulum-like systems in the systematic search for additional first integrals can be found in~\cite{Szuminski:24::,Szuminski:25::JSV_VLDP,Szuminski:23::}. A detailed description of the proposed methodology, together with several benchmark applications, is presented in the forthcoming paper~\cite{Szuminski:26::}.

Since the system under consideration evolves in a four-dimensional
phase space, and since the objective of the present work is to
distinguish between regular and chaotic motion, it is sufficient to
estimate the largest Lyapunov exponent. To this end, we employ the
standard algorithm of Benettin \emph{et al.}~\cite{Benettin:80::},
which is based on the simultaneous integration of the equations of
motion and the associated variational equations, combined with
periodic Gram--Schmidt re-orthonormalization of the deviation vectors.

Let \(T\) denote the re-orthonormalization interval and let
\(r^{(k)}\) be the norm of the deviation vector immediately before the
\(k\)-th re-orthonormalization step. The finite-time approximation of
the largest Lyapunov exponent is then given by
\begin{align*}
	\lambda^{(N)}
	=
	\frac{1}{NT}
	\sum_{k=1}^{N}
	\ln r^{(k)},
\end{align*}
where \(N\) denotes the total number of re-orthonormalization steps.
As \(N\) increases, this quantity converges to the largest Lyapunov
exponent.

The equations of motion and the associated variational equations were
integrated simultaneously using the built-in \textit{Mathematica 15}
solver \texttt{NDSolve} with an explicit Runge--Kutta method.
The numerical parameters were selected after extensive testing to
ensure both stability and computational efficiency.
Throughout all computations, the re-orthonormalization interval was
fixed at \(T=1\), and the largest Lyapunov exponent was evaluated over
\(N=3000\)--\(5000\) successive re-orthonormalization steps,
corresponding to total integration times
$
	t=NT\in[3000,5000].
$
All computations were performed using machine-precision arithmetic.
These settings were found to provide reliable convergence of the
largest Lyapunov exponent while maintaining a reasonable computational
cost.

Additionally, the conservation of the energy integral $\mathcal{H}=E$ defined in~\eqref{eq:HH} is used as an independent measure of numerical accuracy. Both relative and absolute errors remain below the tolerance of $10^{-11}$, ensuring the numerical accuracy of the computed trajectories. These settings enable efficient and reliable computation of Lyapunov exponents, providing reliable convergence throughout the computations.

Having established the numerical procedure for the computation of the largest Lyapunov exponent, we now employ it to investigate the global organization of regular and chaotic dynamics in the considered system. To this end, we construct Lyapunov diagrams in several complementary settings: the parameter space, the space of initial conditions, and the energy-constrained configuration space. Each of these representations reveals different aspects of the underlying dynamical structure.

For visualization purposes, a logarithmic color scale is applied,
\begin{equation}
	\label{eq:scale}
	\text{scaling}(\lambda) \;:=\; \left[\frac{\log(\lambda/\min)}{\log(\max/\min)}\right]^p,
\end{equation}
where $\min$ and $\max$ denote the minimal and maximal values of $\lambda$ on the grid, and $p\in\mathbb{R}^+$ is a contrast parameter.
This scaling enhances the visibility of weakly chaotic regions.
For the integration times used in the present work, the threshold
$
	\lambda_{\mathrm{thr}}=5\times10^{-3}
$
was found to provide a reliable numerical separation between regular
and chaotic trajectories. Consequently, trajectories satisfying
$\lambda>\lambda_{\mathrm{thr}}$ are classified as chaotic, whereas only
those with $\lambda\le\lambda_{\mathrm{thr}}$ are subjected to the
subsequent periodicity-detection procedure.

This approach enables not only the detection of chaotic regions, but also a quantitative assessment of their intensity, as well as the identification of fine structures such as resonance zones, chaotic layers, and sharp boundaries separating distinct dynamical regimes. In particular, Lyapunov diagrams provide a natural quantitative complement to Poincar\'e sections, offering a global characterization of the phase space beyond purely geometric observations.

\subsection{Lyapunov maps without energy constraint}

In this subsection, the global dynamics of the heavy swinging Atwood machine is investigated using Lyapunov exponent maps constructed without imposing an energy constraint. Two complementary representations are employed. The first examines the parameter plane $(\mu,\alpha)$ for prescribed initial conditions, while the second focuses on the initial-condition plane $(R_0,\Theta_0)$ for fixed system parameters. Together, these complementary perspectives provide a comprehensive description of how the global dynamics depends on both the governing parameters and the initial conditions.

\subsubsection{Dynamical maps in the $(\mu,\alpha)$ parameter space}
\begin{figure}[t]
	\centering
	\subfigure[ $\Theta_0=\pi/100$]{\includegraphics[width=0.44\linewidth]{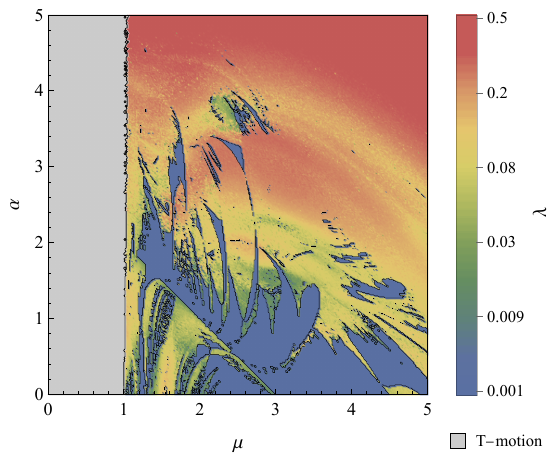}}\hspace{5pt}
	\subfigure[ $\Theta_0=\pi/4$]{\includegraphics[width=0.44\linewidth]{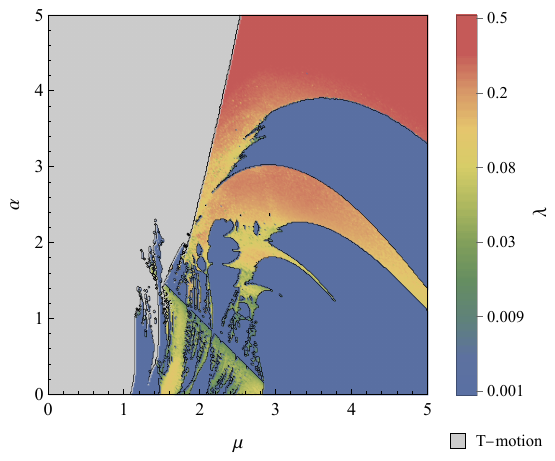}}
	\\
	\subfigure[ $\Theta_0=\pi/2$]{\includegraphics[width=0.44\linewidth]{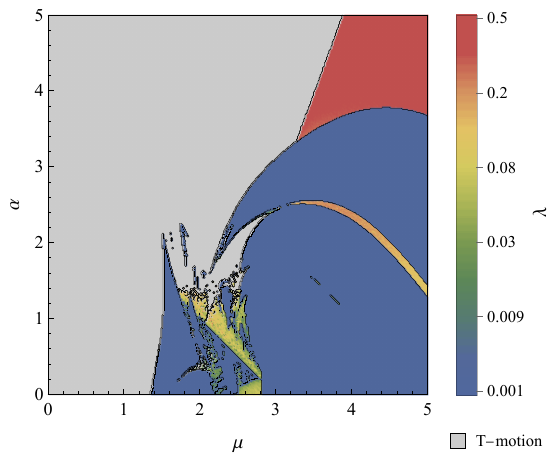}}\hspace{5pt}
	\subfigure[ $\Theta_0=3\pi/4$]{\includegraphics[width=0.44\linewidth]{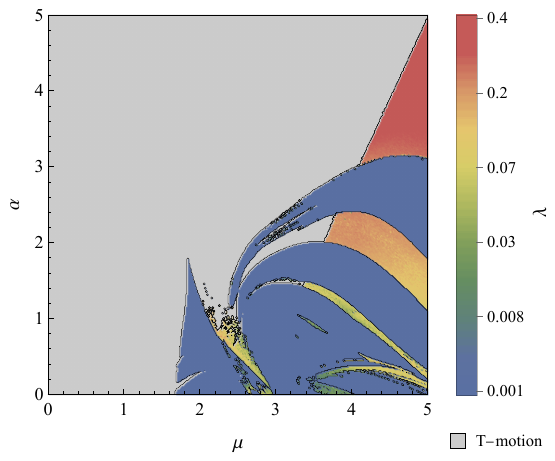}}
	\caption{(Color online) Two-dimensional dynamical maps in the
		$(\mu,\alpha)$-plane for system~\eqref{eq:Ham_sys} with fixed
		$\eta=0.5$, corresponding to the parameter-space sampling
		\eqref{eq:mu_alpha_diag}. The largest finite-time Lyapunov exponent is
		computed on a $400\times400$ grid using the initial conditions
		\eqref{eq:ini_lyap} and displayed on a logarithmic color scale. Blue regions correspond to numerically regular motion, colored regions to chaotic dynamics, and gray regions to terminating trajectories (T-motion).}
	\label{fig:parki}
\end{figure}
We fix $\eta=0.5$ and consider initial conditions of the form
\begin{align}
	\label{eq:ini_lyap}
	R(0)=\frac14,\qquad
	\Theta(0)=\Theta_0,\quad \text{where} \quad \Theta_0\in
	\left\{
	0.01\pi,\,
	0.25\pi,\,
	0.5\pi,\,
	0.75\pi
	\right\}, \qquad
	P_R(0)=P_\Theta(0)=0.
\end{align}

For each fixed value of $\Theta_0$, the largest Lyapunov exponent is
computed for every pair $(\mu,\alpha)$, provided that the corresponding
trajectory remains within the physically admissible region.

Accordingly,   we define the dynamically admissible region in the
parameter plane
\begin{align*}
	\mathcal D_{\mu,\alpha}(\Theta_0)
	:=
	\Bigl\{
	(\mu,\alpha)\in\mathbb R_+^2
	\;:\;
	0<R(\tau;\mu,\alpha,\Theta_0)<\eta
	\quad
	\text{for all }\tau\ge0
	\Bigr\}.
\end{align*}
The set $	\mathcal D_{\mu,\alpha}(\Theta_0)$
contains all parameter values for which the corresponding
trajectory remains bounded away from the geometric constraints
$R=0$ and $R=\eta$.

The corresponding parameter-space map is therefore defined by
\begin{align}
	\label{eq:mu_alpha_diag}
	(\mu,\alpha)
	\longmapsto
	\begin{cases}
		\lambda(\mu,\alpha),
		 &
		(\mu,\alpha)\in\mathcal D_{\mu,\alpha}(\Theta_0),
		\\[2mm]
		\mathrm{T\mbox{-}motion},
		 &
		(\mu,\alpha)\notin\mathcal D_{\mu,\alpha}(\Theta_0).
	\end{cases}
\end{align}
Its boundary separating bounded and terminating trajectories is given by
\begin{align*}
	\partial\mathcal D_{\mu,\alpha}(\Theta_0)
	:=
	\Bigl\{
	(\mu,\alpha)\in\mathbb R_+^2
	\;:\;
	\inf_{\tau\ge0}
	R(\tau;\mu,\alpha,\Theta_0)=0
	\;\;\text{or}\;\;
	\sup_{\tau\ge0}
	R(\tau;\mu,\alpha,\Theta_0)=\eta
	\Bigr\}.
\end{align*}
Whenever the dependence on the fixed initial angle $\Theta_0$ is clear
from the context, we simply write
$\mathcal D_{\mu,\alpha}$ and
$\partial\mathcal D_{\mu,\alpha}$.

The boundary $\partial\mathcal D_{\mu,\alpha}$  exhibits a highly intricate geometry and strong sensitivity to parameter variations, separating regions of bounded motion from those leading to escape. Although arising in an entirely different mathematical setting, this boundary plays
a conceptually similar role to the Mandelbrot set in complex dynamics~\cite{Mandelbrot:80::}, as both organize the parameter space by separating regions associated with qualitatively different dynamical behaviors.

The dynamical maps visible in Fig.~\ref{fig:parki} visualize three dynamical regimes: regular, chaotic, and terminating.
Let us focus on the first map in Fig.~\ref{fig:parki}(a).
One observes that the terminating motion (T-motion) occupies a substantial portion of the parameter space, covering approximately $20\%$ of the parameter domain $(\mu,\alpha)\in[0,5]\times[0,5]$.
This region is separated from the bounded dynamics by a relatively sharp transition, which for small values of $\alpha$ is well approximated by a nearly vertical boundary located around $\mu \approx 1$.

This behavior can be understood by noting that for the chosen initial condition, namely $\Theta_0=\pi/100$ and $R_0=\eta/2$, the system evolves in a regime corresponding to a small perturbation of the invariant manifold~\eqref{eq:M_invariant_linstab}.
In particular, for $\mu<1$, the effective restoring mechanism is insufficient to confine the motion, and the trajectory is driven towards the geometric boundary $R(t)\to \eta$, leading to termination in finite time.
This is consistent with the well-established property of the classical swinging Atwood machine with a massless string ($\alpha=0$), where terminating motion occurs when the counterweight mass is less than or equal to the oscillating mass~\cite{Tufillaro:85::, Tufillaro:88a, Tufillaro:95::}.
Consequently, the line $\mu\approx 1$ can be interpreted as a threshold separating two qualitatively distinct dynamical regimes: for $\mu<1$, the dynamics is dominated by escape towards the boundary, while for $\mu>1$, bounded motion becomes admissible and further subdivides into regular and chaotic regions depending on the value of~$\alpha$.

The parameter $\alpha$ controls the effective nonlinearity of the system and has a pronounced influence on the phase-space structure. As $\alpha$ increases, the chaotic regions gradually expand at the expense of the regular domains, while the largest Lyapunov exponent increases, reaching values up to $\lambda\approx0.55$ within the considered parameter range. This demonstrates that the string's inertia substantially enhances the dynamics' chaotic character. Moreover, the chaotic regions are not distributed uniformly; instead, they form elongated structures alternating with extended domains of regular dynamics. As will be shown in the next section, these regular domains are populated by organized families of periodic trajectories, revealing a rich resonance structure that is not visible in the Lyapunov maps alone.

The remaining panels in Fig.~\ref{fig:parki} illustrate how the
parameter-space structure changes with increasing initial angle
$\Theta_0$. For $\Theta_0=\pi/4$, shown in
Fig.~\ref{fig:parki}(b), the dynamics is no longer a small perturbation
of the invariant manifold~\eqref{eq:M_invariant_linstab}. Consequently,
the transition near $\mu\approx1$ becomes strongly deformed, and the
boundary $\partial\mathcal D_{\mu,\alpha}$ develops a much more complex  pattern.

These features become even more pronounced for
$\Theta_0=\pi/2$, shown in Fig.~\ref{fig:parki}(c). The admissible
region fragments into several disconnected components, while islands of
bounded motion appear inside the terminating region. At the same time,
the portion of the parameter space supporting chaotic bounded motion
decreases, whereas terminating motion becomes increasingly dominant.

Finally, for $\Theta_0=3\pi/4$, shown in
Fig.~\ref{fig:parki}(d), terminating motion occupies approximately
$70\%$ of the parameter domain, confining bounded trajectories to
narrow, highly structured regions. A particularly noteworthy feature is
the appearance of a large closed terminating region surrounded by bounded
motion, illustrating the increasingly complex geometry of the admissible
set $\mathcal D_{\mu,\alpha}$.

Overall, the geometry of the admissible set
$\mathcal D_{\mu,\alpha}$ depends strongly on the initial angle
$\Theta_0$, leading to qualitatively different organizations of
regular, chaotic, and terminating regions in the parameter space.
Higher values of $\Theta_0$ correspond to larger initial potential energy, which in turn allows the swinging mass to reach the geometric boundary $R=\eta$, resulting in a larger portion of the parameter space being occupied by terminating motion. As $\alpha$ increases, this effect is further amplified, as the inertia of the string enhances the ability of the mass to reach the boundary. Consequently, the interplay between $\Theta_0$ and $\alpha$ plays a crucial role in shaping the global dynamics of the system.

\subsubsection{Dynamical maps in the $(R_0,\Theta_0)$ initial-condition space}

The parameter-space maps presented above describe how the dynamics changes with the system parameters. We now complement this analysis by constructing Lyapunov maps in the initial-condition plane for fixed parameter values. The maps reveal the internal organization of the phase space, distinguishing regions of regular, chaotic, and terminating motion together with their mutual arrangement.

We now construct Lyapunov maps in the initial-condition space
$(R_0,\Theta_0)$ for the fixed parameter values
\begin{align*}
	\mu=3,\qquad
	\eta=\frac12,\qquad
	\alpha\in\left\{0,\frac12,1,2\right\},
\end{align*}
using the initial conditions
\begin{align*}
	R(0)=R_0,\qquad
	\Theta(0)=\Theta_0,\qquad
	P_R(0)=P_\Theta(0)=0,
	\qquad
	(R_0,\Theta_0)\in(0,\eta)\times\mathbb S^1.
\end{align*}

We define the dynamically admissible region in the initial-condition plane by
\begin{align*}
	\mathcal D_{R,\Theta}(\alpha)
	:=
	\Bigl\{
	(R_0,\Theta_0)\in(0,\eta)\times \mathbb{S}^1
	\;:\;
	0<R(\tau; R_0, \Theta_0,\alpha)<\eta
	\quad\text{for all }\tau\ge0
	\Bigr\},
\end{align*}
consisting of those initial conditions for which the corresponding trajectory remains within the physically admissible region. Its boundary separating bounded and terminating trajectories is given by
\begin{align*}
	\partial\mathcal D_{R,\Theta}(\alpha)
	:=
	\Bigl\{
	(R_0,\Theta_0)\in(0,\eta)\times \mathbb{S}^1
	\;:\;
	\inf_{\tau\ge0}R(\tau; R_0, \Theta_0,\alpha)=0
	\;\;\text{or}\;\;
	\sup_{\tau\ge0}R(\tau; R_0, \Theta_0,\alpha)=\eta
	\Bigr\}.
\end{align*}
This boundary exhibits a highly intricate geometry and strong sensitivity to variations in the initial conditions $(R_0,\Theta_0)$, separating bounded trajectories from those that lead to termination. In this respect, $\partial\mathcal D_{R,\Theta}$ plays a role reminiscent of Julia
sets in complex dynamics~\cite{Julia:1918,Milnor2006}, as it separates initial conditions associated with qualitatively different long-term behaviors.

Figure~\ref{fig:lyap_polar1} shows Lyapunov-based dynamical maps in the
initial-condition space $(R_0,\Theta_0)$ for
system~\eqref{eq:Ham_sys}, defined by the mapping
\begin{align}
	\label{eq:IC_map}
	(R_0,\Theta_0)
	\longmapsto
	\begin{cases}
		\lambda(R_0,\Theta_0),
		 &
		(R_0,\Theta_0)\in\mathcal D_{R,\Theta}(\alpha),
		\\[1mm]
		\mathrm{T\mbox{-}motion},
		 &
		(R_0,\Theta_0)\notin\mathcal D_{R,\Theta}(\alpha),
	\end{cases}
\end{align}
where $\lambda(R_0,\Theta_0)$ denotes the largest Lyapunov exponent
computed along the corresponding trajectory.

These maps provide a detailed representation of the phase-space structure for fixed system parameters, allowing one to distinguish regions of regular motion, chaotic dynamics, and escape.
In particular, they illustrate how variations in the initial length $R_0$ and the initial swing angle $\Theta_0$ influence the system’s behavior.
The diagrams are obtained by numerical integration of the equations of motion~\eqref{eq:Ham_sys} on a $400\times 400$ grid of initial conditions $(R_0,\Theta_0)$, for which the largest Lyapunov exponent is computed whenever the trajectory remains in the admissible region.
For better visualization, the results are presented in a polar representation of the initial-condition plane, with the radial coordinate corresponding to the initial length $R_0$ and the angular coordinate to the initial swing angle $\Theta_0$. The maps reveal three types of dynamical behavior: regular, chaotic, and terminating.  Terminating motion (T-motion), corresponding to trajectories leaving the admissible region, is shown in gray;
blue regions indicate regular bounded motion, including periodic and quasi-periodic trajectories, while regions with $\lambda>0$ correspond to chaotic dynamics with a color scale defined in~\eqref{eq:scale}.

In Fig.~\ref{fig:lyap_polar1}(a), we present the Lyapunov-based dynamical map computed for the integrable case $\alpha=0$, corresponding to a massless string.
As expected, the system exhibits either terminating or purely regular dynamics, consisting of periodic and quasi-periodic trajectories.
No chaotic regions are visible, which is  consistent with the system's integrability.
Moreover, the boundary $\partial \mathcal D_{R,\Theta}$ separating terminating and bounded motion is smooth and well-defined, and the bounded region has a characteristic cup-like shape.

The situation changes significantly when the perturbation parameter is increased to $\alpha=0.5$, meaning that the total mass of the string is half of the swinging mass.
In this case, shown in Fig.~\ref{fig:lyap_polar1}(b), the first signs of chaotic behavior appear in the form of thin layers embedded within the regular domain.
These structures indicate the onset of nonlinear effects and the gradual breakdown of integrability.
The largest Lyapunov exponent reaches maximal values up to $\lambda \approx 0.15$, particularly for initial conditions corresponding to large swing angles.
\begin{figure}[t]
	\centering
	\subfigure[ $\alpha=0$]{
		\includegraphics[width=0.48\linewidth]{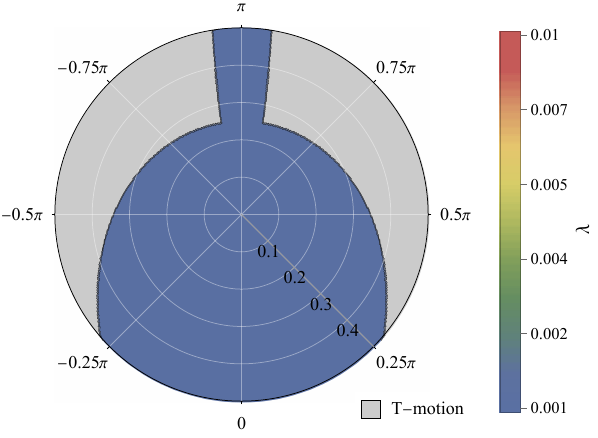}  }
	\subfigure[ $\alpha=0.5$]{\includegraphics[width=0.48\linewidth]{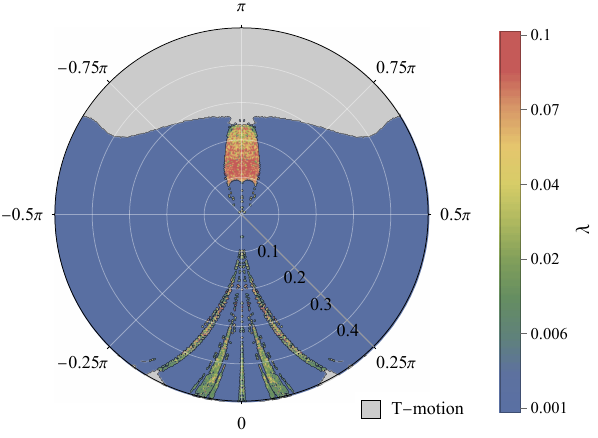}  } \\
	\subfigure[ $\alpha=1$]{\includegraphics[width=0.48\linewidth]{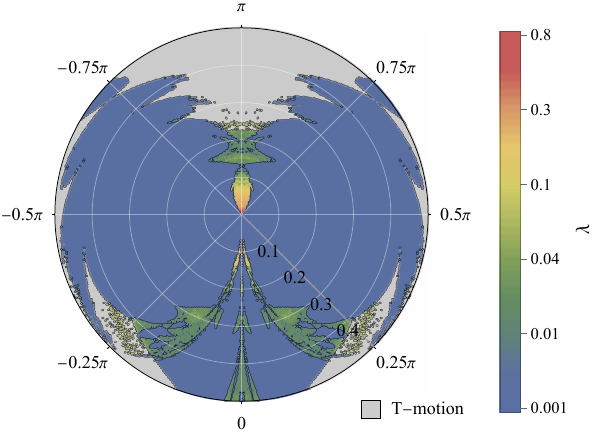}  }
	\subfigure[ $\alpha=2$]{
		\includegraphics[width=0.48\linewidth]{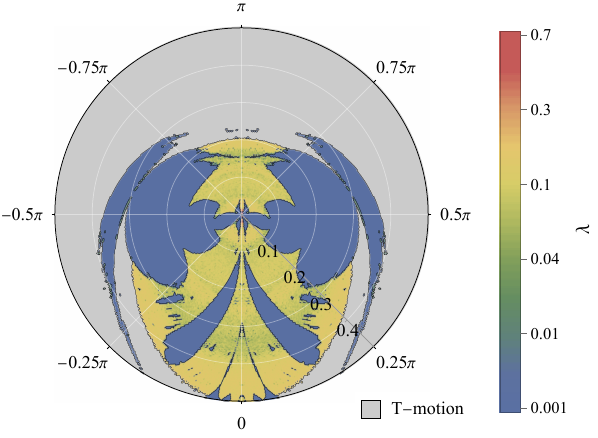}
	}	 \caption{(Color online) Lyapunov-based dynamical maps for
		system~\eqref{eq:Ham_sys} in the polar plane $(R_0,\Theta_0)$,
		constructed from the zero-velocity initial conditions
		\eqref{eq:ini_lyap} for different values of the string-to-bob mass
		ratio $\alpha$, with $\mu=3$ and $\eta=0.5$ fixed. The admissible
		domain is given by $R_0\in(0,\eta)$ and
		$\Theta_0\in(-\pi,\pi)$. The largest finite-time Lyapunov exponent is
		displayed on a logarithmic colour scale. Blue regions correspond to numerically regular motion, colored regions to chaotic dynamics, and gray regions to terminating trajectories (T-motion).
		\label{fig:lyap_polar1}}
\end{figure}

\begin{figure*}[htp]
	\centering
	\includegraphics[width=0.65
		\linewidth]{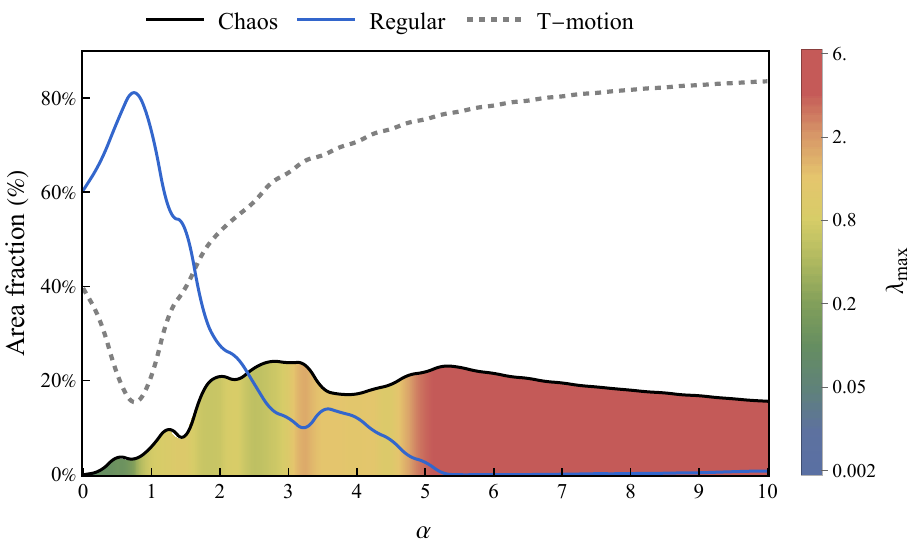}  	 \caption{
		(Color online) Fraction of the occupied area in the $(R_0,\Theta_0)$ initial-condition space as a function of $\alpha$. The black, blue, and dashed gray curves correspond to chaotic, regular, and terminating trajectories, respectively, while the shaded region under the chaotic curve represents the relative measure of chaos. The color encodes the maximal Lyapunov exponent $\lambda_{\max}$ according to the logarithmic scale shown on the right, thereby characterizing the strongest instability present in the system for a given value of $\alpha$. Consequently, the diagram simultaneously captures both the geometric extent and the dynamical intensity of chaos.
		\label{fig:percentage}}
\end{figure*}

A further increase to $\alpha=1$, where the total string mass equals the swinging mass, leads to a much more complex dynamical structure, as shown in Fig.~\ref{fig:lyap_polar1}(c).
The boundary $\partial\mathcal{D}_{R,\Theta}$ separating terminating and bounded motion develops several disconnected-looking branches, interspersed with regions of bounded dynamics. The previously smooth, cup-like boundary becomes highly irregular.
At the same time, the region of chaotic dynamics expands significantly, and the intensity of chaos increases, with the largest Lyapunov exponent reaching values up to $\lambda \approx 0.85$, indicating a strongly chaotic regime.
A particularly notable feature is the coexistence of numerous chaotic islands embedded within large regions of terminating motion.

For $\alpha=2$, corresponding to a string twice as massive as the swinging mass, the system exhibits predominantly chaotic dynamics, as shown in Fig.~\ref{fig:lyap_polar1}(d).
The terminating region occupies a substantial portion of the admissible domain, while the remaining bounded region is largely dominated by strong chaos.
These observations provide further evidence that increasing $\alpha$ enhances both the system's nonlinear character and its tendency toward escape.
Moreover, the boundaries separating regular, chaotic, and terminating regions become smoother and more coherent compared to the intermediate case $\alpha=1$.


\subsubsection{Statistical analysis of the $\alpha$-dependent global   dynamics}
With the help of Lyapunov exponents, we can also  estimate the fractions of
chaotic, regular, and terminating motion in the
$(R_0,\Theta_0)$ initial-condition plane as functions of the parameter
$\alpha$. For each fixed value of $\alpha$, the largest Lyapunov
exponent is computed over a large ensemble of uniformly distributed
initial conditions. The chaotic fraction is then defined as the ratio
of trajectories with a positive largest Lyapunov exponent to the total
number of sampled initial conditions. At the same time, terminating
trajectories are identified and included in the statistics, allowing
us to determine the  fractions of chaotic, regular, and
terminating motion. In addition, for each fixed value of $\alpha$, we
compute the maximal values of the largest Lyapunov exponent over the dynamically admissible
region,
\begin{align*}
	\lambda_{\max}(\alpha)
	=
	\max_{(R_0,\Theta_0)\in\mathcal D_{R,\Theta}(\alpha)}
	\lambda(R_0,\Theta_0;\alpha),
\end{align*}
where $\lambda(R_0,\Theta_0;\alpha)$ denotes the largest Lyapunov
exponent associated with the trajectory starting from the initial
condition $(R_0,\Theta_0)$.

The results for $\alpha \in [0,10]$ are shown in Fig.~\ref{fig:percentage}. The black, blue, and dashed gray curves correspond to chaotic, regular, and terminating trajectories, respectively, while the shaded region under the chaotic curve represents the relative measure of chaotic initial conditions. The color of this region encodes the maximal Lyapunov exponent $\lambda_{\max}$ according to the logarithmic scale shown on the right. Consequently, the diagram simultaneously captures both the geometric extent of chaos and the strongest dynamical instability present in the system for a fixed value of $\alpha$.

For small values of $\alpha$, the dynamics is predominantly regular, with only a weak presence of chaos; in particular, regular motion occupies more than $80\%$ of the available $(R_0,\Theta_0)$ area. As $\alpha$ increases, a transition to a mixed regime occurs, in which the fraction of chaotic trajectories grows rapidly and reaches its maximum at intermediate values of $\alpha$. For larger $\alpha$, the dynamics becomes increasingly dominated by terminating motion, while the regular component gradually disappears and the chaotic fraction stabilizes at an approximately constant level. Simultaneously, the intensity of chaos increases significantly with $\alpha$. In particular, the maximal Lyapunov exponent reaches values of approximately $\lambda_{\max}\approx 6$, indicating the highly chaotic nature of the system dynamics. Moreover, for $\alpha \gtrsim 5$, the value of $\lambda_{\max}$ becomes nearly constant, suggesting the emergence of an asymptotic strongly chaotic regime in which the characteristic instability saturates despite further increase of the parameter $\alpha$.

\subsection{Energy-constrained dynamical maps}
The Lyapunov maps presented in the previous subsection provide a global
overview of the dynamics over the entire space of initial conditions.
However, since trajectories corresponding to different initial conditions
generally evolve on different energy levels, such maps combine motions with
different values of the Hamiltonian. From the viewpoint of Hamiltonian
dynamics, it is therefore natural to complement this analysis by restricting
the initial conditions to a fixed energy surface.

This energy-constrained approach not only enables a direct comparison of
trajectories with the same total energy, but also reveals the geometric structure of the energetically accessible region through the associated Hill regions and zero-velocity curves.
As we shall see, these geometric objects provide a natural framework for interpreting the Lyapunov exponent maps and for understanding the transition between regular, chaotic, and terminating motions.

\subsubsection{Hill regions and zero-velocity curves}

We now turn to a complementary description of the dynamics based on Lyapunov exponent maps constructed on a fixed energy level. In contrast to the previous analysis, where the initial conditions were sampled in the $(R_0,\Theta_0)$ plane with vanishing momenta, we now impose the energy constraint and consider initial conditions of the form $(R_0,\Theta_0,P_R,0)$, where the radial momentum $P_R$ is determined from the energy integral $H=E$. This restricts the motion to a given energy surface and provides a physically consistent framework for the analysis.

This approach is closely related to the classical study of zero-velocity curves (ZVC), extensively investigated by Tufillaro in~\cite{Tufillaro:88a} for the integrable case $\alpha=0$. In that setting, the Hill curves offer a clear geometric characterization of the accessible region in the configuration space. Here, we extend this perspective to the non-integrable case $\alpha=1/2$ and examine how the geometry of the accessible region and the associated Lyapunov maps evolve with energy.

The accessible configurations are determined by the effective potential
\[
	\mathcal{V}(R,\Theta)
	=R(\mu+\alpha\eta)-\frac{\alpha}{2}R^{2}-\left(R+\frac{\alpha}{2}R^{2}\right)\cos\Theta.
\]
In contrast to the case $\alpha=0$, the quadratic dependence on $R$ complicates the analytical determination of the zero-velocity curves in polar coordinates, as the equation $V(R,\Theta)=E$ leads to nonlinear relations.

In the studied model, for a fixed energy level $E$,  the Hill region   is defined as the set of configurations for which the kinetic energy remains nonnegative,
\begin{align*}
	\mathcal{H}_E
	:=
	\Big\{
	(R,\Theta)\in (0,\eta)\times \mathbb{S}^1
	\;:\;
	V(R,\Theta)\le E
	\Big\}.
\end{align*}
Its boundary, known as the zero-velocity curve (ZVC),
\begin{align*}
	\partial \mathcal{H}_E
	:=
	\Big\{
	(R,\Theta)\in (0,\eta)\times \mathbb{S}^1
	\;:\;
	V(R,\Theta)=E
	\Big\},
\end{align*}
separates admissible configurations from energetically forbidden ones.

Lyapunov exponent dynamical maps are then constructed by sampling initial conditions $(R_0,\Theta_0)\in \mathcal{H}_E$, with $P_\Theta=0$ and $P_R$ determined from the energy constraint, ensuring that all trajectories evolve on the same energy level.

However, the Hill region $\mathcal{H}_E$ provides only a local, energy-based constraint and does not guarantee that trajectories remain in the admissible domain for all times. In particular, even if $(R_0,\Theta_0)\in \mathcal{H}_E$, the corresponding trajectory may reach the boundary $R=0$ or $R=\eta$ in finite time, leading to terminating motion.
To distinguish such cases, we introduce the set of dynamically admissible initial conditions
\begin{align*}
	\mathcal{D}_{R,\Theta}(E)
	:=
	\Big\{
	(R_0,\Theta_0)\in \mathcal{H}_E
	\;:\;
	0<R(\tau;R_0,\Theta_0,E)<\eta\quad\text{for all }\tau\ge0
	\Bigr\},
\end{align*}
which consists of those initial configurations for which the motion remains bounded away from the geometric constraint.

Thus, while the Hill region $\mathcal{H}_E$ determines the set of
energetically accessible configurations, the subset
$\mathcal{D}_{R,\Theta}(E)$ consists of those initial conditions whose
trajectories remain dynamically admissible for the prescribed energy
level.
In particular, the complement
$\mathcal{H}_E\setminus\mathcal{D}_{R,\Theta}(E)$ corresponds to initial
conditions leading to terminating trajectories.
This distinction plays
a crucial role in the interpretation of the Lyapunov dynamical maps, as
it allows us to separate regular and chaotic dynamics from terminating
motion within the energetically accessible region.

The geometry of the Hill region depends crucially on the value of the
energy. In particular, the first contact of the energetically accessible
region with the upper geometric constraint $R=\eta$ occurs at the critical
energy level
\begin{align*}
	E_{\mathrm{crit}}
	=
	\min_{\Theta} V(\eta,\Theta)
	=
	\eta(\mu-1).
\end{align*}
This minimum is attained at $\Theta=0$. Hence, at $E=E_{\mathrm{crit}}$,
the zero-velocity curve reaches the boundary $R=\eta$ for the first time.

For $E<E_{\mathrm{crit}}$, the Hill region remains separated from the
upper boundary $R=\eta$, so that this boundary is energetically
inaccessible. At $E=E_{\mathrm{crit}}$, the zero-velocity curve touches
$R=\eta$, producing a transition in the accessible configuration region.
For $E>E_{\mathrm{crit}}$, the energetically accessible region intersects
the boundary $R=\eta$, and trajectories may therefore reach this geometric
constraint in finite time, leading to terminating motion.

It should be noted, however, that $R=0$ is not a zero-velocity boundary
generated by the energy constraint. Rather, it corresponds to a geometric
singularity of the system and may also lead to terminating motion. Thus,
the Hill region determines the energetically accessible part of the
configuration space, whereas the dynamically admissible set
$\mathcal{D}_{R,\Theta}(E)$ is obtained by further excluding initial conditions whose
trajectories reach either $R=0$ or $R=\eta$ in finite time.
Consequently, the zero-velocity curve provides the natural geometric framework for constructing and interpreting the Lyapunov exponent maps on
a fixed energy level, while the distinction between $\mathcal{H}_E$ and
$\mathcal{D}_{R,\Theta}(E)$ allows us to separate regular and chaotic dynamics from
terminating motion.

\subsubsection{Dynamical maps on energy levels for fixed $\alpha$}
Figure~\ref{fig:lyap_polar3} presents Lyapunov exponent maps computed
on several fixed energy levels for $\alpha=1/2$, with $\mu=3$ and
$\eta=0.5$. The maps reveal a strong interplay between the geometry of
the energetically accessible region and the dynamical organization of
the phase space.
Thus, the diagrams presented in Fig.~\ref{fig:lyap_polar3} can be interpreted as an initial-condition space map defined by
\begin{align}
	\label{eq:IC_energy_diag}
	(R_0,\Theta_0)
	\;\longmapsto\;
	\begin{cases}
		\lambda(R_0,\Theta_0)
		 & \text{if } (R_0,\Theta_0)\in \mathcal{D}_{R,\Theta}(E),                         \\[0.2cm]
		\text{T-motion}
		 & \text{if } (R_0,\Theta_0)\in \mathcal{H}_E \setminus \mathcal{D}_{R,\Theta}(E), \\[0.2cm]
		\text{Forbidden}
		 & \text{if } (R_0,\Theta_0)\notin \mathcal{H}_E,
	\end{cases}
\end{align}
where $0 < R_0 < \eta$ and  $\mathcal{D}_{R,\Theta}(E)$ denotes the set of dynamically admissible initial conditions on the fixed energy level $E$.

\begin{figure}[t]
	\centering
	\subfigure[ $E=0.5$]{
		\includegraphics[width=0.465\linewidth]{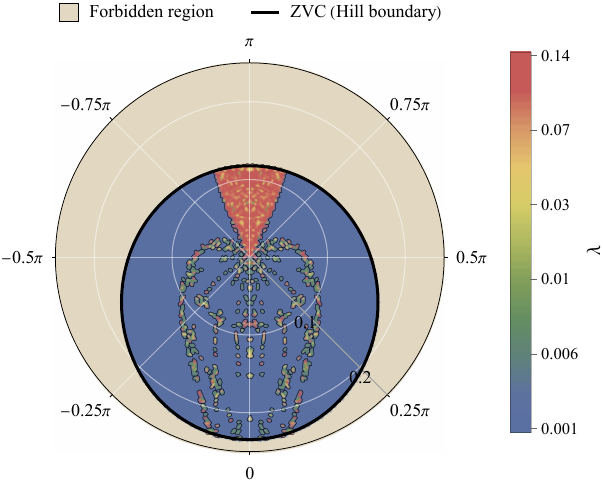}
	}
	\subfigure[ $E=1$]{
		\includegraphics[width=0.465\linewidth]{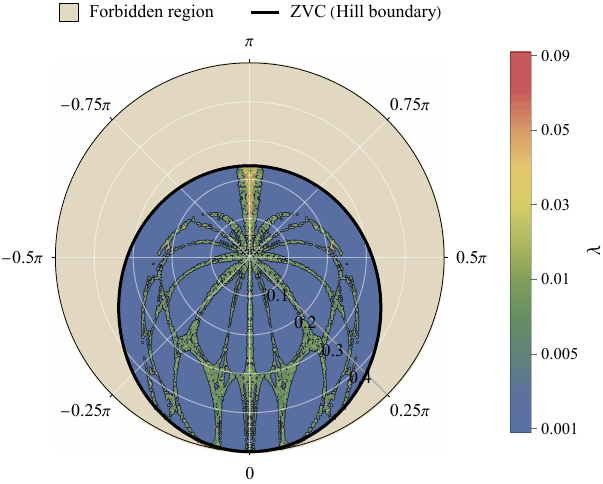}
	} \\
	\subfigure[ $E=1.025$]{
		\includegraphics[width=0.465\linewidth]{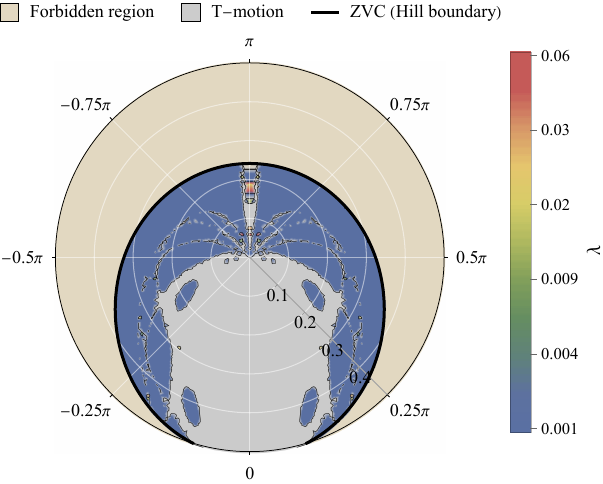}
	}
	\subfigure[ $E=1.1$]{
		\includegraphics[width=0.465\linewidth]{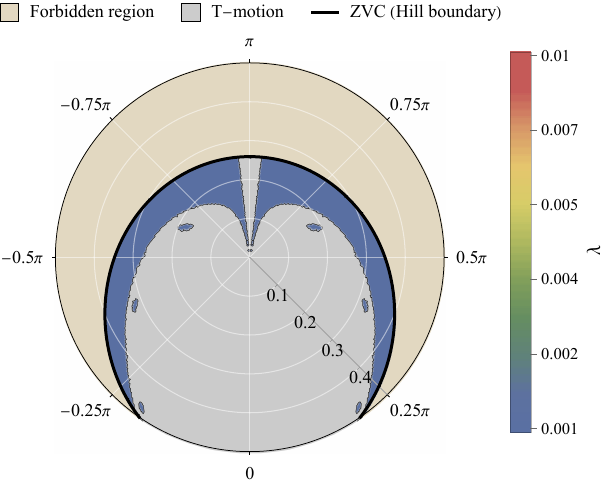}
	}
	\caption{(Color online) Lyapunov exponent maps in the polar plane $(R_0,\Theta_0)$ for $\alpha=0.5$, $\mu=3$, and $\eta=0.5$, computed on constant energy levels. The black curve denotes the zero-velocity curve (ZVC), separating the energetically accessible region from the forbidden region (light brown). The logarithmic color scale represents the largest Lyapunov exponent $\lambda$. Blue regions correspond to numerically regular motion, colored regions to chaotic dynamics, and gray regions to terminating trajectories (T-motion).
		\label{fig:lyap_polar3}}
\end{figure}

For low energy, $E=0.5$ (Fig.~\ref{fig:lyap_polar3}a), the accessible region is relatively small and bounded away from the geometric constraint $R=\eta$. The corresponding zero-velocity curve is smooth and only slightly deformed from a circular shape, reflecting the angular dependence of the potential. In particular, the radial extent of the accessible region varies with $\Theta$, attaining its maximum near $\Theta=0$ and its minimum near $\Theta=\pi$. The deformation remains moderate at this energy level.
The dynamics is predominantly regular, with only a small fraction of chaotic trajectories. The chaotic behavior is mainly localized in a narrow angular sector near $\Theta=\pi$, forming a characteristic wedge-shaped structure. Outside this region, only a few weakly chaotic trajectories are observed within the regular domain, indicating that regular dynamics still dominates the accessible phase space.

As the energy increases to $E=1$ (Fig.~\ref{fig:lyap_polar3}b), the Hill region expands and reaches the critical configuration. In fact, $E=1$ corresponds to the critical energy level $E_{\mathrm{crit}}$, at which the zero-velocity curve becomes tangent to the boundary at the point $(R,\Theta)=(\eta,0)$.
At this energy, the dynamical structure undergoes a qualitative change. The previously scattered regions of weakly chaotic behavior merge into coherent, extended structures forming distinct branches. These branches originate near the map's center and extend outward toward the zero-velocity curve, forming a characteristic web-like pattern spanning a large portion of the accessible region.
Although the dynamics is still predominantly regular, chaotic trajectories are now more widespread and organized along these elongated structures. The overall picture reflects a transition from a near-integrable regime to a more structured and globally distributed weak chaos, closely correlated with the geometry of the Hill boundary.

At energies slightly above the critical value, $E=1.025$ (Fig.~\ref{fig:lyap_polar3}c), the topology of the Hill region changes, and a large portion of initial conditions leads to terminating motion. The dynamically admissible region $\mathcal{D}_{R,\Theta}$ shrinks significantly, and the remaining non-terminating trajectories are confined to a thin region near the upper boundary of the Hill region. The chaotic structures are strongly suppressed, and most of the surviving motion appears to be regular.
Interestingly, in contrast to many Hamiltonian systems, where increasing the energy leads to a gradual growth of global chaos and a transition from regular to strongly chaotic dynamics, no such behavior is observed here. Instead, the increase in energy primarily results in the expansion of terminating trajectories, effectively removing a large portion of phase space from the set of dynamically admissible motions, thereby suppressing the development of widespread chaos.

For larger energy, $E=1.1$ (Fig.~\ref{fig:lyap_polar3}d), terminating motion dominates almost the entire accessible region. In particular, trajectories initiated near $\Theta=0$ terminate directly, forming a narrow wedge-shaped region extending from the center toward the zero-velocity curve.
The remaining dynamically admissible set is confined to a thin crescent-shaped region located away from $\Theta=0$ and bounded by the zero-velocity curve. Within this region, the motion is predominantly regular, as indicated by the uniformly small values of the Lyapunov exponent.
Interestingly, inside the large terminating domain, a few small isolated islands of regular motion are still visible. At this energy level, no significant chaotic behavior is observed, and the dynamics is effectively reduced to a coexistence of regular and terminating trajectories.

Overall, the sequence of plots demonstrates that the energy-driven deformation of the Hill region decisively controls the transition from predominantly regular dynamics, through a mixed regime with significant chaos, to a regime dominated by terminating trajectories.

\subsubsection{Statistical analysis of the energy-dependent global dynamics}
\begin{figure*}[t]
	\centering 			\includegraphics[width=0.6	\linewidth]{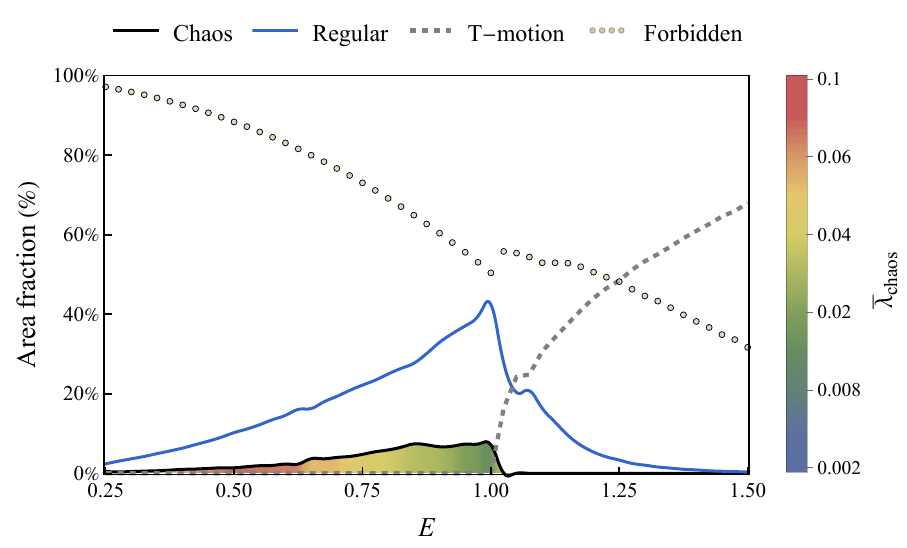} \caption{
(Color online) Fraction of the occupied area in the $(R_0,\Theta_0)$ initial-condition space as a function of the energy $E$, for fixed parameters $\mu=3$, $\eta=0.5$, and $\alpha=0.5$. The black, blue, dashed gray, and dotted light-brown curves correspond to the fractions of chaotic, regular, terminating, and forbidden trajectories, respectively. The shaded region under the chaotic curve highlights the fraction of chaotic trajectories. The color of the chaotic curve encodes the mean Lyapunov exponent $\overline{\lambda}_{\mathrm{chaos}}(E)$, computed over all chaotic trajectories at the corresponding energy level according to the logarithmic scale shown on the right, thereby combining information about both the extent and the characteristic intensity of chaos.
\label{fig:percentage2}}
\end{figure*}
In contrast to the previous analysis performed for varying $\alpha$ in the $(R_0,\Theta_0)$ initial-condition plane with vanishing initial velocities, we now investigate the dynamics as a function of the energy $E$, while keeping the parameters $\mu=3$, $\eta=0.5$, and $\alpha=0.5$ fixed. In this case, the initial radial momentum $P_R$ is no longer prescribed, but is determined from the energy constraint $H=E$. Consequently, for each energy level, the admissible initial conditions are restricted by the geometry of the corresponding Hill region. Since the main objective is now to characterize the global dynamical behavior associated with a given energy level, rather than the strongest local instability, it is more natural to employ the mean Lyapunov exponent instead of the maximal one used in the previous parameter-dependent analysis.

Fig.~\ref{fig:percentage2} presents the fraction of the occupied area
in the $(R_0,\Theta_0)$ initial-condition space as a function of the
energy $E$. The black, blue, dashed grey, and dotted light-brown
curves correspond to chaotic, regular, terminating, and forbidden
regions, respectively, while the shaded region under the chaotic curve
represents the   fraction of chaotic trajectories. The color encodes the mean Lyapunov exponent
\begin{align*}
	\overline{\lambda}_{\mathrm{chaos}}(E)
	=
	\frac{1}{N_{\mathrm{chaos}}}
	\sum_{\substack{(R_0,\Theta_0)\in\mathcal{D}_{R,\Theta}(E) \\
			\lambda(R_0,\Theta_0;E)>0}}
	\lambda(R_0,\Theta_0;E),
\end{align*}
where $\lambda(R_0,\Theta_0;E)$ denotes the largest Lyapunov exponent of the trajectory starting from $(R_0,\Theta_0)$, $\mathcal{D}_{R,\Theta}(E)$ is the set of dynamically admissible initial conditions at the energy level $E$, and $N_{\mathrm{chaos}}$ is the number of chaotic trajectories satisfying $\lambda>0$. Thus, the average is computed exclusively over chaotic trajectories, avoiding the artificial reduction that would result from including regular trajectories with $\lambda\approx0$. Consequently, the diagram provides two complementary statistical measures: the shaded area represents the fraction of chaotic trajectories, whereas the color indicates their mean Lyapunov exponent.

For low energies, only a relatively small portion of the $(R_0,\Theta_0)$ plane belongs to the dynamically accessible region. Most of the configuration space remains forbidden due to the energy constraint, and therefore the dynamics is dominated by the forbidden area. In this regime, the surviving admissible trajectories are predominantly regular, while the fraction of chaotic motion is negligible. As the energy increases, the Hill region gradually expands, allowing a larger portion of the $(R_0,\Theta_0)$ space to become dynamically accessible. Consequently, a coexistence of regular and chaotic dynamics emerges. The fraction of regular trajectories grows steadily, whereas the chaotic component develops mainly near the boundaries separating qualitatively different types of motion. Simultaneously, the mean Lyapunov exponent also increases, indicating not only a larger chaotic area but also a stronger degree of instability.

A particularly important transition occurs near the critical energy level $E_{\mathrm{crit}}=1$. At this value, the pattern of the Hill region changes, corresponding to the first contact of the zero-velocity curve with the geometric boundary $R=\eta$. Around this threshold, the chaotic component attains its maximal relative extent, while the regular bounded dynamics still occupies a significant portion of the admissible phase-space region. However, once the energy exceeds the critical value, the structure of the dynamics changes drastically. Even for slightly supercritical energies, the chaotic component rapidly disappears and the fraction of regular bounded trajectories decreases sharply. At the same time, terminating motion grows very rapidly, progressively replacing the forbidden, regular, and chaotic regions. This behavior is physically natural, since sufficiently large energies allow trajectories to reach the geometric boundary of the system, leading to termination of the motion. In fact, for $E=2$ (not shown in the figure), nearly $90\%$ of the accessible initial-condition space corresponds to terminating trajectories, indicating that bounded long-term dynamics survives only on a very small subset of the configuration space.

\subsubsection{Lyapunov maps at the critical energy $E=E_{\mathrm{crit}}$}

We conclude the analysis by considering Lyapunov maps at the critical energy $E=E_{\mathrm{crit}}=1$, fixing $\mu=3$ and $\eta=1/2$, while treating $\alpha$ as the control parameter. This allows us to isolate the influence of the string mass on both the geometry of the Hill region and the resulting dynamics.

We consider initial conditions $(R_0,\Theta_0,P_R,0)$ constrained by the energy relation $H=E$. The corresponding Lyapunov exponent maps are shown in Fig.~\ref{fig:lyap_alpha}. For $\alpha=0$, the system is integrable: the Hill region is nearly circular, and the dynamics is entirely regular, with $\lambda\approx 0$. For $\alpha=1$, integrability is lost and a coexistence of regular and chaotic dynamics emerges as shown in Fig.~\ref{fig:lyap_alpha}(b). Simultaneously, the Hill boundary becomes slightly deformed. Regular and chaotic dynamics coexist, with the latter forming extended filamentary structures strongly correlated with the geometry of the zero-velocity curve.

A qualitative transition occurs near $\alpha=4$. Fig.~\ref{fig:lyap_alpha}(c) shows that the dynamics becomes predominantly chaotic, while the zero-velocity curve undergoes a critical geometric transformation: it develops a cone-like singularity at $\Theta=0$, where it touches the boundary $R=\eta$ in a non-transversal manner. The transition near $\alpha=4$ is therefore associated not only with a rapid increase of the chaotic component, but also with a substantial reorganization of the energetically accessible region.

This behavior can be understood by restricting the effective potential
to the invariant line $\Theta=0$, for which
\begin{align*}
	\mathcal{V}(R,0)
	=
	R\left(2+\frac{\alpha}{2}\right)-\alpha R^2.
\end{align*}
For the critical energy $E=1$, the intersections of the corresponding
zero-velocity curve satisfy
\begin{align*}
	\alpha R^2
	-
	\left(2+\frac{\alpha}{2}\right)R
	+1=0,
	\qquad
	(R_1=\frac12=\eta),
	\qquad
	R_2=\frac{2}{\alpha}.
\end{align*}
For $\alpha<4$, the second root lies outside the physical interval
$0<R<\eta$, so the zero-velocity curve intersects the boundary only at
$R_1=\eta$ . At the critical value $\alpha=4$, the two roots merge at the equlibrium~\eqref{eq:delta},  $R_*=\eta$. Since
\begin{align*}
	\mathcal{V}(R_*,0)=E,
	\qquad
	\mathcal{V}_R(R_*,0)=0,
\end{align*}
the point $(R,\Theta)=(\eta,0)$ is the degenerate point of the
energy level. Consequently, the zero-velocity curve develops a
singularity at this point, giving rise to the cone-like structure
visible in Fig.~\ref{fig:lyap_alpha}.

For $\alpha>4$, the second root enters the interval $0<R<\eta$, so the zero-velocity curve closes strictly inside the domain and no longer reaches the boundary $R=\eta$. At the same time, the  equilibrium radius $R_*$ lies between $R_1$ and $R_2$ and for this case
becomes energetically inaccessible, since
$
	\mathcal{V}(R_*,0)>E.
$
Consequently, the Hill region shrinks, as visible in Fig.~\ref{fig:lyap_alpha}(d), and becomes progressively more rounded as $\alpha$ increases. This reflects the increasing dominance of the quadratic terms in $R$, which reduce the angular anisotropy of the potential and shift the energetically accessible region toward smaller radii.

Therefore, the qualitative changes observed in the Lyapunov maps are
a direct consequence of the $\alpha$-dependent deformation of the
effective potential, which modifies the geometry and topology of the
corresponding zero-velocity curves.

\begin{figure}[t]
	\centering
	\subfigure[ $\alpha=0$]{
		\includegraphics[width=0.465\linewidth]{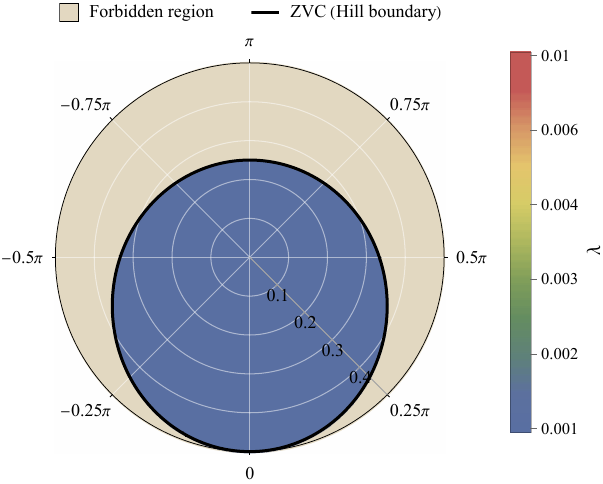}
	}
	\subfigure[ $\alpha=1$]{
		\includegraphics[width=0.465\linewidth]{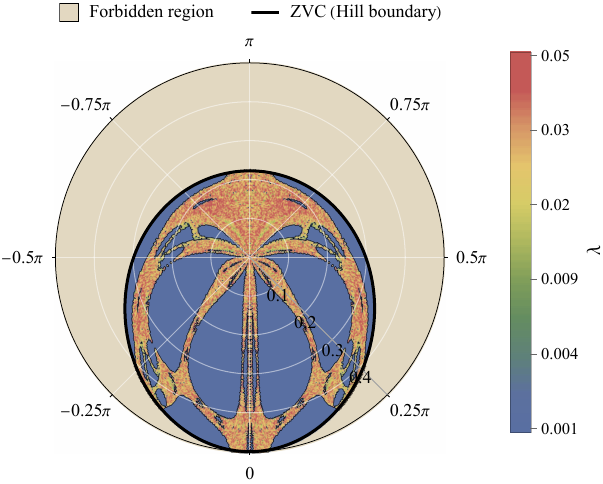}
	} \\
	\subfigure[ $\alpha=4$]{
		\includegraphics[width=0.465\linewidth]{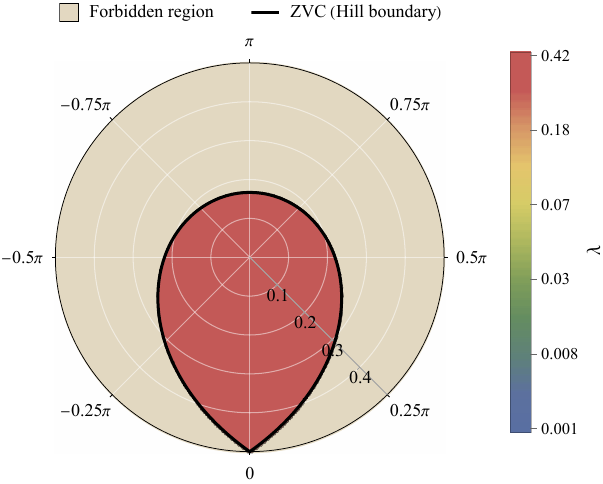}
	}
	\subfigure[ $\alpha=20$]{
		\includegraphics[width=0.465\linewidth]{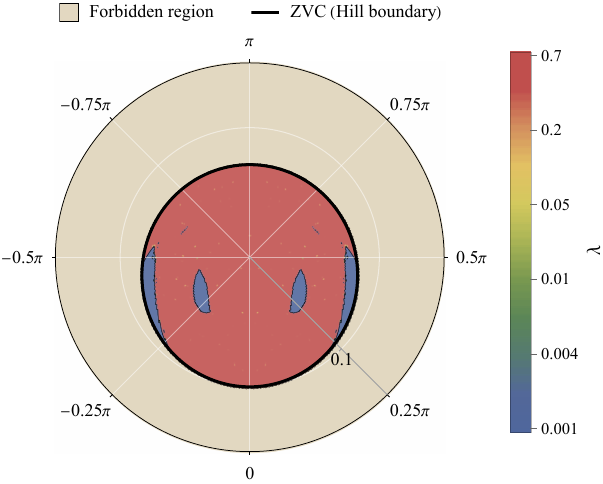}
	}
	\caption{(Color online) Lyapunov exponent maps in the polar plane $(R_0,\Theta_0)$ at the fixed critical energy level $E=E_{\mathrm{crit}}=1$, for $\mu=3$ and $\eta=0.5$, while varying the parameter $\alpha$.   The black curve denotes the zero-velocity curve (ZVC), separating the energetically accessible region from the forbidden region (light brown). The color scale (logarithmic) corresponds to the largest Lyapunov exponent~$\lambda$. Blue regions indicate regular motion, whereas colored regions correspond to chaotic dynamics.
	\label{fig:lyap_alpha}}
\end{figure}

\section{Lyapunov Refined Maps (LRM)}
The method of Lyapunov exponents is one of the fundamental tools in the
study of nonlinear Hamiltonian systems, providing a quantitative measure
of dynamical instability and the intensity of chaos. In the previous
sections, Lyapunov exponent maps enabled us to identify regions of
regular, chaotic, and terminating motion, as well as to investigate the
global organization of the phase space and its dependence on the system
parameters. In particular, they revealed the influence of the Hill
regions and zero-velocity curves on the overall dynamical structure.

Despite these advantages, classical Lyapunov maps possess an important
intrinsic limitation. Although they reliably distinguish regular motion
from chaos, they provide no information about the internal organization
of the regular regions. In Hamiltonian systems, both periodic and
quasi-periodic trajectories are characterized by vanishing largest
Lyapunov exponents and therefore appear indistinguishable in standard
Lyapunov maps. Consequently, resonance islands, periodic windows,
bifurcation structures, and other fine topological features remain
hidden inside the uniformly blue regions corresponding to
$\lambda\approx0$.

To overcome this limitation, we introduce the
\emph{Lyapunov Refined Map} (LRM), which augments the classical
Lyapunov map with additional topological information obtained from the
associated Poincar\'e map. Instead of assigning only the largest
Lyapunov exponent to each sampled point, the LRM additionally records
the geometrical period of the corresponding periodic orbit whenever
such an orbit is detected. Consequently, periodic and quasi-periodic
motions, which are indistinguishable in classical Lyapunov maps,
become naturally separable within the regular regions.

Unlike classical bifurcation diagrams, which represent one-dimensional
slices obtained by varying a single control parameter, the LRM is a
genuinely two-dimensional representation. Each sampled point is
simultaneously assigned the largest Lyapunov exponent together with the
detected geometrical period of the corresponding Poincar\'e orbit.
Thus, the LRM preserves the global information on dynamical stability
provided by Lyapunov exponents while revealing the resonance structure
hidden inside the regular regions.

The construction of the LRM consists of two ingredients. First, we
recall the phase-parametric (bifurcation) diagram technique developed
in our previous works~\cite{Szuminski:23::,Szuminski:24::}, which
provides the geometric interpretation of periodic, quasi-periodic, and
chaotic trajectories. Next, we introduce the recurrence-based
algorithm used to detect the geometrical period of periodic orbits and
show how this information is incorporated into the Lyapunov Refined
Map.

\subsection{Bifurcation diagrams}

\begin{figure*}[t]
	\centering 			\subfigure[    Global bifurcation diagrams]{
		\includegraphics[width=0.85	\linewidth]{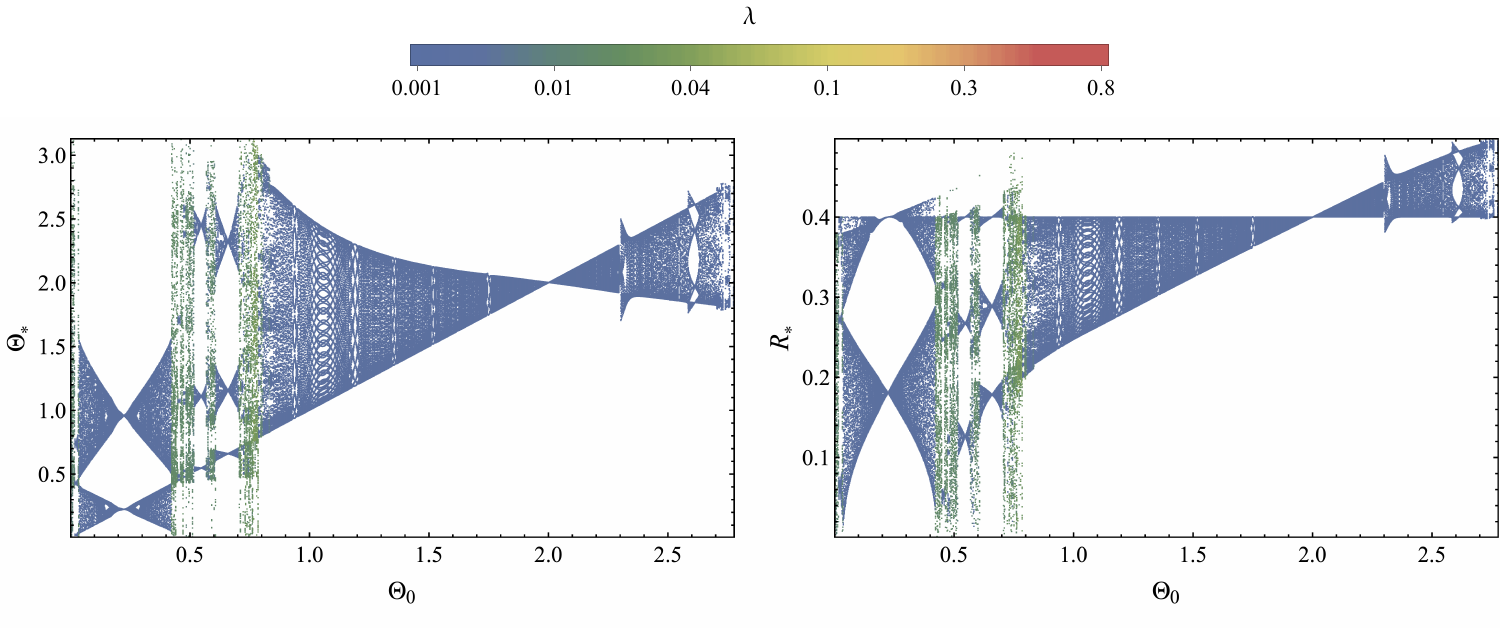} }
	\subfigure[    Enlargements of the initial interval ]{
		\includegraphics[width=0.86	\linewidth]{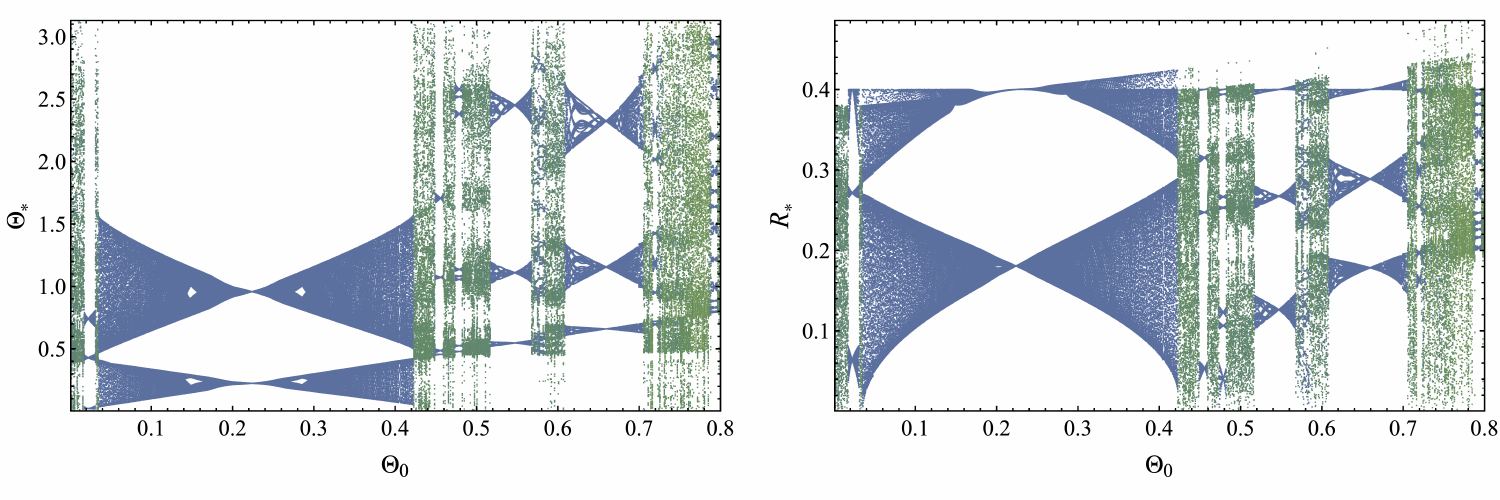} }
	\caption{Bifurcation diagram of system~\eqref{eq:Ham_sys} as a function of the initial swing angle $\Theta_{0}\in(0,\pi)$ together with its magnification over the interval $\Theta_{0}\in(0,\pi/4)$. The initial conditions and parameter values are the same as those used in Fig.~\ref{fig:lyap_polar1}(c), i.e., the diagram corresponds to the angular direction of the Lyapunov map for the fixed initial radial coordinate $R_0=0.4$. Here, $\Theta(t_*)$ and $R(t_*)$ denote the values of the state variables at the instants $t_*$ satisfying $P_{\Theta}(t_*)=0$ and $\dot P_{\Theta}(t_*)<0$. The bifurcation diagram is combined with the largest Lyapunov exponent $\lambda$, whose logarithmic color scale corresponds to the values shown in the Lyapunov map of Fig.~\ref{fig:lyap_polar1}(c).  The figure reveals the coexistence of periodic, quasi-periodic, and chaotic motions together with periodic windows embedded within chaotic layers. Representative examples of periodic, quasi-periodic, and chaotic trajectories are shown in Fig.~\ref{fig:polar}.}
	\label{fig:Biff}
\end{figure*}
\begin{figure*}[t]	\centering
	\subfigure[$ R_0=0.4,\, \Theta_0=0.66$]{
		\includegraphics[width=0.32\linewidth]{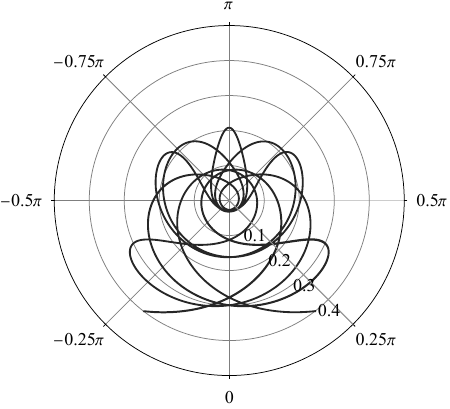}}
	\subfigure[$ R_0=0.4,\,  \Theta_0=0.4$]{
		\includegraphics[width=0.32\linewidth]{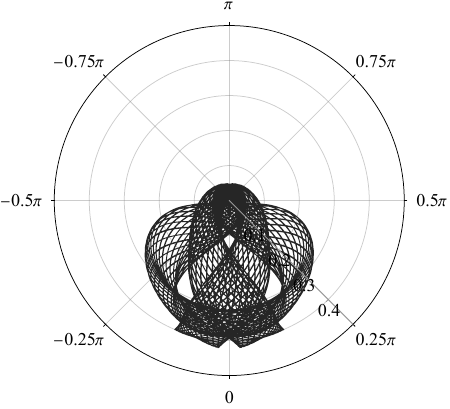}}
	\subfigure[$  R_0=0.4,\, \Theta_0=0.6$]{
		\includegraphics[width=0.32\linewidth]{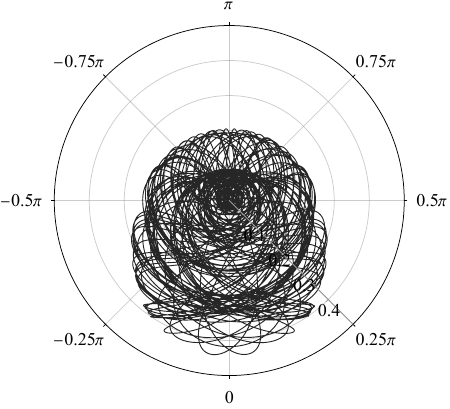}}
	\caption{Representative trajectories corresponding to selected initial conditions from the bifurcation diagrams shown in Fig.~\ref{fig:Biff}. Panels (a)–(c) illustrate characteristic examples of periodic, quasiperiodic, and chaotic motion, respectively. The corresponding initial conditions are indicated below each panel.
		\label{fig:polar}}
\end{figure*}
To complement the information provided by Lyapunov exponent maps, we employ phase-parametric (bifurcation) diagrams constructed from a chosen Poincar\'e section plane. For each value of the control parameter, the trajectory is integrated over a sufficiently long time interval, and all intersections with the section are recorded after discarding an initial transient. The corresponding values of the selected section variable are then plotted as a function of the parameter. Repeating this procedure over a dense parameter grid yields a bifurcation diagram that visualizes the evolution of the underlying invariant structures.

Periodic trajectories appear as a finite number of isolated branches, whereas quasi-periodic motions generate continuous curves associated with invariant tori. Chaotic trajectories, in turn, fill bounded regions of the diagram with dense point clouds reflecting the irregular exploration of the accessible phase space. Thus, bifurcation diagrams provide a direct geometric representation of resonances, periodic windows, bifurcation cascades, and transitions between different dynamical regimes.

The bifurcation diagrams obtained for the heavy Swinging Atwood Machine governed by~\eqref{eq:Ham_sys} are presented in Fig.~\ref{fig:Biff}.
They were constructed for the fixed parameter values $\mu=3$, $\eta=1/2$ and $\alpha=1$ by choosing the initial conditions as $(0.4, \Theta_{0}, 0, 0)$ where the initial swing angle $\Theta_{0}$ is varied over the interval
$
	\Theta_{0}\in(0,\pi),
$
which corresponds to a one-dimensional slice of the Lyapunov diagram shown in Fig.~\ref{fig:lyap_polar1}(c) along the angular direction.
For each value of \(\Theta_{0}\), the equations of motion were numerically  integrated over a sufficiently long time interval, and the values of \(\Theta(t_{*})\) and \(R(t_{*})\) were recorded whenever the trajectory crossed $
	P_{\Theta}=0,
	\,
	\dot P_{\Theta}<0.
$
The resulting bifurcation diagrams are additionally coloured according to the largest Lyapunov exponent, enabling a direct comparison between the geometric structures revealed by the bifurcation diagrams and the dynamical information contained in the corresponding Lyapunov diagram presented in Fig.~\ref{fig:lyap_polar1}(c).

An excellent agreement between both approaches is clearly visible. The blue regions, corresponding to \(\lambda\approx0\), are associated with regular dynamics, whereas the green bands coincide with parameter intervals for which the largest Lyapunov exponent becomes positive, indicating chaotic motion. Unlike the Lyapunov exponent map, however, the bifurcation diagrams reveal the internal structure of the regular regions. In particular, they clearly distinguish isolated periodic solutions, represented by a finite number of discrete branches, from quasiperiodic trajectories, which generate continuous invariant curves corresponding to invariant tori in the phase space. Therefore, the combined use of Lyapunov maps and bifurcation diagrams provides a significantly more complete description of the phase-space organization than either method alone.

The global bifurcation diagrams shown in Fig.~\ref{fig:Biff}(a) demonstrate that the most intricate dynamics is concentrated within the relatively narrow interval
$
	\Theta_{0}\lesssim\frac{\pi}{4},
$
while for larger initial swing angles the dynamics is dominated by broad regular regions interrupted only by isolated resonance structures. For this reason, Fig.~\ref{fig:Biff}(b) presents a magnification of the initial interval, revealing the fine organization of the phase space. The enlarged view reveals periodic windows embedded within chaotic layers, quasiperiodic regions bounded by resonant periodic orbits, and transitions between regular and chaotic dynamics. The excellent correspondence with the Lyapunov diagram confirms the reliability of the proposed visualization and demonstrates that the chaotic layers detected by positive Lyapunov exponents are organized around intricate networks of periodic and quasiperiodic solutions.

Representative trajectories corresponding to selected points of the bifurcation diagrams are presented separately in Fig.~\ref{fig:polar}. They illustrate the three characteristic types of motion identified by the combined analysis, namely periodic, quasiperiodic, and chaotic dynamics, and provide a direct geometric interpretation of the structures observed in Fig.~\ref{fig:Biff}. It should be emphasized that the number of branches visible in a bifurcation diagram should not, in general, be identified with the period of the corresponding orbit of the Poincar\'e map. Since the bifurcation diagrams represent projections of the dynamics onto selected observables, several distinct points of a periodic orbit may project onto the same branch. Consequently, the apparent number of branches may differ from the actual period detected by the Lyapunov Refined Map algorithm developed in the next subsection.

\subsection{Numerical algorithm for detecting periodic trajectories}

To enrich the dynamical information contained in the Lyapunov maps, the
computation of the largest Lyapunov exponent is supplemented by a
numerical search for trajectories exhibiting geometrical periodicity. The algorithm
is formulated below for the initial-condition space
$\mathcal D_{R,\Theta}(\alpha)$ introduced in the previous section,
although the same procedure applies to any other admissible
two-dimensional sampling domain, including the parameter plane
$\mathcal D_{\mu,\alpha}(\Theta_0)$.

Starting from the previously computed Lyapunov map, the analysis is first
restricted to the numerically regular part of the admissible domain,
\[
	\mathcal D_{R,\Theta}^{\,\mathrm{reg}}(\alpha)
	=
	\left\{
	(R_0,\Theta_0)\in
	\mathcal D_{R,\Theta}(\alpha)
	\;:\;
	\lambda(R_0,\Theta_0)
	\leq
	\lambda_{\mathrm{thr}}
	\right\},
\]
where $\lambda_{\mathrm{thr}}$ is a prescribed threshold for the largest
finite-time Lyapunov exponent. In the computations presented below, this
threshold is fixed at
$
	\lambda_{\mathrm{thr}}=0.005.
$
Thus, only trajectories classified as numerically non-chaotic are
subjected to the periodicity-detection procedure.

For each initial condition
\[
	(R_0,\Theta_0)
	\in
	\mathcal D_{R,\Theta}^{\,\mathrm{reg}}(\alpha),
\]
the equations of motion are integrated and intersections with the
oriented Poincar\'e section
\[
	\Sigma_{+}
	=
	\left\{
	(R,\Theta,P_R,P_\Theta):\,
	P_\Theta=0,\, \,
	\dot P_\Theta<0
	\right\}
\]
are recorded. Since
$
	\dot\Theta=P_\Theta/D(R),
$
the first two conditions select angular turning points at which
$P_\Theta$ crosses zero in the prescribed direction.

Although the Hamiltonian dynamics evolves in the full phase space, the
present algorithm is designed to detect the geometrical periodicity of
the swinging mass in the configuration plane. Accordingly, each
intersection with $\Sigma_{+}$ is represented by the Cartesian position
\[
	X_n=(x_n,y_n)
	=
	\left(
	R_n\cos\Theta_n,
	R_n\sin\Theta_n
	\right).
\]
The use of Cartesian coordinates avoids angular wrapping and the
coordinate singularity inherent in polar coordinates, while allowing
the recurrence test to be formulated in terms of the standard Euclidean
distance.

An initial transient of duration
$
	T_{\mathrm{tr}}=300
$
is discarded before the section points are collected. For each
non-terminating trajectory, the integration is continued until either
$
	N_{\max}=1000
$
intersections with $\Sigma_{+}$ have been obtained or the maximal
integration time
$
	T_{\max}=5000
$
has been reached. Trajectories producing fewer than
$
	N_{\min}=300
$
section points are excluded from the periodicity analysis. The use of
long intersection sequences considerably reduces the probability of
false periodicity detection caused by transient near-recurrences or
slowly drifting quasi-periodic trajectories.

The numerical parameters used in the periodicity-detection procedure
were selected after an extensive preliminary analysis involving
convergence tests, sensitivity studies, and comparisons of the detected
periodic structures for different integration lengths and tolerance
values. The final parameter set was chosen to provide a robust
classification of periodic candidates while keeping the computational
cost manageable.

Let
\[
	X_1,X_2,\ldots,X_N\in\mathbb R^2
\]
denote the successive Cartesian positions recorded on the oriented
section. For a fixed integer $k\geq1$, the recurrence error is defined
as
\[
	\varepsilon_k
	=
	\max_{1\leq n\leq N-k}
	\left\|
	X_{n+k}-X_n
	\right\|,
\]
where $\|\cdot\|$ denotes the Euclidean norm in the Cartesian
configuration plane. The integer $k$ therefore measures the geometrical
period of the projected orbit on the oriented section $\Sigma_{+}$.

The maximal tested period is restricted to
\[
	k_{\max}=30.
\]
The restriction \(k_{\max}=30\) was adopted as a compromise between
resolving high-order resonance structures and maintaining a reliable
number of recurrence samples per residue class.
The integers $k=1,\ldots,k_{\max}$ are examined in increasing order, so
that the first accepted value is the smallest detected geometrical
period.

All geometric tolerances are expressed relative to the characteristic
diameter \(2\eta\) of the configuration domain. For the computations
reported here, \(\eta=1/2\), so that \(2\eta=1\).
A preliminary recurrence candidate is identified whenever
\[
	\varepsilon_k<\varepsilon_{\mathrm{rep}},\qquad \varepsilon_{\mathrm{rep}}=0.02.\]
The candidate is subsequently subjected to more restrictive geometric
tests.
For a candidate period \(k\), the sequence of section points is partitioned
into \(k\) subsequences,
\[
\mathcal C_j^{(k)}
=
\left\{
X_j,
X_{j+k},
X_{j+2k},
\ldots
\right\},
\qquad
j=1,\ldots,k.
\]
Each subsequence collects points that are expected to converge to the same
periodic point if the trajectory is periodic with period \(k\).

The spread of these recurrence classes is quantified by the effective
diameter
\[
d_k
=
\max_{1\leq j\leq k}
\left[
2\max_{X\in\mathcal C_j^{(k)}}
\left\|
X-\overline X_j
\right\|
\right],
\]
where \(\overline X_j\) denotes the arithmetic mean of the points in
\(\mathcal C_j^{(k)}\).
A candidate is retained only if
\[
	d_k<d_{\mathrm{tol}},
	\qquad
	d_{\mathrm{tol}}=0.005.
\]

To distinguish true periodic recurrence from a slowly drifting
quasi-periodic trajectory, the growth of the recurrence-class diameter
is also monitored. Let $d_k^{(1/2)}$ denote the effective diameter
computed from the first half of the intersection sequence and
$d_k^{(1)}$ the corresponding value computed from the entire sequence.
The candidate is accepted only if
\[
	\frac{d_k^{(1)}}{d_k^{(1/2)}}<g_{\mathrm{tol}},
	\qquad
	g_{\mathrm{tol}}=1.1,
\]
where \(d_k^{(1/2)}\) is computed from the first half of the recorded
section points and \(d_k^{(1)}\) from the complete sequence.
Thus, the effective diameter of every recurrence class is required to
remain nearly unchanged as additional intersections are included.
The tolerance \(g_{\mathrm{tol}}=1.1\) permits at most a \(10\%\)
increase in the estimated diameter, thereby rejecting slowly drifting
quasi-periodic trajectories while retaining numerically stable periodic
recurrences.

Finally, the accepted candidate must satisfy the stricter recurrence
condition
\[
	\varepsilon_k<\varepsilon_{\mathrm{tol}},
	\qquad
	\varepsilon_{\mathrm{tol}}=0.005.
\]
Consequently, an orbit is classified as geometrically periodic of
period $k$ only if $k$ is the smallest positive integer satisfying the
recurrence, diameter, and diameter-growth criteria.

For compactness, let
\(\mathscr A_k(R_0,\Theta_0)\) denote the conjunction
\[
	\varepsilon_k<\varepsilon_{\mathrm{tol}},
	\qquad
	d_k<d_{\mathrm{tol}},
	\qquad
	\frac{d_k^{(1)}}{d_k^{(1/2)}}<g_{\mathrm{tol}}.
\]
The set of
detected periodic candidates may then be written as
\begin{align}
	\label{eq:LRM_set}
	\mathcal P_{\mathrm{raw}}
	=
	\Bigl\{
	(R_0,\Theta_0,k)
	:\; &
	(R_0,\Theta_0)
	\in
	\mathcal D_{R,\Theta}^{\,\mathrm{reg}}(\alpha),\qquad
	k
	=
	\min
	\left\{
	j\in\{1,\ldots,k_{\max}\}:
	\mathscr A_j(R_0,\Theta_0)
	\right\}
	\Bigr\}.
\end{align}
Here, $k$ is the detected geometrical period of the configuration-space
projection sampled on the oriented Poincar\'e section. It should
therefore not be interpreted as the period of an unoriented projection
that identifies section points with opposite radial momenta.

Periodic candidates with the same detected period are subsequently
grouped into families according to their geometrical similarity. For
each candidate, the center and radius of the complete Cartesian
intersection set are defined by
\[
	C
	=
	\frac{1}{N}
	\sum_{n=1}^{N}X_n,
	\qquad
	\rho
	=
	\max_{1\leq n\leq N}
	\|X_n-C\|.
\]
Two candidates with the same period are assigned to the same family
whenever
\[
	\|C_1-C_2\|<\varepsilon_{\mathrm{fam}},
	\qquad
	|\rho_1-\rho_2|<\rho_{\mathrm{tol}},
\]
where
\[
	\varepsilon_{\mathrm{fam}}
	=
	\rho_{\mathrm{tol}}
	=
	0.005.
\]
From each family, only the representative with the smallest recurrence
error $\varepsilon_k$ is retained. The final output of the algorithm is
the set
\[
	\mathcal P
	=
	\left\{
	(R_0,\Theta_0,k)\in\mathcal P_{\mathrm{raw}}
	:
	\varepsilon_k
	=
	\min_{(R,\Theta,k)\in\mathcal F}
	\varepsilon_k
	\right\},
\]
where $\mathcal F$ denotes the numerical family containing
$(R_0,\Theta_0,k)$. Thus, each numerical family contributes exactly one
representative to the set $\mathcal P$, namely the candidate with the
smallest recurrence error among all candidates belonging to the same numerical family. Each element of $\mathcal P$ consists of an
initial condition together with the detected geometrical period of the
corresponding periodic trajectory.

Although the construction has been formulated for the
initial-condition domain
$\mathcal D_{R,\Theta}^{\,\mathrm{reg}}(\alpha)$, the same procedure can
be applied to any admissible two-dimensional sampling space. In
particular, in the next subsection it is used in the parameter plane
$\mathcal D_{\mu,\alpha}^{\,\mathrm{reg}}(\Theta_0)$ to construct the
corresponding Lyapunov Refined Map. A detailed mathematical and computational description of the general
LRM framework, together with its numerical properties and applications
to Hamiltonian systems, will be presented in a forthcoming
paper~\cite{Szuminski:26LRM::}. The corresponding open-source
implementation of the algorithms used for the present Heavy Atwood
Machine model is publicly available at
\url{https://doi.org/10.18150/7INU21}.

\subsection{Construction of Lyapunov Refined Maps}

\begin{figure*}[t]	\centering
	\includegraphics[width=0.75	\linewidth]{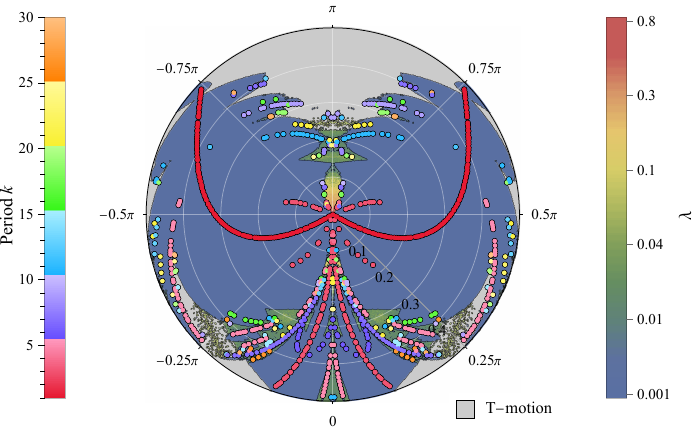}
	\caption{(Color online) Lyapunov Refined Map (LRM) constructed from the Lyapunov diagram shown in Fig.~\ref{fig:lyap_polar1}(c).
		The background color map represents the largest Lyapunov exponent $\lambda$ in the $(R_0,\Theta_0)$-plane, while gray regions denote terminating motion (T-motion).
		Superimposed colored points correspond to elliptic periodic solutions embedded within the regular regions of the phase space.
		The color of each point encodes the geometrical period $k$ of the corresponding orbit, as indicated by the accompanying color bar.
		The figure reveals organized families of resonance curves and accumulation structures inside the regular domains.
		The polar plots of exemplary periodic orbits are depicted in Fig.~\ref{fig:periodicki_polar}.
		\label{fig:periodicki}}
\end{figure*}

Applying the periodicity detection procedure described in the previous
subsection to the initial-condition plane $(R_0,\Theta_0)$ yields the
Lyapunov Refined Map shown in Fig.~\ref{fig:periodicki}. The LRM combines
the Lyapunov map defined by~\eqref{eq:IC_map} with the set
$\mathcal P$ of detected periodic trajectories given by
\eqref{eq:LRM_set}. Accordingly, the background represents the largest
Lyapunov exponent, whereas the superimposed colored points indicate the
locations of the detected periodic trajectories. Their colors encode the
geometrical period $k$ of the corresponding orbit of the Poincar\'e map.

Compared with the classical Lyapunov map, the proposed visualization
provides substantially richer dynamical information. While the Lyapunov
exponent distinguishes regular, chaotic, and terminating motion, the
additional periodicity analysis resolves the internal organization of the
regular regions. In particular, the uniformly blue domains of the
Lyapunov map are shown to contain an intricate network of periodic
orbits organized into resonance curves, branches, and compact clusters.

As anticipated, all detected periodic trajectories lie inside the
regular regions of the Lyapunov map,  where the largest Lyapunov exponent remains below the prescribed
regularity threshold. More importantly, the refined
representation demonstrates that these regular regions possess a rich
internal organization rather than forming homogeneous domains.
Periodic trajectories are arranged into well-defined resonance curves,
chains, and clusters embedded within the surrounding quasi-periodic
regions. In this way, the LRM uncovers the resonance structure hidden
inside the uniformly blue regions of the classical Lyapunov diagram and
provides a considerably more detailed picture of the regular dynamics.

The LRM reveals several characteristic features of the resonance
organization. Low-period periodic orbits are organized into smooth
resonance curves extending over large portions of the phase space,
whereas higher-period orbits form increasingly intricate families
accumulating near the boundaries between regular and chaotic motion.
Resonance branches frequently split, merge, and terminate, producing a
hierarchical web of periodic structures. Dense clusters of high-order
periodic trajectories are especially visible near the edges of chaotic
layers, indicating the presence of complicated resonance interactions.
Examples of representative periodic trajectories corresponding to
selected points of the LRM are presented in
Fig.~\ref{fig:periodicki_polar}. They illustrate the geometric
appearance of several resonance families detected by the proposed
algorithm.

The Lyapunov Refined Map therefore provides a direct geometric
visualization of the resonance structure of the phase space. Unlike classical Lyapunov maps, which classify
trajectories solely according to their dynamical stability, the proposed
representation simultaneously identifies the location, geometrical
period, and organization of periodic trajectories embedded inside the
regular domains. In this way, the LRM uncovers the fine topological
structure of the phase space while preserving the global information on
chaos contained in the Lyapunov exponent.

\begin{figure*}[t]	\centering
	\subfigure[$  k=2,\,  R_0=0.22,\, \Theta_0=1.29$]{
		\includegraphics[width=0.32\linewidth]{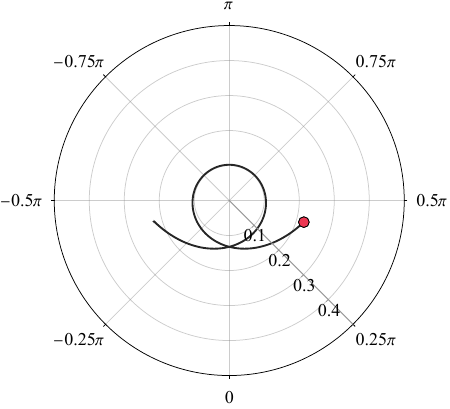}}
	\subfigure[$ k=4,\,R_0=0.44,\,  \Theta_0=0.26$]{
		\includegraphics[width=0.32\linewidth]{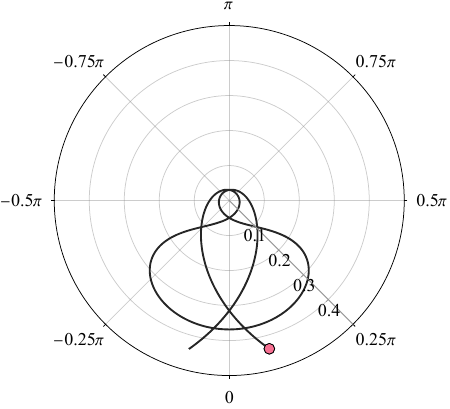}}
	\subfigure[$  k=6,\,  R_0=0.48,\, \Theta_0=0.05$]{
		\includegraphics[width=0.32\linewidth]{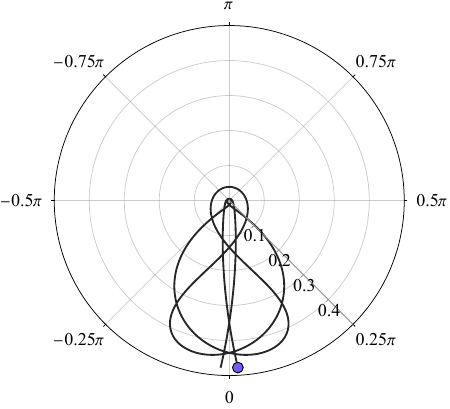}}\\
	\subfigure[$ k=10,\,R_0=0.45,\,  \Theta_0=0.57$]{
		\includegraphics[width=0.32\linewidth]{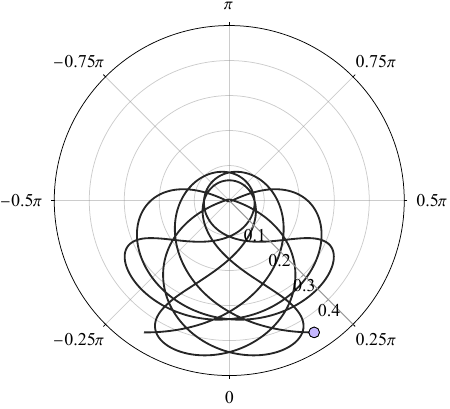}}
	\subfigure[$  k=16,\,  R_0=0.45,\, \Theta_0=1.3$]{
		\includegraphics[width=0.32\linewidth]{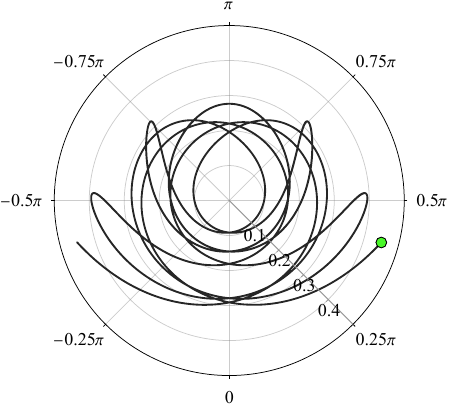}}
	\subfigure[$ k=28,\,R_0=0.31,\,  \Theta_0=2.86$]{
		\includegraphics[width=0.32\linewidth]{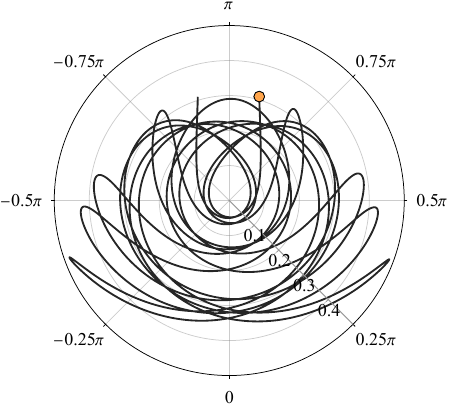}}
	\caption{(Color online)    Representative periodic trajectories identified within the Lyapunov Refined Map presented in Fig.~\ref{fig:periodicki}. The panels show periodic solutions of different periods $k$, illustrating the progressive increase in the geometrical complexity of the dynamics. Each colored point denotes the corresponding initial condition $(R_0,\Theta_0)$ selected from the LRM in Fig.~\ref{fig:periodicki}. The trajectories are displayed in polar coordinates $(R,\Theta)$ for the fixed parameters $\mu=3$, $\eta=1/2$, and $\alpha=1$.
		\label{fig:periodicki_polar}}
\end{figure*}

\begin{figure*}[t]	\centering
	\subfigure[Global view]{
		\includegraphics[width=0.465	\linewidth]{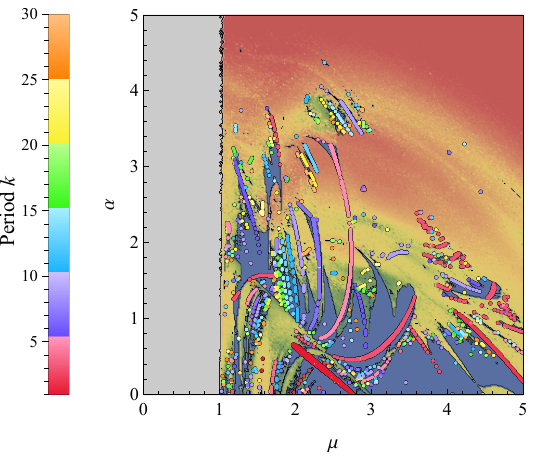}}
	\subfigure[Magnification of the central part]{
		\includegraphics[width=0.51\linewidth]{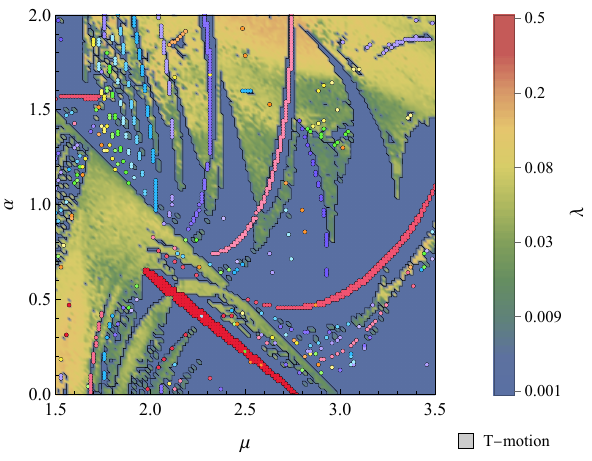}}  \caption{(Color online) Global view and magnification of the Lyapunov Refined Map (LRM) constructed from the Lyapunov diagram shown in Fig.~\ref{fig:parki}(a).
		The background color map represents the largest Lyapunov exponent $\lambda$ in the $(\mu,\alpha)$-plane, using the same logarithmic color scale as in Fig.~\ref{fig:parki}(a); blue regions correspond to regular dynamics, warmer colors indicate increasing chaoticity, while gray regions denote terminating motion (T-motion).
		Superimposed colored points correspond to detected periodic solutions embedded within the regular regions of the phase space.
		The color of each point encodes the geometrical period $k$ of the corresponding orbit, as indicated by the accompanying period color bar.
		\label{fig:LRM_parki}}
\end{figure*}

The same periodicity detection procedure can be applied to the parameter
plane $(\mu,\alpha)$ while keeping the initial conditions fixed, yielding
the LRM shown in Fig.~\ref{fig:LRM_parki}. As in the initial-condition
map, the background represents the largest Lyapunov exponent, whereas
the superimposed colored points indicate the detected periodic
trajectories, with the color encoding their geometrical period $k$.

The parameter-space LRM presented in Fig.~\ref{fig:LRM_parki}(a) shows
that periodic solutions occupy only selected regions of the parameter
space and are organized into well-defined resonance curves embedded
within the regular domains. Its magnification shown in
Fig.~\ref{fig:LRM_parki}(b) demonstrates that these structures possess a
remarkably rich hierarchical organization. Individual resonance curves
split, merge, and accumulate, producing dense resonance webs that remain
completely hidden in the corresponding Lyapunov diagram.

A clear correlation is observed between the distribution of periodic
trajectories and the geometry of the regular regions. Broad regular
domains are typically populated by low-period resonances, whereas narrow
filamentary regions contain predominantly higher-period solutions. As a
result, the LRM provides a much finer characterization of the parameter
space than the Lyapunov map alone, revealing the resonance organization
underlying the regular dynamics and identifying parameter values
associated with specific periodic motions.

Overall,  the parameter-space Lyapunov Refined Map demonstrates
that the resonance structure of the heavy SAM depends strongly on the
system parameters. It provides a global overview of how periodic
families are distributed throughout the parameter space and identifies
regions associated with low- and high-order resonances. In this way, the
LRM complements the classical Lyapunov map by revealing the resonance
organization hidden inside the regular parameter domains.

\section{Integrability}

The numerical investigations presented in the previous sections, including Lyapunov exponent maps, Lyapunov Refined Maps, Poincar\'e sections, and bifurcation diagrams, consistently reveal a rich mixture of regular and chaotic dynamics in the heavy swinging Atwood machine. In particular, the appearance of extended chaotic regions strongly suggests the absence of Liouville integrability for a broad range of parameter values.

Nevertheless, numerical evidence alone cannot establish non-integrability.
Since all computations are necessarily performed on finite grids in parameter
space and for finite integration times, they cannot exclude the possibility
that exceptional parameter values exist for which the system becomes
Liouville integrable. Such situations are well known in Hamiltonian dynamics.
Indeed, the classical swinging Atwood machine with a massless string provides
a remarkable example: although its dynamics is generically non-integrable, it
admits an exceptional Liouville integrable case for the mass ratio
$\mu=3$.

This observation naturally motivates a rigorous analytical study of the heavy
SAM, whose dynamics depends additionally on the string mass parameter
$\alpha$. The central question is whether the inclusion of string inertia
preserves any exceptional integrable regime or whether it destroys
integrability completely.

To answer this question, we employ the Morales--Ramis theory based on
differential Galois methods. This approach provides necessary conditions for
Liouville integrability by relating it to the differential Galois group of
the normal variational equations obtained by linearising the Hamiltonian
system along a non-stationary particular solution. If the identity component
of this group is non-Abelian, then the Hamiltonian system cannot possess a
complete set of meromorphic first integrals in involution.

An important advantage of the heavy SAM is that it admits explicit
non-stationary particular solutions, allowing the variational equations to be
derived and analyzed in a fully constructive manner. Consequently, the model
provides an excellent framework for confronting numerical signatures of chaos
with rigorous analytical obstructions to Liouville integrability. The
Morales--Ramis theory has become one of the fundamental tools in the study of
Hamiltonian systems and has been successfully applied to numerous problems in
classical mechanics and dynamical systems; see, for example,~\cite{Maciejewski:11::,Maciejewski:05::,10.1063/5.0200592,
	mp:13::e,Combot:18::,Szuminski:24::,Szuminski:25::JSV_VLDP,Szuminski2026}.
We state the following theorem.
\begin{theorem}
	Let $\mu,\eta$ and $\alpha$ be  positive parameters with $\alpha\neq 0$.
	Then the Hamiltonian system governed by Hamiltonian~\eqref{eq:HH}, describing the Swinging Atwood Machine with a
	massive string, is not Liouville integrable in the class of first integrals
	that are meromorphic functions of the phase-space variables.
\end{theorem}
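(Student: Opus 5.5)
The plan is to apply the Morales--Ramis theorem along the non-stationary particular solution~\eqref{eq:non_stat_part_sol}. This solution lies on the invariant manifold $\mathcal N$ from~\eqref{eq:M_invariant_linstab}, and its radial component $R(\tau)$ is given by~\eqref{eq:rr_1}. The argument has four stages: derive the normal variational equation, rationalise it, bring it to normal form, and run the Kovacic algorithm.

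\textbf{Step 1: normal variational equation.} Linearising~\eqref{eq:Ham_sys} along $\vGamma(\tau)$, the variational system splits, because $\mathcal N$ is invariant and $\partial_\Theta\mathcal V=0$ on it. The tangential block in $(R,P_R)$ is solvable. The normal block in $(\Theta,P_\Theta)$ reads $\dot\Theta=P_\Theta/D(R(\tau))$, $\dot P_\Theta=-Q(R(\tau))\Theta$. Equivalently,
\[
\frac{\rmd}{\rmd\tau}\Bigl(D(R)\,\dot\Theta\Bigr)+Q(R)\,\Theta=0 .
\]
It suffices to show that, for one suitably chosen energy level $E$, the identity component of the differential Galois group of this equation is non-Abelian. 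I will aim to show that the group is in fact $\operatorname{SL}(2,\mathbb C)$.

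\textbf{Step 2: rationalisation.} I will take $z=R(\tau)$ as the new independent variable. By~\eqref{eq:ee} and~\eqref{eq:rel},
\[
\dot R^2=\omega_0^2\bigl((R-\delta)^2-A^2\bigr)=:\omega_0^2 f(R),
\]
where $f$ is a quadratic polynomial. Using $D'=2Q$, the equation becomes
\[
\Theta''+\Bigl(\tfrac{f'}{2f}+\tfrac{D'}{D}\Bigr)\Theta'+\frac{Q}{\omega_0^2 fD}\,\Theta=0 .
\]
This has rational coefficients. Its singular points are $z=\delta\pm A$, $z=0$ (a double zero of $D$), $z=-3/\alpha$, and $z=\infty$. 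The change of variable is a finite branched covering, so it preserves the identity component of the Galois group.

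To reduce the number of singularities, I will choose the energy conveniently. The choice $E=0$ gives $A=\delta$, so the root $\delta-A$ collides with $z=0$ and $f=z(z-2\delta)$. If this degenerates badly, I will keep $E$ generic instead; a single admissible level is enough for the theorem. The standard substitution $\Theta=w\exp\bigl(-\tfrac12\int p\bigr)$ then gives the normal form $w''=r(z)w$, with $r\in\mathbb C(z)$ written explicitly in terms of $\alpha,\mu,\eta$.

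\textbf{Step 3: Kovacic algorithm.} I will compute the pole orders of $r$ and the local exponent differences $\Delta$ at each singular point. I expect $z=-3/\alpha$ and $z=2\delta$ to be regular singularities with $\Delta=1/2$. I expect $z=0$ to carry a parameter-dependent $\Delta$ coming from $Q(0)=0$ and $D\sim z^2$, and $z=\infty$ to carry a $\Delta$ depending on $\alpha$ through the leading behaviour $Q/(fD)\sim 3/(2z^3)\cdots$.

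Case~3 of Kovacic (finite primitive groups) will be excluded by the necessary conditions on the orders of the poles and at infinity, or by showing that some $\Delta$ is not of the form $1/n$ with $n\le 12$. Case~2 will be excluded by showing that no admissible combination of the sets $E_c$ produces a non-negative integer $d$, or that the resulting polynomial equation has no solution. Case~1 requires a non-negative integer $d=e_\infty-\sum e_c$ together with a polynomial solution $P$ of the associated linear equation. Here I expect the conditions to force a specific relation among the parameters, which I will then show is incompatible with $\alpha>0$ by substituting back. This also confirms that the exceptional case $\mu=3$ only survives at $\alpha=0$.

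\textbf{Main obstacle.} The hardest part is Case~1 (and possibly Case~2) of Kovacic, uniformly in the three parameters. The exponents at $z=0$ and $z=\infty$ may become integers or half-integers on special parameter curves, and there a degree-$d$ polynomial $P$ could a priori exist. My first attempt will be to exploit the freedom in $E$. For each exceptional parameter relation, I will show that the coefficients of the polynomial equation for $P$ depend non-trivially on $E$ (or on $A$), so a generic energy level rules out polynomial solutions. The residual finitely many cases will be checked by direct computation of the recurrence for the coefficients of $P$. Once all Kovacic cases fail, the Galois group of the normal variational equation is $\operatorname{SL}(2,\mathbb C)$. Its identity component is then non-Abelian, and the Morales--Ramis theorem yields the statement of the theorem.
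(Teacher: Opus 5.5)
\textbf{The gap is in Step~3.} Your outline (Morales--Ramis along the radial solution on $\mathcal N$, rationalisation by $z\propto R$, normal form, Kovacic) is the paper's, and Steps~1--2 are correct; the paper's $z=-\alpha R/3$ merely sends $R=-3/\alpha$ to $z=1$. But Step~3 is a list of expectations rather than an argument, and those expectations are wrong exactly where it matters.

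\emph{Local data.} At $R=-3/\alpha$ the term $D'/D$ has residue $1$ while $f\neq0$. The indicial equation is therefore $\rho^2=0$, so $\Delta=0$, not $1/2$. At infinity, $p:=\frac{f'}{2f}+\frac{D'}{D}\sim 4/R$, and your own estimate $Q/(\omega_0^2 fD)=O(R^{-3})$ shows that the zeroth-order term does not enter the indicial equation. Hence $\Delta_\infty=3$ for all parameters. At $R=0$ with generic $E$, the exponents are $0,-1$ and the double pole of $r$ cancels, so $\ord(0)=1$. Thus no exponent difference depends on $\mu,\eta,\alpha$.

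\emph{Exclusion of Cases~2 and~3.} For generic $E$ the tools you list do not work. All Case~3 conditions of Lemma~\ref{thm:kovacic_b} are satisfied. The test ``$\Delta\neq1/n$'' is no obstruction to finiteness, since integer differences without logarithms are allowed. In Case~2, the choice $e_\infty=8$, $e_0=4$, $e_{-3/\alpha}=2$, $e_{\delta\pm A}=1$ gives $d=0$. The missing idea is that $\Delta=0$ forces a logarithm, i.e.\ a nontrivial unipotent local monodromy. That rules out finite and dihedral groups at once.

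\emph{Case~1.} Only one choice survives: $d=0$ with $\omega=\frac1R+\frac{1}{2(R+3/\alpha)}+\frac{1}{4(R-\delta-A)}+\frac{1}{4(R-\delta+A)}=\frac12 p$. The candidate is therefore $\Theta\equiv\mathrm{const}$, and $\omega'+\omega^2=r$ holds iff $Q\equiv0$, which is impossible. This is the paper's ``$\mathcal M=0$'' contradiction. So for generic $E$ your ``main obstacle'' does not arise.

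\textbf{Your choice $E=0$ is what creates that obstacle.} There $f=R(R-2\delta)$, a root of $f$ collides with $R=0$, and the exponent difference becomes $\Delta_0=\sqrt{9/4+2\mathcal M/(\mu+\alpha\eta-1)}$. Case~1 then has admissible data with $d\in\mathbb Z_{\ge0}$ on the infinite family $\Delta_0=k+\frac12$, $k\ge2$, with $d$ growing in $k$. For the paper's own values $\mu=3$, $\eta=\frac12$, one gets $\Delta_0=\frac52$ and a degree-$2$ polynomial problem for every $\alpha>0$. Once $E$ is frozen, the ``freedom in $E$'' you plan to use is no longer available. Moreover, $E=0$ is inadmissible when $\mu+\alpha\eta=1$: then $A=\delta=0$ and the ``solution'' sits at the singular point $R=0$.

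\textbf{The fix} is to keep $E$ generic. Avoid the finitely many values where singular points collide ($E=0$, $E=E_0$, $f(-3/\alpha)=0$) or where the residue of $r$ at $R=0$ vanishes. The argument above then closes in a few lines, and it is precisely the paper's proof.
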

We now present the main steps of the proof, omitting only the standard
technical details of the differential Galois computations. The interested reader may find a comprehensive exposition of these tools in references~\cite{Morales:99::,Put:99::}, as well as in the appendices of~\cite{mp:13::c}.

\subsection{Variational equations along $\vGamma(\tau)$}
Let $(\xi_R,\xi_\Theta,\xi_{P_R},\xi_{P_\Theta})$ denote variations of
$(R,\Theta,P_R,P_\Theta)$. Linearising the vector field associated with
the Hamiltonian system~\eqref{eq:Ham_sys} along the particular solution
$\vGamma(\tau)$, defined in~\eqref{eq:non_stat_part_sol}, yields the first variational equations (VE)
\begin{align}
	\label{eq:var1}
	\dot\xi_R
	=
	\dfrac{1}{\mathcal{M}}\,\xi_{P_R}, \qquad
	\dot\xi_\Theta
	=
	\dfrac{1}{D(R)}\,\xi_{P_\Theta},   \qquad
	\dot\xi_{P_R}
	=
	2\alpha\,\xi_R,                    \qquad
	\dot\xi_{P_\Theta}
	=
	-\,Q(R)\,\xi_\Theta.
\end{align}

The tangent direction to the particular solution $\vGamma(\tau)$ lies in the
$(R,P_R)$-subspace. The corresponding tangential variational subsystem $(\dot\xi_R, \dot\xi_{P_R})$ can be solved explicitly.
Indeed, eliminating $\xi_{P_R}$ yields
\begin{align*}
	\ddot\xi_R
	-
	\frac{2\alpha}{\mathcal{M}}\,\xi_R
	=
	0,
\end{align*}
whose general solution reads
\begin{align*}
	\xi_R(\tau)
	=
	C_1\,\mathrm{e}^{\omega_0 \tau}
	+
	C_2\,\mathrm{e}^{-\omega_0 \tau},
	\qquad \xi_{P_R}(\tau)=\mathcal{M}\dot\xi_R,\qquad
	\omega_0=\sqrt{\frac{2\alpha}{\mathcal{M}}},
\end{align*}
Hence, the tangential variational equations are explicitly integrable and do not
produce any obstruction to integrability.
Accordingly, by the Morales--Ramis theorem, it is sufficient to analyze the
normal variational equation along the particular solution
$\vGamma(\tau)$, namely
\begin{align*}
	\dot\xi_\Theta
	=
	\frac{1}{D(R)}\,\xi_{P_\Theta},
	\qquad
	\dot\xi_{P_\Theta}
	=
	-\,Q(R)\,\xi_\Theta.
\end{align*}
Eliminating $\xi_{P_\Theta}$ and using the identity $D'(R)=2Q(R)$, the normal
variational equation (NVE) can be written in the compact form
\begin{align}
	\label{eq:NVE_logD}
	\ddot{\xi}_\Theta
	+
	(\log D)' \left(\dot R\,\dot{\xi}_\Theta + \frac12\,\xi_\Theta\right)
	=
	0,\qquad \text{where}\qquad ('\equiv \rmd/\rmd R).
\end{align}

In order to extract further information on the differential Galois group of
the above NVE, it is convenient to perform a change of the independent variable.
This allows us to transform~\eqref{eq:NVE_logD} into a linear differential
equation with rational coefficients, for which the analysis of the differential
Galois group, and in particular of its identity component, is in general more
tractable.

\subsection{Differential Galois analysis}
Assume that system~\eqref{eq:Ham_sys} is Liouville integrable in the class
of meromorphic first integrals. Then, by the Morales--Ramis theorem, the
identity component of the differential Galois group of the normal
variational equation~\eqref{eq:NVE_logD} must be Abelian.
Consequently, it is sufficient to show that this identity component is
non-Abelian.

To this end, we transform the normal variational equation into a linear
equation with rational coefficients, which allows us to apply the
Kovacic algorithm~\cite{Kovacic:86::}, which classifies Liouvillian solutions of
second-order linear differential equations with rational coefficients.

For technical reasons, the subsequent analysis is split into two distinct cases.
\subsubsection{The massless-string case $\alpha=0$}

In this scenario, corresponding to the massless-string limit, the normal
variational equation~\eqref{eq:NVE_logD} simplifies considerably. Indeed, for
$\alpha=0$ one has
$D(R)=R^{2}$ and $(\log D)'=2/R$.
Substituting these expressions into~\eqref{eq:NVE_logD}, we obtain
\begin{align}
	\label{eq:NVE_alpha0_expanded}
	\ddot{\xi}_\Theta
	+
	\frac{2\dot R}{R}\,\dot{\xi}_\Theta
	+
	\frac{1}{R}\,\xi_\Theta
	=
	0.
\end{align}
Next, we introduce the change of the independent variable
\begin{align}
	\label{eq:change}
	\tau\longrightarrow z=\frac{\mu-1}{E}R(\tau),
	\qquad E\neq0,\qquad \dot{\xi}_\Theta=\dot z\,\xi_\Theta',
	\qquad
	\ddot{\xi}_\Theta=\ddot z\,\xi_\Theta'
	+\dot z^{\,2}\xi_\Theta'',\qquad '\equiv (\rmd/\rmd z).
\end{align}
For $\alpha=0$, the reduced radial motion on $\mathcal N$, defined by~\eqref{eq:ee}, satisfies
\[
	\dot R^{\,2}
	=
	\frac{2\bigl(E-(\mu-1)R\bigr)}{\mathcal M},
	\qquad
	\ddot R
	=
	-\frac{\mu-1}{\mathcal M},\quad \implies\quad \dot z^{\,2}
	=
	\frac{2(\mu-1)^2}{E\mathcal M}(1-z),
	\qquad
	\ddot z
	=
	-\frac{(\mu-1)^2}{E\mathcal M}.
\]
Substituting the above the relations into~\eqref{eq:NVE_alpha0_expanded}, we obtain the equation with rational coefficients.
\begin{align}
	\label{eq:NVE_alpha0_z_compact}
	\xi_\Theta''
	+
	\frac{A_1(z)}{A_2(z)}\,\xi_\Theta'
	+
	\frac{A_3}{A_2(z)}\,\xi_\Theta
	=
	0,\quad\text{where}\quad A_1(z)=4-5z,
	\qquad
	A_2(z)=2z(1-z),
	\qquad
	A_3=\frac{1+\mu}{\mu-1}.
\end{align}

As first shown by Morales~\cite{Morales:01::}, equation~\eqref{eq:NVE_alpha0_z_compact} reduces to the Gauss hypergeometric equation. Its differential Galois analysis implies that the identity component of
the corresponding differential Galois group is Abelian only for
$\mu=3$. Consequently, the classical swinging Atwood machine is Liouville integrable exclusively for this exceptional value of the mass ratio.
\subsubsection{The massive-string case $\alpha\neq0$}

We now consider the genuinely massive--string case, i.e.\ $\alpha\neq 0$.
In order to rationalize the coefficients of~\eqref{eq:NVE_logD}, we
introduce the new independent variable
\begin{equation}
	\label{eq:change_alpha}
	\tau\longrightarrow
	z
	=
	-\frac{\alpha}{3}\,R(\tau),
	\qquad
	\alpha\neq 0.
\end{equation}
Using the energy first integral~\eqref{eq:ee} with the relations~\eqref{eq:rel}, we obtain
\begin{align*}
	\dot z^{\,2}
	=
	\frac{2\alpha}{9\mathcal M}
	\left(
	\alpha E
	+
	3\bigl(\mu-1+\alpha\eta\bigr)z + 9z^2
	\right),\qquad \text{and}\qquad 	\ddot z
	=
	\frac{\alpha}{3\mathcal M}
	\left(
	\mu-1+\alpha\eta+6z
	\right).
\end{align*}

Substituting the above  into
\eqref{eq:NVE_logD} and dividing the resulting equation by $\dot z^{\,2}$,
we transform the normal variational equation~\eqref{eq:NVE_logD} into the rational form
\begin{align}
	\label{eq:NVE_alpha_z_final}
	\xi_\Theta''
	+
	\frac{B_1(z)}{B_2(z)}\,\xi_\Theta'
	+
	\frac{B_3(z)}{B_2(z)}\,\xi_\Theta
	=
	0,
\end{align}
where the common denominator is
\begin{align*}
	B_2(z)
	 & =
	z(z-1)\,P(z),
	\qquad
	P(z):=z^{2}+b z+c,\quad \text{with}\quad  	b=\frac{\mu-1+\alpha\eta}{3},
	\qquad
	c=\frac{\alpha E}{9}.
\end{align*}
The numerator of the first-derivative term reads
\begin{align*}
	B_1(z)
	 & =
	a_3 z^{3}+a_2 z^{2}+a_1 z+a_0,
\end{align*}
where the coefficients are explicitly given by
\begin{align*}
	a_3  =4,                                                \quad
	a_2  =\frac{7\mu+7\alpha\eta-25}{6},                      \quad
	a_1  =\frac{5(1-\mu-\alpha\eta)}{6}+\frac{\alpha E}{3},   \quad
	a_0  =-\frac{2\alpha E}{9}.
\end{align*}
Finally, the zeroth-order term is
\begin{align*}
	B_3(z)
	=
	\frac{\mathcal M}{12}\,\left(2-3z\right),
	\qquad
	\mathcal M=1+\mu+\alpha.
\end{align*}

For the further effective analysis of the rationalized variational
equation~\eqref{eq:NVE_alpha_z_final}, we perform an additional change
of the dependent variable
\begin{equation}
	\label{eq:change1}
	\xi_\Theta(z)
	=
	y(z)\,
	\exp\!\left[
		-\frac12
		\int^z \frac{B_1(s)}{B_2(s)}\,\rmd s
		\right],
\end{equation}
which transforms equation~\eqref{eq:NVE_alpha_z_final} into its reduced
(second-order normal) form
\begin{equation}
	\label{eq:rr}
	y'' = r(z)\, y\,\qquad \text{where}\qquad r(z)=
	\frac{
		2 B_1'\,B_2
		-
		2B_1\,B_2'
		+
		B_1^2
		-
		4 B_2\,B_3
	}{4\,B_2^2}.
\end{equation}

Although the change of variable~\eqref{eq:change_alpha} transforming the variational equation~\eqref{eq:NVE_logD} into the rational form \eqref{eq:NVE_alpha_z_final}, together with its subsequent reduction to the normal form~\eqref{eq:rr}, may alter the differential Galois group
of the variational equations, these transformations do not affect its identity component $\mathcal G^0$. Therefore, in order to prove the non-integrability of the original Hamiltonian system, it is sufficient to show that the identity component of the differential Galois group of the reduced equation~\eqref{eq:rr} is not Abelian.

The differential Galois group $\mathcal G$ of equation~\eqref{eq:rr} is an algebraic subgroup of
$\operatorname{SL}(2,\mathbb C)$. We now recall a classical lemma that classifies all possible types of the group $\mathcal G$ and the corresponding structures of solutions of the reduced equation.

Following Kovacic’s
approach~\cite{Kovacic:86::}, we state as follows.

\begin{lemma}[Kovacic]
	\label{thm:kovacic_a}
	Let $\mathcal G$ be the differential Galois group of
	\begin{equation}
		\label{eq:kov_reduced}
		y'' = r(z)\,y,
		\qquad r(z)\in \mathbb{C}(z).
	\end{equation}
	Then exactly one of the following cases occurs:

	\begin{enumerate}
		\item[] (Case~1)\, $\mathcal G$ is conjugate to a subgroup of the triangular group and
		      admits a Liouvillian solution
		      $y=P(z)\exp(\int\xi dz)$,
		      where $P\in\mathbb C[z]$ and $\xi\in\mathbb C(z)$.

		\item[] (Case~2)\,  $\mathcal G$ is conjugate to a subgroup of the infinite dihedral group
		      $\mathcal D^\dagger$, and admits a Liouvillian solution
		      $y=\exp(\int\xi dz)$,
		      where $\xi$ is algebraic of degree $2$.

		\item[](Case~3)\, $\mathcal G$ is finite, so all solutions are algebraic.

		\item[](Case~4)\, $\mathcal G=\operatorname{SL}(2,\mathbb C)$, and no Liouvillian solution exists.
	\end{enumerate}
\end{lemma}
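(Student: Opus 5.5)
The plan is to reduce the statement to the classification of algebraic subgroups of $\operatorname{SL}(2,\mathbb C)$ and then translate each group-theoretic case into the stated form of Liouvillian solutions, following Kovacic~\cite{Kovacic:86::} and~\cite{Put:99::}.

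\textbf{Step 1: $\mathcal G\subset\operatorname{SL}(2,\mathbb C)$.} Let $K\supset\mathbb C(z)$ be the Picard--Vessiot extension of~\eqref{eq:kov_reduced} and let $y_1,y_2$ be a fundamental system. Since~\eqref{eq:kov_reduced} has no first-derivative term, the Wronskian $W=y_1y_2'-y_1'y_2$ satisfies $W'=0$, so $W\in\mathbb C^{*}$. Every $\sigma\in\mathcal G$ fixes $W$, hence $\det\sigma=1$. Therefore $\mathcal G$ is a Zariski-closed subgroup of $\operatorname{SL}(2,\mathbb C)$.

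\textbf{Step 2: group-theoretic classification.} I would then invoke the classical dichotomy for the natural action of $\mathcal G$ on $V=\mathbb C^2$, taking the cases in order so that exactly one applies.
\begin{enumerate}
\item[(a)] $\mathcal G$ is reducible, i.e.\ it fixes a line. Then it is conjugate to a subgroup of the triangular group.
\item[(b)] $\mathcal G$ is irreducible but imprimitive, i.e.\ it permutes a pair of lines. Then it is conjugate to a subgroup of $\mathcal D^\dagger$, the group of diagonal and antidiagonal matrices, and is not triangularizable.
\item[(c)] $\mathcal G$ is primitive. Then either $\mathcal G^0$ is trivial, so $\mathcal G$ is finite (the tetrahedral, octahedral or icosahedral lifts), or $\mathcal G^0\neq\{e\}$.
\end{enumerate}
In the last subcase I would use the fact that connected algebraic subgroups of $\operatorname{SL}(2,\mathbb C)$ are $\{e\}$, $\mathbb G_a$, $\mathbb G_m$, the Borel subgroup, or $\operatorname{SL}(2,\mathbb C)$ itself. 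Each proper nontrivial one fixes a line or a pair of lines, and $\mathcal G$, which normalizes $\mathcal G^0$, permutes these. That would contradict primitivity, so $\mathcal G^0=\operatorname{SL}(2,\mathbb C)$ and hence $\mathcal G=\operatorname{SL}(2,\mathbb C)$.

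\textbf{Step 3: translation to solutions.}
\begin{enumerate}
\item[(a)] A $\mathcal G$-stable line $\mathbb C y$ gives $\sigma(y)=c_\sigma y$. Hence $\xi=y'/y$ is fixed by $\mathcal G$, so $\xi\in\mathbb C(z)$, and $y=\exp\int\xi$. Pulling a polynomial factor out of $\exp\int\xi$ gives the form $P\exp\int\xi$ stated in Case~1.
\item[(b)] The two permuted lines $\mathbb C y_1,\mathbb C y_2$ give $\xi_i=y_i'/y_i$, which are permuted by $\mathcal G$. Their symmetric functions are therefore rational, so $\xi_1$ is algebraic of degree at most $2$, with degree exactly $2$ since otherwise we would be in Case~1.
\item[(c)] If $\mathcal G$ is finite, $K$ is a finite Galois extension of $\mathbb C(z)$, so all solutions are algebraic.
\item[(d)] For Case~4, I would use the converse: a Liouvillian solution forces $\mathcal G^0$ to be solvable. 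Since $\operatorname{SL}(2,\mathbb C)$ is not solvable, no Liouvillian solution exists.
\end{enumerate}

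\textbf{Main obstacle.} The delicate part is the primitive case of Step~2, namely excluding infinite primitive proper subgroups. I would handle it via the list of connected subgroups and the normalizer argument above. The finite primitive groups need no classification here, since only finiteness is asserted.

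Exclusivity of the four cases follows from the ordering: triangular, then non-triangular imprimitive, then finite primitive, then the remainder. Since this lemma is a standard result, in the paper I would present it as a recalled statement with this sketch, citing~\cite{Kovacic:86::,Put:99::} for the classification details.
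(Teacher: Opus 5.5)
Your proposal is correct and is the standard argument behind Kovacic's theorem: the constant Wronskian gives $\mathcal G\subset\operatorname{SL}(2,\mathbb C)$, the reducible/imprimitive/primitive trichotomy together with the normalizer argument forces $\mathcal G^0=\operatorname{SL}(2,\mathbb C)$ in the infinite primitive case, and the cases are translated into solutions via $\xi=y'/y$; the paper gives no proof and simply recalls the lemma from Kovacic's original work. Your point that ``exactly one'' holds only under the ordering convention (Case~2 read as ``not Case~1'', Case~3 read as ``neither Case~1 nor Case~2'') is a worthwhile precision that the paper's formulation leaves implicit.
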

\begin{remark}
	Let us write
	\[
		r(z)=\frac{p(z)}{q(z)},
		\qquad
		p(z),q(z)\in\mathbb C[z],\qquad \text{with}\qquad \gcd(p,q)=1.
	\]
	The zeros of $q$ are precisely the finite poles of $r(z)$.
	Denote by
	\[
		\Sigma=\Sigma'\cup\{\infty\},
		\qquad
		\Sigma'=\{c\in\mathbb C \mid q(c)=0\},
	\]
	the set of all singular points of equation~\eqref{eq:kov_reduced}.
	For $c\in\Sigma'$, the order $\ord(c)$ is the multiplicity of $c$ as a
	zero of $q$, while the order at infinity is  $
		\ord(\infty)=\deg q-\deg p.
	$
\end{remark}
\begin{lemma}[Kovacic]
	\label{thm:kovacic_b}
	The following conditions are necessary for the corresponding cases of
	Lemma~\ref{thm:kovacic_a}.

	\begin{enumerate}
		\item[]
		      (Case~1)\, Every pole $c\in\Sigma'$ has even order or order $1$, and
		      $\ord(\infty)$ is even or satisfies $\ord(\infty)>2$.

		\item[]
		      (Case~2)\, The set $\Sigma'$ contains at least one pole $c$ with
		      $\ord(c)=2$ or $\ord(c)>2$ odd.

		\item[]
		      (Case~3)\, One has $\ord(c)\le2$ for all poles and
		      $\ord(\infty)\ge2$. Moreover, if
		      $r(z)=\sum_i a_i/(z-c_i)^2+\sum_j b_j/(z-d_j)$,
		      then $\sqrt{1+4a_i}\in\mathbb Q$ for every $i$,
		      $\sum_j b_j=0$, and, setting
		      $G=\sum_i a_i+\sum_j b_jd_j$,
		      one has $\sqrt{1+4G}\in\mathbb Q$.
	\end{enumerate}
\end{lemma}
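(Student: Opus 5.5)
The plan is to prove all three necessary conditions by the same local method. A Liouvillian structure of the solutions forces a specific local form at every singular point of $y''=r(z)y$, and that local form restricts the orders of the poles of $r$.

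\textbf{Case 1.} Since $\mathcal G$ is triangular, there is a solution $y=\exp(\int\omega\,\rmd z)$ with $\omega\in\mathbb C(z)$ satisfying the Riccati equation
\[
\omega'+\omega^2=r.
\]
Fix $c\in\Sigma'$ and compare Laurent expansions at $c$.
\begin{itemize}
\item If $\omega$ has a pole of order $k\ge2$ at $c$, then $\omega^2$ dominates. Hence $\ord(c)=2k$ is even.
\item If $\omega$ has a simple pole with residue $a$, the leading coefficient of $r$ is $a^2-a$. Then $\ord(c)=2$, or, when $a\in\{0,1\}$, $\ord(c)\le1$.
\item If $\omega$ is regular at $c$, then $c$ is not a pole of $r$.
\end{itemize}
So every pole has order $1$ or even order. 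At infinity, expand in $1/z$.
\begin{itemize}
\item If $\omega\sim z^{k}$ with $k\ge0$, then $r\sim z^{2k}$ and $\ord(\infty)=-2k$ is even.
\item If $\omega=O(1/z)$, then $\omega^2$ and $\omega'$ are both $O(z^{-2})$. Thus $\ord(\infty)\ge2$, so either $\ord(\infty)=2$ (even) or $\ord(\infty)>2$.
\end{itemize}

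\textbf{Case 3.} If $\mathcal G$ is finite, all solutions are algebraic, so every singular point is a regular singular point with rational exponents and no logarithms. At an irregular singularity the formal solutions contain essential exponential factors $\exp(\int\sqrt r)$, which are not algebraic. Regularity gives $\ord(c)\le2$ at every finite pole and $\ord(\infty)\ge2$. Therefore $r=\sum_i a_i/(z-c_i)^2+\sum_j b_j/(z-d_j)$. The indicial equation at $c_i$ has roots $\tfrac12\bigl(1\pm\sqrt{1+4a_i}\bigr)$, which must be rational. Expanding at infinity, $r=(\sum_j b_j)z^{-1}+G z^{-2}+O(z^{-3})$ with $G=\sum_i a_i+\sum_j b_jd_j$. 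The condition $\ord(\infty)\ge2$ kills the $z^{-1}$ term, so $\sum_j b_j=0$. Rationality of the exponents at infinity gives $\sqrt{1+4G}\in\mathbb Q$.

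\textbf{Case 2.} This is the step I expect to be the main obstacle. Here $\mathcal G\subset\mathcal D^\dagger$ is not triangular, and the logarithmic derivatives $\omega_\pm$ of the two distinguished solutions are conjugate over a quadratic extension $\mathbb C(z)(\omega)$, with $\omega_+\ne\omega_-$ and neither rational. Since $\mathbb P^1$ has no nontrivial connected unramified coverings (Riemann--Hurwitz), this quadratic extension must ramify at at least two points of $\mathbb P^1$.

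I would then show by local analysis that the extension is unramified at every finite $c$ that is either a regular point of $r$, a simple pole, or a pole of even order $\ge4$.
\begin{itemize}
\item At a pole of order $2\nu\ge4$, the formal solutions are $\exp\bigl(\pm\int\sqrt r\bigr)\cdot(\text{series})$ with $\sqrt r$ single-valued meromorphic, so each Riccati solution is a Laurent series in $z-c$.
\item At a simple pole the exponents are $0$ and $1$, and the local Riccati solutions are again meromorphic.
\end{itemize}
The exponents differ by an integer at a simple pole, so I would verify separately that no logarithm can spoil unramifiedness of $\omega_\pm$. Doing this cleanly is where the work lies. If this fails, I would fall back on Kovacic's original device: the rational function $\theta=\omega_++\omega_-$ satisfies
\[
\theta''+3\theta\theta'+\theta^3=4r\theta+2r',
\]
and one checks pole-by-pole that, without a pole of order $2$ or odd order $>2$, the local data force $\omega_\pm$ to be rational, which contradicts non-triangularity.

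Granting the local step, the ramification points lie in $\{\infty\}\cup\{c:\ord(c)=2\text{ or }\ord(c)>2\text{ odd}\}$. Since there must be at least two of them, $\Sigma'$ contains at least one pole with $\ord(c)=2$ or with $\ord(c)>2$ odd.
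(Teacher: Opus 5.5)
The paper gives no proof of this lemma; it is quoted from Kovacic. Your argument therefore has to stand on its own.

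Cases~1 and~3 are correct as you present them:
\begin{itemize}
\item Case~1 follows from dominant balance in $\omega'+\omega^2=r$ with $\omega\in\mathbb C(z)$.
\item Case~3 follows from Fuchs' criterion together with the indicial equations $s(s-1)=a_i$ at $c_i$ and $s^2+s-G=0$ at infinity.
\end{itemize}
The global skeleton of Case~2 is also right. The functions $\omega_\pm$ generate a genuine quadratic extension of $\mathbb C(z)$. Riemann--Hurwitz forces $2g+2\ge 2$ ramification points, so at least one is finite. It then suffices to show that ordinary points, simple poles and poles of even order $\ge 4$ are unramified.

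\textbf{The simple-pole step fails as written.} You claim both local Riccati solutions are meromorphic there; this is false. Take $r=b/(z-c)+O(1)$ with $b\neq 0$. The coefficient of $(z-c)^{-1}$ in $y''=ry$ gives $b\,y(c)=0$, so every holomorphic solution vanishes at $c$. A second solution is then $h+b\,y_1\log(z-c)$, with $h(c)=1$ and $y_1=(z-c)+\dots$. So a logarithm is \emph{always} present, and your planned check that ``no logarithm spoils unramifiedness'' cannot succeed.

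The step is closed by a different idea: a simple pole cannot occur in Case~2 at all. Either of the following shows this.
\begin{itemize}
\item The logarithm makes the local monodromy at $c$ a nontrivial unipotent element of $\mathcal G$. Every element of $\mathcal D^\dagger$ is semisimple: diagonal ones trivially, and antidiagonal ones have characteristic polynomial $\lambda^2+1$.
\item In your Puiseux language, the only Puiseux-series solution of $\omega'+\omega^2=r$ at a simple pole is $1/(z-c)+b/2+\dots$. Hence $\omega_+$ and $\omega_-$ would have the same expansion at a place over $c$, forcing $\omega_+=\omega_-$.
\end{itemize}

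\textbf{The even-order step is correct but needs rephrasing.} Argue with the convergent Puiseux expansions of the algebraic functions $\omega_\pm$, not with the generally divergent formal solutions $\exp(\pm\int\sqrt r)(\dots)$. Let $r=a(z-c)^{-2\nu}+\dots$ with $\nu\ge 2$.
\begin{itemize}
\item Dominant balance forces the expansion to start with $\pm\sqrt a\,(z-c)^{-\nu}$.
\item In the equation at exponent $q-\nu$, the coefficient of $w_q$ is $\pm 2\sqrt a\neq 0$.
\item All other contributions involve strictly smaller exponents; the $\omega'$ term uses $\nu\ge 2$ here.
\item So the first fractional exponent has zero coefficient, and the expansion lies in $\mathbb C((z-c))$.
\end{itemize}
Such an expansion is fixed by the local deck transformation, so ramification at $c$ would again give $\omega_+=\omega_-$.

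With these two repairs your Case~2 argument is complete, and the $\theta$-equation fallback is unnecessary.
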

We now determine the singularity structure of equation~\eqref{eq:rr}, which provides the information required for the application of the Kovacic algorithm.

Equation~\eqref{eq:rr} is a linear differential equation whose finite singular points coincide with the zeros of the polynomial $B_2(z)$,
namely
\begin{align*}
	z_0=0,\qquad z_1=1,\qquad z_\star=z_\pm,
\end{align*}
where $z_\pm$ are the roots of $P(z)$, which are generically distinct,
that is, whenever the discriminant $\Delta=b^2-4c\neq0$.
The degenerate case $\Delta=0$ corresponds to the exceptional energy
value
$
	E_0=(\mu+\alpha\eta-1)^2/4\alpha,
$
which is precisely the stationary energy of the Hamiltonian
system~\eqref{eq:Ham_sys} at the equilibrium point
\eqref{eq:equilibrium_full}.
If an additional first integral existed, it would be independent of the
chosen energy level and would therefore exist for all generic values of
the energy. Consequently, to exclude the existence of such an integral,
it is sufficient to consider the generic case
$
	E\neq E_0,
$
for which the roots of the polynomial $P(z)$ are distinct and the
singular points of the reduced equation remain simple.
Accordingly, throughout the remainder of the proof we assume
$E\neq E_0$.

For the generic case $E\neq E_0$, the finite singularities of the reduced
equation~\eqref{eq:rr} coincide with the zeros of $B_2(z)=z(z-1)P(z)$.
In particular, $z=0$ is a simple pole of the coefficient $r(z)$, i.e.\
$\ord(0)=1$, whereas $z=1$ and $z=z_\pm$ are poles of order two,
$
	\ord(1)=\ord(z_\pm)=2.
$
Moreover, the point at infinity is a regular singular point of degree $\ord(\infty)=2$.
Therefore, equation~\eqref{eq:rr} is Fuchsian with the set of regular
singular points
\begin{equation}
	\label{eq:sing}
	\Sigma=\{0,1,z_+,z_-,\infty\}.
\end{equation}

Since the reduced equation~\eqref{eq:rr} has double poles at
$z=1$, $z=z_\pm$ and $z=\infty$, the principal parts of the corresponding
Laurent expansions at the singular points $c\in\{1,z_+,z_-\}$ and at
infinity take the form
\[
	r(z)=\frac{a_c}{(z-c)^2}
	+\mathcal O\!\left(\frac{1}{z-c}\right),
	\quad
	r(z)=\frac{a_\infty}{z^2}
	+\mathcal O\!\left(\frac{1}{z^3}\right),
\]
respectively.
In the present case, one finds
\[
	a_1=-\frac14,
	\qquad
	a_{z_\pm}=-\frac{3}{16},
	\qquad
	a_\infty=2.
\]  Thus,   the respective differences of exponents at these singularities are $	\Delta_c=\sqrt{1+4a_c},$
which yields
\begin{equation}
	\label{eq:exp}
	\Delta_1=0,
	\qquad
	\Delta_{z_+}=\Delta_{z_-}=\frac12,
	\qquad
	\Delta_\infty=3.
\end{equation}
Now we prove the following lemma.

\begin{lemma}
	Let us assume that the string is massive with $\alpha\neq 0$, then the differential Galois group of the reduced equation~\eqref{eq:rr} is $\operatorname{SL}(2,\mathbb{C})$.
\end{lemma}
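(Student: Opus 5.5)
The plan is to eliminate Cases~1, 2 and 3 of Lemma~\ref{thm:kovacic_a} one at a time, using the singularity data~\eqref{eq:sing} and the exponent differences~\eqref{eq:exp}. Once all three are excluded, Case~4 is forced and $\mathcal G=\operatorname{SL}(2,\mathbb C)$.

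\textbf{Case~3.} This one falls to the necessary conditions of Lemma~\ref{thm:kovacic_b}. The orders of the poles are $\ord(0)=1$, $\ord(1)=\ord(z_\pm)=2$ and $\ord(\infty)=2$, so the order conditions hold and give no obstruction. However, $\Delta_1=0$ means that $z=1$ is a logarithmic point, since the exponents coincide there. We will therefore check whether the local monodromy at $z=1$ is unipotent and nontrivial. If it is, the group contains an element of infinite order and cannot be finite. To confirm this, we will compute the Frobenius solutions at $z=1$ and verify that a logarithm actually appears; equal exponents already force this for a Fuchsian point, because the second solution always contains $\log(z-1)$. Case~3 is therefore excluded.

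\textbf{Case~2.} We run Kovacic's algorithm. At $c=1$ the set is $E_1=\{2\}$. At $c=z_\pm$ it is $E_{z_\pm}=\{2,\,2\pm 2\cdot\tfrac12\}=\{1,2,3\}$. At the simple pole $c=0$ it is $E_0=\{4\}$. At infinity it is $E_\infty=\{2,\,2\pm 2\cdot 3\}=\{-4,2,8\}$. A family $e$ is admissible only if
\[
d=\tfrac12\Bigl(e_\infty-\sum_{c\in\Sigma'}e_c\Bigr)
\]
is a non-negative integer. The finite contribution is at least $4+2+1+1=8$, so $e_\infty=8$ is required, with $e_{z_+}=e_{z_-}=1$. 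This gives $d=0$. The main work is then to build $\theta=\tfrac12\sum e_c/(z-c)$ and test whether the degree-zero polynomial equation $P''+3\theta P'+(3\theta'+3\theta^2... )$ of Kovacic's algorithm has a nonzero constant solution. Since $d=0$, this reduces to an algebraic identity in $\mu,\alpha,\eta,E$. We will show that it fails for generic $E$, and since $E$ may be chosen freely, it fails as an identity.

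\textbf{Case~1.} The exponents at $z=1$ are $\tfrac12$ (double), at $z_\pm$ they are $\tfrac12\pm\tfrac14$, at $z=0$ they are $\{0,1\}$, and at $\infty$ they are $-\tfrac12\pm\tfrac32$. A Liouvillian solution of the form $y=P\exp\int\omega$ would need exponents summing to $-\deg P\le 0$ after combining the exponent at infinity. Enumerating the choices gives the candidate sums
\[
\tfrac12+\bigl(\tfrac12\pm\tfrac14\bigr)+\bigl(\tfrac12\pm\tfrac14\bigr)+\{0,1\}+\bigl(-\tfrac12\pm\tfrac32\bigr).
\]
We then read off which choices give a non-negative integer degree $n$. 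The quarter-integer terms force the signs at $z_\pm$ to be opposite. The remaining candidates need $n\in\{0,1\}$, and for each of these we will substitute a polynomial of that degree into the Riccati-induced equation and show that it has no solution for generic $E$.

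The main obstacle is the explicit verification in Cases~1 and~2, namely that the small linear systems for $P$ are inconsistent. These systems involve the roots $z_\pm$, which depend on $E$. To handle this, we will work with symmetric functions of $z_\pm$ through $b$ and $c$, and exploit the fact that $c=\alpha E/9$ varies freely. This lets us reduce each system to a polynomial identity in $E$ whose leading coefficient is a nonzero multiple of $\alpha$, and that coefficient cannot vanish for $\alpha\neq0$.
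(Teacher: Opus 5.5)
Your outline matches the paper's: use the local data at $z=1$ to discard the finite and dihedral cases, then run Case~1 of Kovacic's algorithm. However, the Case~1 step as written is wrong, and the decisive verification is missing.

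The parity claim is backwards. The exponent at $z=1$ is $\tfrac12$ and those at $z=0$ and $\infty$ are integers, so $e_{z_+}+e_{z_-}$ must be a half-integer, i.e.\ the signs at $z_\pm$ must be \emph{equal}. With opposite signs every one of your sums is a half-integer. Taken literally, your rule therefore leaves no candidate and ``excludes'' Case~1 vacuously, without ever testing the one candidate that matters. Your stated $n\in\{0,1\}$ is what equal signs give, so the text is internally inconsistent. Moreover, at a genuine simple pole of $r$ Kovacic allows only $e_0=1$, because the exponent-$0$ solution there carries a logarithm. For generic $E$ the only admissible family is therefore the paper's: $e_0=1$, $e_1=\tfrac12$, $e_{z_\pm}=\tfrac14$, $e_\infty=2$ (your $-2$ in the variable $1/z$), giving $d=0$ and $\omega=\frac1z+\frac1{2(z-1)}+\frac1{4(z-z_+)}+\frac1{4(z-z_-)}$.

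Checking this candidate is where the proof actually happens, and the mechanism you predict (an identity in $E$ whose leading coefficient is a nonzero multiple of $\alpha$) is not what you would find. Since $B_1/B_2=\frac2z+\frac1{z-1}+\frac{P'}{2P}$, undoing \eqref{eq:change1} gives $\xi_\Theta=e^{\int\omega}\exp\bigl[-\frac12\int B_1/B_2\bigr]=\text{const}$. Substituting a constant into \eqref{eq:NVE_alpha_z_final} leaves $B_3/B_2=\mathcal M(2-3z)/\bigl(12\,z(z-1)P(z)\bigr)$, which vanishes identically iff $\mathcal M=1+\mu+\alpha=0$. So the obstruction is independent of $E$. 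It rests on $\mathcal M\neq0$ (guaranteed by positivity, and needed anyway for $\mathcal H$ to be defined), not on $\alpha\neq0$. No symmetric-function bookkeeping in $z_\pm$ and no variation of $E$ is needed.

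Your explicit Case~2 computation is also unnecessary. The logarithm at $z=1$ that you use against Case~3 gives a non-semisimple monodromy element, while every element of $\mathcal D^\dagger$ is semisimple, so the same argument kills Case~2; this is how the paper proceeds. Incidentally, your Case~2 candidate $\theta$ is exactly $2\omega$, and it fails for the same reason, $\mathcal M\neq 0$.
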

\begin{proof}
	From~\eqref{eq:sing} and~\eqref{eq:exp} we know that the reduced equation has regular singularities with
	\[
		\ord(0)=1,
		\qquad
		\ord(1)=\ord(z_\pm)=\ord(\infty)=2.
	\]
	Moreover, the condition $\Delta_1=0$ excludes the finite and dihedral cases of the Kovacic algorithm (see, e.g.,~\cite{Maciejewski:02::,Stachowiak:15::}). Hence, the differential Galois group can only be either reducible (triangular) or equal to $\operatorname{SL}(2,\mathbb C)$. It therefore remains to exclude the reducible case by applying Case~1 of the Kovacic algorithm.

	For each double pole $c\in\{1,z_+,z_-,\infty\}$ we define
	\[
		\alpha_c^\pm
		=
		\frac12
		\pm
		\frac12\sqrt{1+4a_c}
		=
		\frac12
		\pm
		\frac12\,\Delta_c,
	\]
	while for the simple pole at $z=0$ we set
	$
		\alpha_0^\pm=1.
	$
	Accordingly, the associated sets are
	\[
		E_0=\{1,1\},
		\qquad
		E_1=\Bigl\{\frac12,\frac12\Bigr\},
		\qquad
		E_{z_+}=E_{z_-}=\Bigl\{\frac34,\frac14\Bigr\},
		\qquad
		E_\infty=\{2,-1\}.
	\]

	Next, we consider the Cartesian product
	$
		E
		=
		E_0\times E_1\times E_{z_+}\times E_{z_-}\times E_\infty
	$
	and restrict attention to those elements
	$
		e=(e_0,e_1,e_{z_+},e_{z_-},e_\infty)\in E
	$
	for which
	\[
		d(e)
		:=
		e_\infty-e_0-e_1-e_{z_+}-e_{z_-}
		\in\mathbb Z_{\ge0}.
	\]
	A straightforward inspection shows that the only admissible element is
	\[
		e_0=1,
		\qquad
		e_1=\frac12,
		\qquad
		e_{z_+}=e_{z_-}=\frac14,
		\qquad
		e_\infty=2,
	\]
	for which
	$
		d(e)=0.
	$

	We now proceed to the third step of the Kovacic algorithm. Since $d(e)=0$, the polynomial $P$ must be constant, and therefore $P\equiv1$. Consequently, a Liouvillian solution in Case~(i), if it exists, must be of the form
	\[
		y(z)
		=
		\exp\!\left(\int\omega(z)\,dz\right),\qquad \text{where}\qquad \omega(z)
		=
		\sum_{c\in\Sigma'}
		\frac{e_c}{z-c}
		=
		\frac1z
		+\frac1{2(z-1)}
		+\frac1{4(z-z_+)}
		+\frac1{4(z-z_-)}.
	\]

	By the Kovacic algorithm, such a solution exists only if $\omega$ satisfies the Riccati equation
	\begin{equation}
		\label{eq:kov_case1_id}
		\omega'(z)+\omega(z)^2=r(z).
	\end{equation}
	A direct substitution of the above expression for $\omega$ into~\eqref{eq:kov_case1_id}, followed by comparison with the rational function~\eqref{eq:rr}, shows that this identity can hold only if
	\[
		\mathcal M
		=
		1+\mu+\alpha
		=
		0.
	\]
	This contradicts the assumptions $\mu>0$ and $\alpha>0$. Hence, the Riccati equation admits no solution of the required form, and Case~1 of the Kovacic algorithm is excluded.

	Cases~2 and~3 have already been excluded by the local exponent
	analysis, while the above argument rules out Case~1. Consequently,
	Case~4 is the only remaining possibility. Therefore,
	$
		\mathcal G=\operatorname{SL}(2,\mathbb C).
	$
	Consequently, the normal variational equation admits no Liouvillian solutions. By the Morales--Ramis theorem, the Hamiltonian system cannot be Liouville integrable. This completes the proof.
\end{proof}

\section{Discussion and conclusions}

One of the long-standing questions in the theory of nonlinear Hamiltonian
systems concerns the borderline between integrable and non-integrable
dynamics. This problem becomes especially subtle in mechanical models
depending on parameters, where even small, physically natural modifications
may lead to a qualitative change of the global behavior. Although most
realistic systems are known to be non-integrable and chaotic, exceptional
integrable cases remain important, as they provide reference points for both
analytical and numerical studies.

Various types of pendulum systems form a natural testing ground for such
questions. The swinging Atwood machine is a particularly interesting example,
as it combines gravitational motion with geometric constraints in a relatively
simple Hamiltonian setting. It is well known that the classical model with a
massless string admits an exceptional integrable case for the mass ratio
$\mu=3$, even though numerical simulations indicate chaotic behavior for
generic parameter values. This naturally leads to the question of whether this
integrability survives once more realistic physical effects are taken into
account.

In this work we addressed this issue for the swinging Atwood machine with a
massive string. The presence of a nonzero linear mass density introduces a
configuration-dependent moment of inertia, which fundamentally changes the
structure of the kinetic energy and modifies the Hamiltonian formulation of the
system. Numerical experiments show that even very small deviations from the
massless-string limit, corresponding to $\alpha\ll1$, are sufficient to
destroy the regular dynamics associated with the integrable case $\mu=3$.
However, numerical evidence alone cannot provide a definitive answer
concerning Liouville integrability.

For this reason, we carried out a rigorous analytical study based on the
Morales--Ramis theory and differential Galois methods. By analyzing the normal
variational equations along explicit non-stationary radial solutions, we
showed that, for any nonzero value of the string mass parameter
$\alpha\neq0$, the system does not admit an additional first integral that is
meromorphic in the phase-space variables. The obstruction is of algebraic
nature and follows from the fact that the differential Galois group of the
normal variational equation is the full group
$\operatorname{SL}(2,\mathbb{C})$, whose identity component is non-Abelian.

The analytical results are complemented by an extensive numerical study of the
global dynamics. Using Poincar\'e sections, Lyapunov exponent maps,
bifurcation diagrams, and energy-constrained dynamical maps, we revealed a
rich coexistence of regular, chaotic, and terminating trajectories. In
particular, we demonstrated that the geometry of the Hill regions and
zero-velocity curves play a fundamental role in organizing the phase space
and governs the transitions between qualitatively different dynamical
regimes.

An important contribution of this work is the introduction and application of the  Lyapunov Refined Maps  methodology to the heavy Swinging Atwood Machine. By combining Lyapunov exponents with automatic periodic-orbit detection based on recurrence properties of the Poincaré map, this approach overcomes the intrinsic inability of standard Lyapunov diagrams to distinguish between periodic and quasi-periodic trajectories. Consequently, the proposed framework reveals resonance webs, families of periodic solutions, periodic windows embedded in chaotic layers, and high-order resonant structures hidden inside regular domains, providing a substantially more detailed description of the phase-space organization than classical numerical techniques.

Our results clearly demonstrate that the integrable case of the classical
swinging Atwood machine is structurally unstable. While the model with a
massless string is Liouville integrable for $\mu=3$, this property is
destroyed by any nonzero linear mass density of the string, no matter how
small. Therefore, the inclusion of string inertia leads to a genuine loss of
integrability.

From a dynamical point of view, this loss can be understood as a consequence
of the coupling between the radial and angular degrees of freedom induced by
the configuration-dependent effective inertia. This mechanism provides a
natural explanation for the chaotic behavior observed in numerical studies,
including parameter regimes close to the stationary energy level, and explains
the emergence of resonance structures and chaotic layers, visualized by the
Lyapunov Refined Maps.

Beyond the specific model considered here, the proposed methodology provides a
general framework for the investigation of multidimensional Hamiltonian
systems. The combination of LRM with the
Morales--Ramis differential Galois theory establishes a direct connection
between numerical exploration of global dynamics and rigorous analytical
obstructions to Liouville integrability. We expect that the proposed approach will be applicable to a broad class of nonlinear mechanical systems, extending well beyond pendulum-type models. The algorithmic formulation and detailed description of the proposed methodology are currently under development and will be presented in a separate publication.

In summary, the swinging Atwood machine with a massive string provides a clear example of how a physically realistic modification of an exceptional integrable Hamiltonian system destroys Liouville integrability while giving rise to a highly organized and remarkably rich dynamical structure. At the same time, the present work demonstrates that the combination of modern numerical diagnostics, embodied here by the Lyapunov Refined Maps, with rigorous analytical methods based on the Morales--Ramis theory constitutes a powerful and comprehensive framework for the study of global dynamics and integrability of nonlinear Hamiltonian systems.

\section*{Data and code availability}

The Wolfram Language implementation developed for this study, including the computation of Lyapunov exponents, the construction of Lyapunov Refined Maps,  and the visualization scripts used to
generate the figures, is publicly available in the  repository
at \url{https://doi.org/10.18150/7INU21}. The numerical data underlying the figures and other results presented in this article are available from the corresponding author upon reasonable request.

\section*{Acknowledgements}

For Open Access, the authors have applied a CC BY public copyright licence to any Author Accepted Manuscript (AAM) version arising from this submission.

The present work was initiated during a research internship of W.~S. at the Łódź University of Technology. A substantial part of the research was subsequently carried out during a research stay at the Department of Mathematics, Universitat Autònoma de Barcelona, hosted by Professor Jaume Llibre.

J.~B. is a doctoral student at the Doctoral School of Exact and
Technical Sciences, University of Zielona Góra.

\section*{Funding}

This research was supported by the National Science Centre, Poland (NCN), under Grant No.~2020/39/D/ST1\\ /01632, and by the Polish National Agency for Academic Exchange (NAWA) through the Bekker Programme, Fellowship No.~BPN/BEK/2025/1/00055.

\section*{Declarations}
\textbf{Compliance with ethical standards}\\
\textbf{Conflicts of interest} The authors declare no conflict of interest.\\

\end{document}